\def\input@path{{/Users/bgraham/Dropbox/Research/Causal_Inference/ProgramEvaluation/Average_Regression/Writing/}}
\providecommand{\tabularnewline}{\\}
\providecommand{\algorithmname}{Algorithm}
  \theoremstyle{plain}
  \newtheorem{assumption}{\protect\assumptionname}
  \theoremstyle{definition}
  \newtheorem{defn}{\protect\definitionname}
  \theoremstyle{plain}
  \newtheorem{lem}{\protect\lemmaname}
  \theoremstyle{plain}
  \newtheorem{prop}{\protect\propositionname}
\theoremstyle{plain}
\newtheorem{thm}{\protect\theoremname}
  \theoremstyle{plain}
  \newtheorem{cor}{\protect\corollaryname}
  \providecommand{\assumptionname}{Assumption}
  \providecommand{\definitionname}{Definition}
  \providecommand{\lemmaname}{Lemma}
  \providecommand{\propositionname}{Proposition}
\providecommand{\corollaryname}{Corollary}
\providecommand{\theoremname}{Theorem}
\begin{document}
\begin{singlespacing}

\title{Semiparametrically efficient estimation of the average linear regression
function}
\maketitle
\begin{center}
Bryan S. Graham$^{+}$ and Cristine Campos de Xavier Pinto$^{\#}$\footnote{{\footnotesize{}$^{+}$Department of Economics, University of California
- Berkeley, 530 Evans Hall \#3380, Berkeley, CA 94720-3888 and National
Bureau of Economic Research, }{\footnotesize{}\uline{e-mail:}}{\footnotesize{}
\href{http://bgraham@econ.berkeley.edu}{bgraham@econ.berkeley.edu},
}{\footnotesize{}\uline{web:}}{\footnotesize{} \url{http://bryangraham.github.io/econometrics/}.
}\\
{\footnotesize{}$^{\#}$Escola de Economia de Sao Paulo, FGV, Rua
Itapeva 474, sala 1010, CEP: 01332-000. }{\footnotesize{}\uline{e-mail:}}{\footnotesize{}
$\mathtt{cristinepinto@gmail.com}$. }{\footnotesize{}\uline{web:}}{\footnotesize{}
$\mathtt{http://sites.google.com/site/cristinepinto/}$.}\\
{\footnotesize{}We thank Guido Imbens, Pat Kline, Tony Strittmatter
and seminar participants at University College London, UC Berkeley
and University of St. Gallen for helpful discussion. Financial support
from NSF grant SES \#1357499 is gratefully acknowledged. The initial
draft of this paper was prepared in October of 2016. All the usual
disclaimers apply.}}
\par\end{center}
\begin{abstract}
Let $Y$ be an outcome of interest, $X$ a \emph{vector} of treatment
measures, and $W$ a vector of pre-treatment control variables. Here
$X$ may include (combinations of) continuous, discrete, and/or non-mutually
exclusive ``treatments''. Consider the linear regression of $Y$
onto $X$ in a subpopulation homogenous in $W=w$ (formally a conditional
linear predictor). Let $b_{0}\left(w\right)$ be the coefficient vector
on $X$ in this regression. We introduce a semiparametrically efficient
estimate of the average $\beta_{0}=\mathbb{E}\left[b_{0}\left(W\right)\right]$.
When $X$ is binary-valued (multi-valued) our procedure recovers the
(a vector of) average treatment effect(s). When $X$ is continuously-valued,
or consists of multiple non-exclusive treatments, our estimand coincides
with the average partial effect (APE) of $X$ on $Y$ when the underlying
potential response function is linear in $X$, but otherwise heterogenous
across agents. When the potential response function takes a general
nonlinear/heterogenous form, and $X$ is continuously-valued, our
procedure recovers a weighted average of the gradient of this response
across individuals and values of $X$. We provide a simple, and semiparametrically
efficient, method of covariate adjustment for settings with complicated
treatment regimes. Our method generalizes familiar methods of covariate
adjustment used for program evaluation as well as methods of semiparametric
regression (e.g., the partially linear regression model).\smallskip{}

\uline{JEL Codes:} C14, C21, C31 

\uline{Keywords:}\textbf{ }Conditional Linear Predictor, Causal
Inference, Average Treatment Effect, Propensity Score, Semiparametric
Efficiency, Semiparametric Regression
\end{abstract}
\end{singlespacing}

\thispagestyle{empty} 

\pagebreak{}

\setcounter{page}{1}

\sloppy

Let $Y$ be a scalar-valued outcome of interest, $X$ a $K\times1$
vector of policy variables, and $W$ a $J\times1$ vector of additional
controls. For example $Y$ might equal hours worked, $X$ include
the real wage rate \emph{and} total unearned income ($K=2$), and
$W$ be a vector of demographic measures capturing heterogeneity in
preferences for work \citep[e.g.,][Section 4]{Pencavel_HLE86}. The
goal is to summarize how $Y$ \textendash{} labor supply \textendash{}
covaries with $X$ \textendash{} the wage rate and unearned income
\textendash{} ``holding the controls $W$ fixed''. In a second example,
$Y$ might be an end-of-year student mathematics achievement measure,
$X$ a vector containing (i) number of days absent from school, (ii)
class size and (iii) an indicator for whether the student received
supplemental tutoring. Here the vector $W$ might include beginning
of school year joint predictors of $Y$ and $X$ (e.g., prior mathematics
achievement, socioeconomic background, health indicators, and known
determinants of class size and tutoring assignment used by the school).
The goal is to summarize how math achievement covaries with attendance,
class size and supplemental tutoring conditional on $W$ \citep[cf.,][]{Gottfried_Kirksey_ER17}.

Following the prototype established by \citet{Yule_JRSS1899} over
one hundred years ago, social scientists typically report the coefficient
on $X$ in the (long) least squares fit of $Y$ onto a constant, $X$,
\emph{and} $W$ for this purpose. 

When $X$ is a scalar binary variable, the econometrician can choose
from \textendash{} in addition to least squares \textendash{} an ever
more elaborate menu of covariate adjustment methods (see \citet{Imbens_Rubin_CIBook15}
for a recent textbook introduction). Many of these methods extend
naturally to settings where $X$ is multi-valued \citep[e.g., ][]{Cattaneo_JOE10}.

When $X$ is continuously-valued, and/or consists of multiple distinct
policy variables ($K\geq2$), options are fewer \citep[cf.,][Chapter 21.6.3]{Wooldridge_EACSPDBook10}.
The partially linear regression (PLM) model
\begin{equation}
Y=X'\beta_{\mathrm{0}}+h_{0}\left(W\right)+U,\thinspace\thinspace\mathbb{E}\left[\left.U\right|W,X\right]=0,\label{eq: plm}
\end{equation}
represents one semiparametric generalization of (long) linear regression.
\citet{Chamberlain_WP86}, in an influential but never published paper,
introduced an estimator for $\beta_{0}$ in (\ref{eq: plm}) \citep[cf., ][]{Robinson_EM88}.
In later work he characterized its semiparametric efficiency bound
(SEB) \citep{Chamberlain_EM92}. 

Partially linear regression is widely, albeit heuristically, used
in empirical work. Typically researchers proceed by (i) choosing $W$
to be a rich vector of basis functions in the underlying controls
(e.g., a vector of polynomial or piecewise polynomial terms) and then
(ii) estimate $\beta_{0}$ by least squares. With discretely-valued
control variables a saturated specification for $h_{0}\left(W\right)$
is possible, at least when utilizing a very large dataset \citep[e.g., ][Section 2.3.1]{Angrist_Krueger_HLE99}.
A principled variant of this general approach is embodied in the E-Estimation
algorithm of \citet{Newey_JAE90} and \citet{Robins_Mark_Newey_BM92}. 

In this paper we propose a different approach to covariate adjustment.
Consider a subpopulation homogenous in $W=w$. Within this subpopulation
we compute the linear regression of $Y$ onto a constant and $X$
(formally a conditional linear predictor as in \citet{Wooldridge_JE99}).
Let $b_{0}\left(w\right)$ be the coefficient on $X$ in the conditional
linear regression for the subpopulation homogenous in $W=w$. We propose
a method for identifying and efficiently estimating the average regression
coefficient
\begin{equation}
\beta_{0}=\mathbb{E}\left[b_{0}\left(W\right)\right].\label{eq: average_coef}
\end{equation}
The average is over the marginal distribution of controls, $W$. 

In the absence of controls, the relationship between the linear predictor
slope coefficient and the gradient of the (possibly nonlinear) conditional
expectation function (CEF) of $Y$ given $X=x$ is well-understood
\citep[e.g., ][]{Goldberger_ACE91,Yitzhaki_JBES96}. In the presence
of controls, this relationship is rather more complicated \citep[cf.,][]{Angrist_EM98,Sloczynski_WP18}.
Our focus on averages of conditional linear predictor coefficients
allows for conditioning on $W$, while also preserving the interpretative
transparency of unconditional linear analyses. That is, $\beta_{\mathrm{0}}$,
as we demonstrate below, is easy to interpret.

When $X$ is binary-valued (multi-valued) $\beta_{0}$ coincides with
the (a vector of) average treatment effect(s); estimands familiar
from the program evaluation literature \citep[e.g.,][]{Hahn_EM98,Imbens_BM00}.
These estimands have causal interpretations under certain conditions.
Modestly extending the analysis of \citet{Wooldridge_CWP04}, we show
that this causal interpretation generalizes under a (i) heterogenous
random coefficients potential outcome structure and (ii) an unconfoundedness-type
assumption. These assumptions coincide with their program evaluation
counterparts when $X$ is binary- or multi-valued. Our semiparametric
model includes both the program evaluation model and the partially
linear regression model as special cases.

Our work is also connected to the varying coefficient model of \citet{Hastie_Tibshirani_JRSS93}.
\citet{Hastie_Tibshirani_JRSS93} focus on pointwise estimation of
$b_{0}\left(w\right)$, while we focus on (efficient) estimation of
the average $\beta_{0}=\mathbb{E}\left[b_{0}\left(W\right)\right]$. 

The relationship of our work with that of \citet{Wooldridge_CWP04}
is as follows.\footnote{The \citet{Wooldridge_CWP04} paper remains unpublished, but a textbook
treatment of the material in it can be found in Chapter 21.6.3 of
\citet{Wooldridge_EACSPDBook10}.} We both study the same functional of the joint distribution of $W$,
$X$ and $Y$ (see Equation (\ref{eq: generalized_ipw}) below). Relative
to \citet{Wooldridge_CWP04} we provide an average partial effect
interpretation of this estimand under (i) weaker assumptions when
maintaining a correlated random coefficient potential outcome structure
and (ii) a new weighted average partial effect interpretation under
a general potential response function structure. These are useful,
but relatively modest generalizations. More significantly we (i) provide
distribution theory for the estimator proposed by \citet{Wooldridge_CWP04},
(ii) characterize the semiparametric efficiency bound (SEB) for $\beta_{0}$,
and (iii) introduce a new locally efficient estimator. The procedure
proposed by \citet{Wooldridge_CWP04} is inefficient. 

Another feature of our estimator is computational simplicity. Let
$\hat{\mu}_{W}=\frac{1}{N}\sum_{i=1}^{N}W_{i}$ be the sample mean
of $W$. A common approach to modeling heterogeneous effects in applied
work is to compute the least squares fit of $Y$ onto a constant,
$W-\hat{\mu}_{W}$, $\left(W-\hat{\mu}_{W}\right)\otimes X$, and
$X$. As is well-known from textbook treatments on interaction terms
in linear regression analysis, centering the control variable vector,
$W$, about is mean in this way ensures that the coefficient on $X$
captures an average effect. This approach essentially coincides with
Oaxaca-Blinder type methods of covariate adjustment popular in labor
economics \citep[e.g., ][]{Kline_EL14}. One variant of our procedure
involves computing the exact same regression, but where $X$ is instead
instrumented with a particular function of its conditional distribution
given $W$ (i.e., of the ``generalized'' propensity score). Theorems
\ref{thm: Large-Sample} and \ref{thm:DoublyRobust} below show that
this small modification to a familiar estimation procedure delivers
considerable gains.

The next section introduces our average linear regression model. We
provide a statistical definition of $\beta_{\mathrm{0}}$ as well
as sets of assumptions under which it has a causal \textendash{} average
partial effect (APE) \textendash{} interpretation. Section \ref{sec: seb}
presents the semiparametric efficiency bound for $\beta_{0}$. Section
\ref{sec: estimation} studies the large sample properties of the
\citet{Wooldridge_CWP04} estimator. We also introduce our new estimator
and present its large sample properties. Finally, in Section \ref{sec: examples},
we connect our results with prior work on efficient estimation of
average treatments effects as well as the partially linear semiparametric
regression model. We end our paper with a small simulation study in
Section \ref{sec: Monte_Carlo}. All proofs are collected in the Appendix
or the supplemental materials.

\section{\label{sec: model}Average linear regression model}

We begin with a conventional sampling assumption.
\begin{assumption}
\textsc{(Random Sampling)}\label{ass: random_sampling} Let $\left\{ \left(W_{i}',X_{i}',Y_{i}\right)'\right\} _{i=1}^{\infty}$
be a sequence of independent and identically distributed random draws
from some population $F_{W,X,Y}$ with $\mathbb{E}\left[\left.Y^{2}\right|W=w\right]<\infty$
and $\mathbb{E}\left[\left.\left\Vert X\right\Vert ^{2}\right|W=w\right]<\infty$
for all $w\in\mathbb{W}.$
\end{assumption}
The finite moment restrictions included in Assumption \ref{ass: random_sampling}
ensure that a conditional linear predictor (CLP) is well-defined for
all $w\in\mathbb{W}$. 

Let 
\begin{equation}
e_{0}\left(w\right)=\mathbb{E}\left[\left.X\right|W=w\right]\label{eq: e(W)}
\end{equation}
 be the conditional mean of $X$ given $W=w$ and
\begin{equation}
v_{0}\left(w\right)=\mathbb{V}\left(\left.X\right|W=w\right)\label{eq: v(W)}
\end{equation}
the corresponding conditional variance. We also require that $X$
vary conditional on $W=w$.
\begin{assumption}
\textsc{(Overlap)}\label{ass: overlap} For all $w\in\mathbb{W}$
and any non-zero column vector $t$, $t'v_{0}\left(w\right)t\geq\kappa>0.$
\end{assumption}
Assumption \ref{ass: overlap} ensures that the CLP is uniquely defined.
In the absence of conditioning it is equivalent to linear independence
of the elements of $X$. When $X$ is binary $v_{0}\left(W\right)=e_{0}\left(W\right)\left(1-e_{0}\left(W\right)\right)$
with $e_{0}\left(W\right)=\Pr\left(\left.X=1\right|W\right)$ equal
to the propensity score; in this case Assumption \ref{ass: overlap}
coincides with the familiar strong overlap assumption from the program
evaluation literature. More generally Assumption \ref{ass: overlap}
implies that $X$ varies conditional on $W=w$ for \emph{all} $w\in\mathbb{W}.$

Under Assumptions \ref{ass: random_sampling} and \ref{ass: overlap}
the conditional linear predictor is well-defined for all $w\in\mathbb{W}$.
\citet[Section 4]{Wooldridge_JE99} provides a self-contained introduction
to conditional linear predictors. The following definition and lemma
is taken from \citet{Wooldridge_JE99}.
\begin{defn}
\textsc{(Conditional Linear Predictor)}\label{def: clp} The mean
squared error minimizing linear predictor of $Y$ given $X$ \emph{conditional}
on $W=w$, henceforth the conditional linear predictor (CLP), equals
\begin{equation}
\mathbb{E}^{*}\left[\left.Y\right|X;W=w\right]\overset{def}{\equiv}a_{0}\left(w\right)+X'b_{0}\left(w\right)\label{eq: clp}
\end{equation}
with
\begin{align}
a_{0}\left(w\right) & \overset{def}{\equiv}\mathbb{E}\left[\left.Y\right|W=w\right]-e_{0}\left(w\right)'b_{0}\left(w\right)\label{eq: clp_coef}\\
b_{0}\left(w\right) & \overset{def}{\equiv}v_{0}\left(w\right)^{-1}\mathbb{C}\left(\left.X,Y\right|W=w\right).\nonumber 
\end{align}
It is straightforward to show that the prediction error $U=Y-\mathbb{E}^{*}\left[\left.Y\right|X;W\right]$
is conditionally mean zero and conditionally uncorrelated with $X$.
This property of $\mathbb{E}^{*}\left[\left.Y\right|X;W\right]$ will
prove useful for what follows.
\end{defn}
\begin{lem}
\label{lem: Wooldridge}\citet[Lemma 4.1]{Wooldridge_JE99}. Let $U\overset{def}{\equiv}Y-a_{0}\left(W\right)-X'b_{0}\left(W\right)$,
then $\mathbb{E}\left[\left.U\right|W=w\right]=0$ and $\mathbb{E}\left[\left.XU\right|W=w\right]=0$
for all $w\in\mathbb{W}.$
\end{lem}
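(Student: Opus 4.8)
The plan is to recognize that the two displayed identities are precisely the conditional normal equations (first-order conditions) characterizing the CLP of Definition \ref{def: clp}, and to verify them by direct substitution of the closed-form expressions for $a_{0}(w)$ and $b_{0}(w)$ in (\ref{eq: clp_coef}). The only facts needed beyond the definitions are that, conditional on $W=w$, the coefficients $a_{0}(w)$ and $b_{0}(w)$ are nonrandom and hence factor outside conditional expectations, that $v_{0}(w)$ is invertible so $b_{0}(w)$ is well-defined (guaranteed by Assumption \ref{ass: overlap}), and that the relevant conditional first and second moments exist (supplied by Assumption \ref{ass: random_sampling}).

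For the first identity I would take the conditional expectation of $U=Y-a_{0}(W)-X'b_{0}(W)$ given $W=w$, and use linearity together with $\mathbb{E}[X\mid W=w]=e_{0}(w)$ to obtain
\[
\mathbb{E}\left[U\mid W=w\right]=\mathbb{E}\left[Y\mid W=w\right]-a_{0}(w)-e_{0}(w)'b_{0}(w).
\]
Substituting $a_{0}(w)=\mathbb{E}[Y\mid W=w]-e_{0}(w)'b_{0}(w)$ from (\ref{eq: clp_coef}), the two $e_{0}(w)'b_{0}(w)$ terms and the two $\mathbb{E}[Y\mid W=w]$ terms cancel, leaving zero.

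For the second identity I would compute
\[
\mathbb{E}\left[XU\mid W=w\right]=\mathbb{E}\left[XY\mid W=w\right]-a_{0}(w)e_{0}(w)-\mathbb{E}\left[XX'\mid W=w\right]b_{0}(w),
\]
and then eliminate the raw second moments in favor of the centered objects via $\mathbb{E}[XX'\mid W=w]=v_{0}(w)+e_{0}(w)e_{0}(w)'$ and $\mathbb{E}[XY\mid W=w]=\mathbb{C}(X,Y\mid W=w)+e_{0}(w)\mathbb{E}[Y\mid W=w]$. After substituting $a_{0}(w)$ from (\ref{eq: clp_coef}), the terms involving $\mathbb{E}[Y\mid W=w]$ and those involving $e_{0}(w)e_{0}(w)'b_{0}(w)$ cancel in pairs, collapsing the right-hand side to $\mathbb{C}(X,Y\mid W=w)-v_{0}(w)b_{0}(w)$. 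Multiplying the definition $b_{0}(w)=v_{0}(w)^{-1}\mathbb{C}(X,Y\mid W=w)$ through by $v_{0}(w)$ gives $v_{0}(w)b_{0}(w)=\mathbb{C}(X,Y\mid W=w)$, so this difference is zero.

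I do not expect a genuine obstacle here: the result is a pure algebraic verification of the projection normal equations, and every step is an identity rather than an inequality or a limiting argument. The only points requiring minor care are bookkeeping ones, namely keeping the vector/matrix dimensions straight (with $\mathbb{C}(X,Y\mid W=w)$ a $K\times1$ vector and $v_{0}(w)$ a $K\times K$ matrix) and remembering that $a_{0}(w)$ and $b_{0}(w)$ are deterministic functions of the conditioning value $w$. Invertibility of $v_{0}(w)$ under Assumption \ref{ass: overlap} is what legitimizes the final step; without it $b_{0}(w)$ would not be the unique minimizer and the normal equations could not be solved in closed form.
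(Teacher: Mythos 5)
Your proof is correct: both identities follow by direct substitution of the closed-form coefficients in (\ref{eq: clp_coef}), which is exactly the ``straightforward'' normal-equations verification the paper alludes to in Definition \ref{def: clp} (the paper itself omits the proof, citing Wooldridge). The algebra checks out, including the key cancellations via $\mathbb{E}\left[\left.XX'\right|W=w\right]=v_{0}\left(w\right)+e_{0}\left(w\right)e_{0}\left(w\right)'$ and $v_{0}\left(w\right)b_{0}\left(w\right)=\mathbb{C}\left(\left.X,Y\right|W=w\right)$.
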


\subsection*{Identification of the average regression slope}

We begin by presenting a convenient representation of the average
slope coefficient $\beta_{\mathrm{0}}=\mathbb{E}\left[b_{0}\left(W\right)\right]$
in terms of the joint distribution of $\left(W',X',Y\right)'$. The
most direct representation follows directly from (\ref{eq: clp_coef}):
\[
\beta_{\mathrm{0}}=\mathbb{E}\left[v_{0}\left(W\right)^{-1}\mathbb{C}\left(\left.X,Y\right|W\right)\right].
\]
For our purposes, however, an alternative representation of $\beta_{\mathrm{0}}$
is more convenient; both for our semiparametric efficiency bound (SEB)
analysis and for the approach to estimation developed below. Using
the law of iterated expectations and the definition of conditional
covariance we get, under Assumptions \ref{ass: random_sampling} and
\ref{ass: overlap},
\begin{align*}
\mathbb{E}\left[v_{0}\left(W\right)^{-1}\left(X-e_{0}\left(W\right)\right)Y\right] & =\mathbb{E}\left[v_{0}\left(W\right)^{-1}\mathbb{E}\left[\left.\left(X-e_{0}\left(W\right)\right)Y\right|W\right]\right]\\
 & =\mathbb{E}\left[v_{0}\left(W\right)^{-1}\mathbb{C}\left(\left.X,Y\right|W\right)\right].
\end{align*}

Applying definition (\ref{eq: clp_coef}) then gives our preferred
estimand representation:
\begin{align}
\beta_{0} & =\mathbb{E}\left[v_{0}\left(W\right)^{-1}\left(X-e_{0}\left(W\right)\right)Y\right].\label{eq: generalized_ipw}
\end{align}
\citet{Wooldridge_CWP04} emphasizes the coincidence between (\ref{eq: generalized_ipw})
and the average partial effect of $X$ on $Y$ associated with a particular
correlated random coefficients (CRC) potential outcomes structure.
This endows $\beta_{0}$ with causal meaning. While we also develop
this connection below, we wish to initially emphasize that (\ref{eq: generalized_ipw})
is also just one way of representing a population average of conditional
linear predictor coefficients. Under Assumptions \ref{ass: random_sampling}
and \ref{ass: overlap} the expectation in (\ref{eq: generalized_ipw})
is well-defined and $\beta_{0}$ is simply a ``statistical'' estimand.
We are interested in estimating it as precisely as possible.

\subsection*{Causal interpretation}

In this subsection we show that (\ref{eq: generalized_ipw}) admits
a causal interpretation under a particular treatment response model
and selection on observables type assumption. As noted earlier, this
interpretation was previously emphasized by \citet{Wooldridge_CWP04},
but under stronger conditions than we maintain here. 

Associated with each agent in the target population is an individual-specific
potential response function, $Y\left(x\right)$, which maps counterfactual
values of the input vector $X$ into their corresponding (potential)
outcomes. The observed outcome coincides with the value of the potential
response function at the observed input level $X$: $Y=Y\left(X\right)$.
We assume that $Y\left(x\right)$ is linear in $x$, but otherwise
heterogeneous across individuals: 
\begin{equation}
Y\left(x\right)=A+x'B,\label{eq: CRC_potential_response}
\end{equation}
where $A$ and $B$ are an individual-specific intercept and slope
vector respectively. 

Equation (\ref{eq: CRC_potential_response}) allows for each individual
to have their own potential response function, but restricts them
to be linear in $X$. When $X$ is binary, or multi-valued, linearity
is unrestrictive. For example, in the binary case, we have the potential
outcome under control ($X=0$) and active $(X=1$) treatment equal
to $Y\left(0\right)=A$ and $Y\left(1\right)=A+B$. In the multi-valued
treatment setting of \citet{Imbens_BM00} and \citet{Cattaneo_JOE10},
with $X$ a vector of treatment indicators for $K$ mutually exclusive
treatments, we have $Y\left(0\right)=A$ and $Y\left(k\right)=A+B_{k}$
for $k=1,\ldots,K$. In contrast, when $X$ is ordered, continuously
valued, or includes multiple treatments/policies, linearity is restrictive.

Consider the following thought experiment: draw a unit at random and
(exogenously) increase the value of the $k^{th}$ component of $X$
by one unit. The expected effect of this intervention is $\mathbb{E}\left[B_{k}\right]$.
In the binary- and multi-valued treatment setting $\mathbb{E}\left[B_{k}\right]$
corresponds to an average treatment effect (ATE)
\[
\mathbb{E}\left[B_{k}\right]=\mathbb{E}\left[Y\left(k\right)-Y\left(0\right)\right].
\]
More generally $\mathbb{E}\left[B_{k}\right]$ equals the \emph{average
partial effect} (APE) of a unit increase in $X_{k}$. This estimand
was introduced in a panel data setting by \citet{Chamberlain_HBE84};
general expositions, with additional results, are available in \citet{Blundel_Powell_WC03}
and \citet{Wooldridge_IIEM05}.

Under the following assumption, in addition to those introduced above,
we can show that $\beta_{0}$ coincides with the APE vector, $\mathbb{E}\left[B\right]$.
\begin{assumption}
\textsc{\label{ass: CondExog}(Conditional Exogeneity) }\emph{For
all $w\in\mathbb{W}$ and }\textup{\emph{$k,l=1,\ldots,K$, and under
potential responses of the form given in (\ref{eq: CRC_potential_response})}}
\begin{equation}
\mathbb{C}\left(\left.A,X_{k}\right|W=w\right)=\mathbb{C}\left(\left.B,X_{k}\right|W=w\right)=\mathbb{C}\left(\left.B,X_{k}X_{l}\right|W=w\right)=0.\label{eq: exogeneity}
\end{equation}
\end{assumption}
Assumption \ref{ass: CondExog} restricts the form of any dependence
between the potential response function, $Y\left(x\right)=A+x'B$,
and the treatment vector actually chosen by the respondent, $X$.
It is a conditional exogeneity or selection on observables type assumption.
To see this observe that when $X$ is binary Assumption \ref{ass: CondExog}
coincides with the standard mean independence assumption familiar
from the program evaluation literature, implying that 
\[
\mathbb{E}[Y(x)|X,W]=\mathbb{E}[Y(x)|W].
\]
In the multi-valued treatment setting Assumption \ref{ass: CondExog}
also coincides with standard generalizations of the mean independence
assumption \citep[cf., ][]{Imbens_BM00}. See also Section \ref{sec: examples}
below.

When the linearity of (\ref{eq: CRC_potential_response}) is restrictive,
as occurs when $X$ includes continuously-valued components, or non-mutually
exclusive binary inputs, Assumption \ref{ass: CondExog} is less restrictive
than other possible formulations of conditional exogeneity. For example,
\citet{Wooldridge_CWP04,Wooldridge_EACSPDBook10} works with the identifying
restrictions
\begin{equation}
\mathbb{E}\left[\left.X\right|W,A,B\right]=\mathbb{E}\left[\left.X\right|W\right]=e_{0}(W),\thinspace\thinspace\mathbb{V}\left(\left.X\right|W,A,B\right)=\mathbb{V}\left(\left.X\right|W\right)=v_{0}(W)\label{eq: exogeneity_alt1}
\end{equation}
which imply (\ref{eq: exogeneity}), but are generally stronger. An
even stronger notion of conditional exogeneity is
\begin{equation}
\mathbb{E}\left[\left.A\right|X,W\right]=\mathbb{E}\left[\left.A\right|W\right],\thinspace\thinspace\mathbb{E}\left[\left.B\right|X,W\right]=\mathbb{E}\left[\left.B\right|W\right].\label{eq: exogeneity_alt2}
\end{equation}
Assumption \ref{ass: CondExog} is (apparently) the weakest assumption
necessary to equate $\beta_{0}$ with the average partial effect of
$X$ on $Y$ when the potential response function takes form (\ref{eq: CRC_potential_response}).
The estimator we introduce below will remain consistent under the
stronger restrictions, (\ref{eq: exogeneity_alt1}) and (\ref{eq: exogeneity_alt2}),
but will generally not be semiparametrically efficient in those cases.
We elaborate further on this observation below.
\begin{prop}
\textsc{(Average Partial Effect Identification) }Under Assumptions
\ref{ass: random_sampling}, \ref{ass: overlap} and \ref{ass: CondExog}
the average of the CLP coefficients, $\beta_{0}=\mathbb{E}\left[b_{0}\left(W\right)\right]$,
and the average partial effect (APE), $\mathbb{E}\left[B\right]$,
coincide:
\[
\beta_{0}=\mathbb{E}\left[B\right].
\]
\end{prop}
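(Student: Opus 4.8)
The plan is to work directly from the conditional linear predictor definition and substitute the random coefficients potential response $Y=Y(X)=A+X'B$ into the conditional covariance $\mathbb{C}\left(\left.X,Y\right|W=w\right)$ that defines $b_{0}\left(w\right)$ in (\ref{eq: clp_coef}). Since $b_{0}\left(w\right)=v_{0}\left(w\right)^{-1}\mathbb{C}\left(\left.X,Y\right|W=w\right)$, and $v_{0}\left(w\right)$ is invertible by Assumption \ref{ass: overlap}, it suffices to show that $\mathbb{C}\left(\left.X,Y\right|W=w\right)=v_{0}\left(w\right)\mathbb{E}\left[\left.B\right|W=w\right]$ for every $w\in\mathbb{W}$; this immediately yields $b_{0}\left(w\right)=\mathbb{E}\left[\left.B\right|W=w\right]$, and then the law of iterated expectations delivers $\beta_{0}=\mathbb{E}\left[b_{0}\left(W\right)\right]=\mathbb{E}\left[\mathbb{E}\left[\left.B\right|W\right]\right]=\mathbb{E}\left[B\right]$.

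First I would use linearity of the conditional covariance in its second argument to split
\[
\mathbb{C}\left(\left.X,Y\right|W=w\right)=\mathbb{C}\left(\left.X,A\right|W=w\right)+\mathbb{C}\left(\left.X,X'B\right|W=w\right).
\]
The first term is a $K\times1$ vector whose $k^{th}$ entry is $\mathbb{C}\left(\left.X_{k},A\right|W=w\right)$, which vanishes identically by the first equality in (\ref{eq: exogeneity}). So the entire burden falls on the second term.

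The main work is to establish $\mathbb{C}\left(\left.X,X'B\right|W=w\right)=v_{0}\left(w\right)\mathbb{E}\left[\left.B\right|W=w\right]$. Writing out the $k^{th}$ component as $\sum_{l}\mathbb{C}\left(\left.X_{k},X_{l}B_{l}\right|W=w\right)$ and expanding each summand as $\mathbb{E}\left[\left.X_{k}X_{l}B_{l}\right|W=w\right]-\mathbb{E}\left[\left.X_{k}\right|W=w\right]\mathbb{E}\left[\left.X_{l}B_{l}\right|W=w\right]$, I would invoke the second and third covariance restrictions in (\ref{eq: exogeneity}) to factor $B_{l}$ out of both expectations. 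Concretely, $\mathbb{C}\left(\left.B_{l},X_{l}\right|W=w\right)=0$ gives $\mathbb{E}\left[\left.X_{l}B_{l}\right|W=w\right]=\mathbb{E}\left[\left.X_{l}\right|W=w\right]\mathbb{E}\left[\left.B_{l}\right|W=w\right]$, while $\mathbb{C}\left(\left.B_{l},X_{k}X_{l}\right|W=w\right)=0$ gives $\mathbb{E}\left[\left.X_{k}X_{l}B_{l}\right|W=w\right]=\mathbb{E}\left[\left.X_{k}X_{l}\right|W=w\right]\mathbb{E}\left[\left.B_{l}\right|W=w\right]$. Substituting both, the summand collapses to $\left(\mathbb{E}\left[\left.X_{k}X_{l}\right|W=w\right]-\mathbb{E}\left[\left.X_{k}\right|W=w\right]\mathbb{E}\left[\left.X_{l}\right|W=w\right]\right)\mathbb{E}\left[\left.B_{l}\right|W=w\right]$, which is the $\left(k,l\right)$ entry of $v_{0}\left(w\right)$ times $\mathbb{E}\left[\left.B_{l}\right|W=w\right]$. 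Summing over $l$ reconstructs the matrix product $v_{0}\left(w\right)\mathbb{E}\left[\left.B\right|W=w\right]$ component-by-component.

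The one step that requires care — and the reason Assumption \ref{ass: CondExog} is stated with three separate conditions rather than a single scalar mean-independence restriction — is handling the cross-product expectation $\mathbb{E}\left[\left.X_{k}X_{l}B_{l}\right|W=w\right]$ when $k\neq l$. This is precisely where the third restriction $\mathbb{C}\left(\left.B,X_{k}X_{l}\right|W=w\right)=0$ is indispensable: the quadratic terms $X_{k}X_{l}$ entering the conditional second-moment structure of $X$ must themselves be mean-uncorrelated with the slope coefficients, and no weaker condition on $B$ alone would permit the factorization. Once those terms are linearized the remainder is routine algebra, and the claimed identity follows.
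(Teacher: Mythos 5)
Your proof is correct, and it takes a more direct route than the paper's. You start from the explicit formula $b_{0}\left(w\right)=v_{0}\left(w\right)^{-1}\mathbb{C}\left(\left.X,Y\right|W=w\right)$ in (\ref{eq: clp_coef}), substitute $Y=A+X'B$, and show component-by-component that the three covariance restrictions in (\ref{eq: exogeneity}) force $\mathbb{C}\left(\left.X,Y\right|W=w\right)=v_{0}\left(w\right)\mathbb{E}\left[\left.B\right|W=w\right]$, so that the only inversion you ever need is that of $v_{0}\left(w\right)$, which is exactly Assumption \ref{ass: overlap}. The paper instead works from the CLP orthogonality conditions of Lemma \ref{lem: Wooldridge}, $\mathbb{E}\left[\left.U\right|W=w\right]=\mathbb{E}\left[\left.XU\right|W=w\right]=0$, which after imposing Assumption \ref{ass: CondExog} become a $\left(1+K\right)\times\left(1+K\right)$ linear system in the pair $\left(\mathbb{E}\left[\left.A\right|w\right]-a_{0}\left(w\right),\mathbb{E}\left[\left.B\right|w\right]-b_{0}\left(w\right)\right)$ whose coefficient matrix $\left[\begin{smallmatrix}1 & e_{0}\left(w\right)'\\ e_{0}\left(w\right) & \mathbb{E}\left[\left.XX'\right|w\right]\end{smallmatrix}\right]$ is invertible under overlap (its Schur complement is $v_{0}\left(w\right)$, so the two invertibility requirements are equivalent). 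The two arguments use identical hypotheses and both finish with iterated expectations; the trade-off is that your calculation isolates the slope and never touches the intercept, making it slightly more elementary and making visible exactly where each of the three covariance restrictions enters, while the paper's joint system delivers the additional conclusion $\mathbb{E}\left[\left.A\right|W=w\right]=a_{0}\left(w\right)$ as a by-product. One point worth making explicit in your write-up is that the bilinear splitting of $\mathbb{C}\left(\left.X,A+X'B\right|W=w\right)$ presupposes that the cross-moments $\mathbb{E}\left[\left.\left|X_{k}A\right|\right|W=w\right]$ and $\mathbb{E}\left[\left.\left|X_{k}X_{l}B_{l}\right|\right|W=w\right]$ are finite; this is implicit in Assumption \ref{ass: CondExog} being well-posed (the paper's proof relies on the same implicit integrability), so it is not a gap, merely something to acknowledge.
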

\begin{proof}
\citet{Wooldridge_CWP04} demonstrates the equality under the stronger
restriction (\ref{eq: exogeneity_alt1}). Under Assumption \ref{ass: CondExog},
however, the proof proceeds differently. Given the linear potential
response (\ref{eq: CRC_potential_response}) and by lemma (\ref{lem: Wooldridge}),
we have the $1+K$ conditional moment restrictions 
\begin{eqnarray}
\mathbb{E}\left[\left.U\right|W=w\right] & = & \mathbb{E}\left[\left.A-a_{0}\left(W\right)\right|w\right]+\mathbb{E}\left[\left.X'\left(B-b_{0}\left(W\right)\right)\right|w\right]=0\nonumber \\
\mathbb{E}\left[\left.XU\right|W=w\right] & = & \mathbb{E}\left[\left.X\left(A-a_{0}\left(W\right)\right)\right|w\right]+\mathbb{E}\left[\left.XX'\left(B-b_{0}\left(W\right)\right)\right|w\right]=0.\label{eq: conditional_moments}
\end{eqnarray}
Under Assumption \ref{ass: CondExog} conditions (\ref{eq: conditional_moments})
simplify to
\begin{align*}
\left\{ \mathbb{E}\left[\left.A\right|w\right]-a_{0}\left(w\right)\right\} +e_{0}\left(w\right)'\left\{ \mathbb{E}\left[\left.B\right|w\right]-b_{0}\left(w\right)\right\}  & =0\\
e_{0}\left(w\right)\left\{ \mathbb{E}\left[\left.A\right|w\right]-a_{0}\left(w\right)\right\} +\mathbb{E}\left[\left.XX'\right|w\right]\left\{ \mathbb{E}\left[\left.B\right|w\right]-b_{0}\left(w\right)\right\}  & =0
\end{align*}
or, in matrix form,
\[
\left[\begin{array}{cc}
1 & e_{0}\left(w\right)'\\
e_{0}\left(w\right) & \mathbb{E}\left[\left.XX'\right|w\right]
\end{array}\right]\left(\begin{array}{c}
\mathbb{E}\left[\left.A\right|w\right]-a_{0}\left(w\right)\\
\mathbb{E}\left[\left.B\right|w\right]-b_{0}\left(w\right)
\end{array}\right)=\left(\begin{array}{c}
0\\
0
\end{array}\right).
\]
Under the Assumption \ref{ass: overlap} the first matrix to the left
of the equality is invertible for all $w\in\mathbb{W}$. This implies
that $\mathbb{E}\left[\left.A\right|W=w\right]=a_{0}\left(w\right)$
and $\mathbb{E}\left[\left.B\right|W=w\right]=b_{0}\left(w\right)$
for all $w\in\mathbb{W}$. The result follows by iterated expectations.
\end{proof}

\subsection*{Causal interpretation under misspecification}

\citet[Section 2.3.1]{Angrist_Krueger_HLE99} and \citet[Chapter 3.3]{Angrist_Pischke_MHE09}
emphasize that when $X$ is a continuously-valued random variable
its slope coefficient in the linear predictor of $Y$ onto a constant,
$X$ and the vector of ``saturated'' controls admits a weighted
average derivative interpretation when the potential response function
takes a general nonlinear form \citep[cf.,][Lemma 5]{Angrist_Graddy_Imbens_ReStud00}.
Angrist and Krueger's \citeyearpar{Angrist_Krueger_HLE99} expression
is also isomorphic to the probability limit of the E-Estimator of
\citet{Newey_JAE90} and \citet{Robins_Mark_Newey_BM92} 
\begin{equation}
\beta_{\mathrm{E}}=\frac{\mathbb{E}\left[Y\left(X-e_{0}\left(W\right)\right)\right]}{\mathbb{E}\left[X\left(X-e_{0}\left(W\right)\right)\right]}\label{eq: E_estimand}
\end{equation}
when the partially linear regression structure, equation (\ref{eq: plm})
above, is incorrect. 

In this section, using similar arguments to those appearing in \citet[Lemma 5]{Angrist_Graddy_Imbens_ReStud00}
and \citet[Lemma A.1]{Graham_Imbens_Ridder_NBER10}, we provide a
representation result for $\beta_{0}$ under a general potential response
function.

Assume that the potential response function is nonlinear and heterogeneous
such that $Y\left(x\right)=h\left(x,U\right)$. Further assume, stronger
than Assumption \ref{ass: CondExog} above, that $U$ is conditionally
independent of $X$ given $W=w$ for all $w\in\mathbb{W}$. \citet{Blundel_Powell_WC03}
show that the \emph{partial mean} $\mathbb{E}_{W}\left[\mathbb{E}\left[\left.Y\right|W,X=x\right]\right]$
identifies the average structural function (ASF) $m\left(x\right)=\mathbb{E}_{U}\left[h\left(x,U\right)\right]$
when the support of $W$ given $X=x$ coincides with its marginal
support. \citet{Newey_ET94b} provides an explicit partial mean estimator
and derives in asymptotic properties. 

Here we show that our average regression slope estimand, $\beta_{0}$,
can be expressed as a weighted average of the gradient of $h\left(X,U\right)$.
This provides a causal interpretation of $\beta_{0}$ under a general
potential response function. To present this result we replace Assumption
\ref{ass: CondExog} with:
\begin{assumption}
\textsc{(Nonlinear Potential Response Function)}\label{ass: wgt_der}\textsc{
}(i) $X$ is a continuous scalar random variable with bounded support
$\mathbb{X}=\left[\underline{x},\overline{x}\right]$, (ii) the conditional
density function of $X$ given $W=w$ is bounded and bounded away
from zero for all $\left(w,x\right)\in\mathbb{W}\times\mathbb{X}$,
(iii) $Y=h\left(X,U\right)$ with $h\left(x,u\right)$ a continuously
differentiable function of $x$ for all $\left(x,u\right)\in\mathbb{X}\times\mathbb{U}$
and $\underline{h}\left(u\right)=h\left(\underline{x},u\right)$ finite
for all $u\in\mathbb{U}$, and (iv) $U$ is conditionally independent
of $X$ given $W=w$ for all $w\in\mathbb{W}$.
\end{assumption}
\begin{prop}
\label{prop: wgt_der}(\textsc{Weighted Average Derivative Representation})
Under Assumptions \ref{ass: random_sampling}, \ref{ass: overlap}
and \ref{ass: wgt_der}
\[
\beta_{0}=\mathbb{E}\left[\omega\left(W,X\right)\frac{\partial h\left(X,U\right)}{\partial x}\right]
\]
where
\[
\omega\left(w,x\right)=\frac{1}{f_{\left.X\right|W}\left(\left.x\right|w\right)}\frac{\mathbb{E}\left[\left.X-e_{0}\left(W\right)\right|W=w,X\geq x\right]\left(1-F_{\left.X\right|W}\left(\left.x\right|w\right)\right)}{\int_{\underline{x}}^{\bar{x}}\mathbb{E}\left[\left.X-e_{0}\left(W\right)\right|W=w,X\geq v\right]\left(1-F_{\left.X\right|W}\left(\left.v\right|w\right)\right)\mathrm{d}v}.
\]
\end{prop}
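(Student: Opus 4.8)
The plan is to start from the moment representation (\ref{eq: generalized_ipw}), which under Assumptions \ref{ass: random_sampling} and \ref{ass: overlap} reads $\beta_0 = \mathbb{E}\left[v_0(W)^{-1}(X - e_0(W)) Y\right]$. Since Assumption \ref{ass: wgt_der}(i) makes $X$ a scalar, $v_0(W)$ is a scalar conditional variance and $v_0(W)^{-1}$ is an ordinary reciprocal. Substituting $Y = h(X,U)$ and conditioning on $W$ via iterated expectations, the task reduces to evaluating the inner object $\mathbb{E}\left[(X - e_0(w)) h(X,U) \mid W=w\right]$ for each fixed $w$, and then averaging $v_0(w)^{-1}$ times this quantity over the marginal law of $W$.

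First I would exploit the conditional independence in Assumption \ref{ass: wgt_der}(iv) to factor the conditional joint density of $(X,U)$ given $W=w$ as $f_{X|W}(x|w)\, f_{U|W}(u|w)$, so that the inner expectation becomes a double integral over $x$ and $u$. Holding $u$ fixed, the $x$-integral is a conditional covariance between $X$ and the function $x \mapsto h(x,u)$. The central step is then to apply the \citet{Yitzhaki_JBES96}-type covariance--derivative identity used in \citet[Lemma 5]{Angrist_Graddy_Imbens_ReStud00} and \citet[Lemma A.1]{Graham_Imbens_Ridder_NBER10}: for a scalar $X$ supported on $[\underline{x},\overline{x}]$ and a differentiable $g$,
\[
\mathbb{E}\left[(X - e_0(w)) g(X) \mid W=w\right] = \int_{\underline{x}}^{\overline{x}} g'(x)\, \Lambda(w,x)\, \mathrm{d}x, \qquad \Lambda(w,x) \equiv \mathbb{E}\left[\left. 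X - e_0(w) \right| W=w,\, X \geq x\right]\left(1 - F_{X|W}(x|w)\right),
\]
which follows from integration by parts once one notes that $-\partial_x \Lambda(w,x) = (x - e_0(w)) f_{X|W}(x|w)$.

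Applying this identity with $g(x) = h(x,u)$, and then multiplying and dividing by $f_{X|W}(x|w)$, lets me rewrite the inner expectation as $\mathbb{E}\left[\left. \Lambda(W,X)/f_{X|W}(X|W)\cdot \partial h(X,U)/\partial x \right| W=w\right]$, using conditional independence once more to recombine the densities into the conditional law of $(X,U)$. The final ingredient is to recognize the normalizing denominator of $\omega$: applying the same identity with $g(x) = x$ (so that $g' \equiv 1$) yields $v_0(w) = \mathbb{E}\left[(X - e_0(w)) X \mid W=w\right] = \int_{\underline{x}}^{\overline{x}} \Lambda(w,v)\, \mathrm{d}v$, which is exactly the integral appearing in $\omega(w,x)$. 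Dividing by $v_0(W)$ and taking the outer expectation over $W$ then collapses everything into $\beta_0 = \mathbb{E}\left[\omega(W,X)\, \partial h(X,U)/\partial x\right]$ with $\omega$ as stated.

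The main obstacle is controlling the boundary terms in the integration by parts. I would show that $\Lambda(w,\cdot)$ vanishes at both endpoints: at $x = \overline{x}$ because $1 - F_{X|W}(\overline{x}|w) = 0$, and at $x = \underline{x}$ because $\Lambda(w,\underline{x}) = \mathbb{E}\left[X - e_0(w) \mid W=w\right] = 0$ by the definition of $e_0$ in (\ref{eq: e(W)}). Here Assumption \ref{ass: wgt_der}(iii), the finiteness of $\underline{h}(u) = h(\underline{x},u)$, is what guarantees that the boundary product $h(\underline{x},u)\,\Lambda(w,\underline{x})$ is a genuine zero rather than an indeterminate form, so the identity holds pointwise in $u$. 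The remaining work is a routine Fubini and dominated-convergence justification for interchanging the $x$- and $u$-integrals and for passing the derivative through; compact support together with the continuous differentiability of $h$ (Assumption \ref{ass: wgt_der}(iii)) and the density being bounded and bounded away from zero (Assumption \ref{ass: wgt_der}(ii)) supply the needed integrability and ensure $\omega(w,x)$ is finite and well-defined.
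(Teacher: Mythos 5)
Your proposal is correct and follows essentially the same route as the paper's proof: conditional independence to factor the law of $\left(X,U\right)$ given $W$, the Yitzhaki-type identity reducing $\mathbb{E}\left[\left.\left(X-e_{0}\left(w\right)\right)h\left(X,u\right)\right|W=w\right]$ to $\int\frac{\partial h\left(t,u\right)}{\partial x}\Lambda\left(w,t\right)\mathrm{d}t$, and the observation that $v_{0}\left(w\right)=\int_{\underline{x}}^{\bar{x}}\Lambda\left(w,v\right)\mathrm{d}v$ supplies the normalizing denominator of $\omega$. The only cosmetic difference is that you obtain the key identity by integration by parts on $\Lambda$ (checking the boundary terms vanish), whereas the paper obtains it by writing $h\left(x,u\right)=\underline{h}\left(u\right)+\int_{\underline{x}}^{\bar{x}}\frac{\partial h\left(t,u\right)}{\partial x}\mathbf{1}\left(x\geq t\right)\mathrm{d}t$ and applying Fubini, disposing of the $\underline{h}\left(U\right)$ term separately.
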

\begin{proof}
See the Supplemental Web Appendix.
\end{proof}
A key feature of the weighting function $\omega\left(w,x\right)$
is that its \emph{conditional} mean, $\mathbb{E}\left[\left.\omega\left(W,X\right)\right|W=w\right]$,
equals $1$ for \emph{every} value of $w\in\mathbb{W}$. Furthermore,
Lemma A.1 of \citet{Graham_Imbens_Ridder_NBER10} implies that, conditional
on $W=w$, the weight given to $\frac{\partial h\left(X,U\right)}{\partial x}$
is highest for those values of $X$ near its conditional mean, $\mathbb{E}\left[\left.X\right|W=w\right]$,
and lowest for those at the boundary of its support, $\underline{x}$
and $\overline{x}$. 

These features of the weights appearing in Proposition \ref{prop: wgt_der}
imply the following intuitive interpretation: (i) for each value of
$w\in\mathbb{W}$ compute a weighted average of $\frac{\partial h\left(X,U\right)}{\partial x}$,
where the average emphasizes values of $X$ near its conditional mean
given $W=w$, (ii) average these (weighted average) gradients over
the marginal distribution of $W$. This indicates that $\beta_{0}$
only differs from the unweighted average $\mathbb{E}\left[\frac{\partial h\left(X,U\right)}{\partial x}\right]$
due to variation in $\omega\left(W,X\right)$ \emph{within} $W=w$
cells. The contribution of each subpopulation, defined in terms of
the control, $W$, mirrors its density in the sampled population.
Since $W$ proxies for $U$ in this set-up we \emph{are }averaging
over the correct heterogeneity distribution. 

More precisely, since $\mathbb{E}\left[\left.\omega\left(W,X\right)\right|W=w\right]=1$,
we have that, using the definition of conditional covariance,
\begin{equation}
\beta_{0}-\mathbb{E}\left[\frac{\partial h\left(X,U\right)}{\partial x}\right]=\mathbb{E}\left[\mathbb{C}\left(\left.\omega\left(W,X\right),\frac{\partial h\left(X,U\right)}{\partial x}\right|W\right)\right].\label{eq: bias_beta0}
\end{equation}
The bias of $\beta_{0}$ for $\mathbb{E}\left[\frac{\partial h\left(x,U\right)}{\partial x}\right]$
is therefore solely due to \emph{conditional} covariance between the
weight function and the gradient of interest within subpopulations
homogenous in $W$.

In contrast to the one for $\beta_{0}$, the weight function appearing
in the weighted average derivative representation result of \citet{Angrist_Krueger_HLE99}
or \citet{Angrist_Pischke_MHE09} for $\beta_{\mathrm{E}}$ is only
\emph{unconditionally} mean zero. This implies that $\beta_{\mathrm{E}}$
averages over the incorrect heterogeneity distribution \emph{as well
as} the incorrect policy variable distribution. 
\begin{align}
\beta_{\mathrm{E}}-\mathbb{E}\left[\frac{\partial h\left(X,U\right)}{\partial x}\right]= & \mathbb{E}\left[\mathbb{C}\left(\left.\omega\left(W,X\right),\frac{\partial h\left(X,U\right)}{\partial x}\right|W\right)\right]\label{eq: bias_bE}\\
 & +\mathbb{C}\left(\mathbb{E}\left[\left.\omega\left(W,X\right)\right|W\right],\mathbb{E}\left[\left.\frac{\partial h\left(X,U\right)}{\partial x}\right|W\right]\right).\nonumber 
\end{align}
If the ultimate object of interest is the average derivative $\mathbb{E}\left[\frac{\partial h\left(X,U\right)}{\partial x}\right]$,
then, relative to (\ref{eq: bias_bE}), a focus on $\beta_{0}$ eliminates
one source of potential bias. Namely that the weight function may
over- or under-emphasize various subpopulations defined in terms of
their value of the control variable vector $W$. In this case $\mathbb{E}\left[\left.\omega\left(W,X\right)\right|W\right]$
may not equal one and the second term to the right of the equality
in (\ref{eq: bias_bE}) may be non-zero.\footnote{To be clear $\omega\left(W,X\right)$ are different functions in expressions
(\ref{eq: bias_beta0}) and (\ref{eq: bias_bE}); for its form in
the latter case see \citet{Angrist_Krueger_HLE99} or \citet{Angrist_Pischke_MHE09}.}

\subsection*{Motivating $\beta_{0}$}

Our focus on averages of conditional linear predictor slope coefficients
is motivated by a combination of principled and pragmatic reasons. 

First, the kitchen sink long regression remains a workhorse of everyday
empirical social science research. Our model extends kitchen sink
regression in an easy to understand way. Relative to the partially
linear regression model, our model allows for heterogenous responses
of $Y$ to variation in $X$; a feature likely to be both empirically
relevant and \emph{a priori} attractive to researchers. 

Second, $\beta_{0}$ has a causal interpretation under additional
assumptions. When the potential response function is linear, but heterogeneous
across agents, it coincides with an average partial effect (APE) under
a selection on observables type assumption. When $X$ is binary- or
multi-valued, as in the program evaluation literature, it coincides
with the well-known average treatment effect (ATE). Our causal model
nests the usual one as a special case, but accommodates continuous
and/or multiple treatments as well (albeit under restrictions). 

Third, in the presence of misspecification $\beta_{0}$ coincides
with a weighted average of the derivative of a general non-linear
potential response function. This weighted average derivative is more
interpretable than existing representation results; for example those
of \citet{Angrist_Krueger_HLE99} for $\beta_{\mathrm{E}}$. 

Fourth, as we show next, $\beta_{0}$ is $\sqrt{N}$ estimable (or
regularly identified). This is not the case for, say, a partial mean
with a continuous policy variable \citep[e.g., ][]{Newey_ET94b}.
Regular identification suggests that estimation is practically feasible
and we present one such feasible estimator below.

Ultimately the balance between ease of interpretation under various
population assumptions and, as we show below, ease of estimation,
provides the strongest case for focusing on $\beta_{0}$.

\section{Semiparametric efficiency bound\label{sec: seb}}

Using the method of calculation outlined by \citet{Bickel_et_al_Bk93}
and \citet{Newey_JAE90}, we derive the semiparametric variance bound
for $\beta_{0}$ of,

\begin{equation}
\mathcal{I}(\beta_{0})^{-1}=\mathbb{E}\left[\Omega_{0}(W)\right]+\mathbb{V}(b_{0}(W)),\label{eq: Inverse_Information_CRC}
\end{equation}
where 
\[
\Omega_{0}(w)=\mathbb{E}\left[\left.v_{0}(W)^{-1}\left(X-e_{0}\left(W\right)\right)UU'\left\{ v_{0}(W)^{-1}\left(X-e_{0}\left(W\right)\right)\right\} '\right|W=w\right].
\]

The corresponding efficient influence function equals
\begin{align}
\mathrm{\psi}_{\beta}^{\mathrm{eff}}\left(Z,\beta_{0},g_{0}\left(W\right),h_{0}\left(W\right)\right)= & v_{0}\left(W\right)^{-1}\left(X-e_{0}\left(W\right)\right)\left(Y-a_{0}\left(W\right)-X'b_{0}\left(W\right)\right)\label{eq: EfficientInfluenceFunction_CRC}\\
 & +\left(b_{0}\left(W\right)-\beta_{\mathrm{0}}\right)\nonumber 
\end{align}
with $Z=\left(W',X',Y\right)'$, $g\left(W\right)=\left(e(W),v(W)\right)$
and $h\left(W\right)=\left(a\left(W\right),b\left(W\right)\right)$.
\begin{thm}
\label{thm: SEB}\textsc{(Semiparametric Efficiency Bound)} The efficient
influence function for $\beta_{\mathrm{0}}=\mathbb{E}\left[b_{0}\left(W\right)\right]$
in the semiparametric problem established by Definition \ref{def: clp}
and Assumptions \ref{ass: random_sampling} and \ref{ass: overlap}
equals (\ref{eq: EfficientInfluenceFunction_CRC}).
\end{thm}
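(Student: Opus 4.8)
The plan is to exploit that the model defined by Definition~\ref{def: clp} and Assumptions~\ref{ass: random_sampling}--\ref{ass: overlap} imposes no local restriction on $F_{W,X,Y}$ beyond regularity (finite conditional second moments) and overlap, both of which hold on an open neighborhood of the truth. The model is therefore locally saturated (nonparametric) and its tangent space coincides with the whole space $L_2^0(F_Z)$ of mean-zero, square-integrable functions of $Z=(W',X',Y)'$. For such a model the efficient influence function is simply the \emph{unique} gradient of the functional, so it suffices to exhibit a mean-zero, finite-variance function $\psi(Z)$ satisfying the pathwise-derivative (Riesz representation) identity
\[
\left.\frac{\partial}{\partial\theta}\beta(F_\theta)\right|_{\theta=0}=\mathbb{E}\left[\psi(Z)\,s(Z)\right]
\]
for every regular one-dimensional submodel $\{F_\theta\}$ through $F_0$ with score $s$, and then to verify that $\psi$ equals the right-hand side of (\ref{eq: EfficientInfluenceFunction_CRC}). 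That the candidate $\psi^{\mathrm{eff}}$ in (\ref{eq: EfficientInfluenceFunction_CRC}) has mean zero is immediate: its first term is conditionally mean zero given $W$ by Lemma~\ref{lem: Wooldridge} (since $\mathbb{E}[U|W]=0$ and $\mathbb{E}[XU|W]=0$), while its second term, $b_0(W)-\beta_0$, has mean zero by the definition of $\beta_0$; finiteness of its variance follows from Assumption~\ref{ass: random_sampling} together with the bound $v_0(W)^{-1}\preceq\kappa^{-1}I$ implied by Assumption~\ref{ass: overlap}.

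To compute the pathwise derivative I would work from the representation $\beta(F)=\mathbb{E}_F[v_F(W)^{-1}(X-e_F(W))Y]$ in (\ref{eq: generalized_ipw}) and use the orthogonal factorization of the score induced by $f(w,x,y)=f_W(w)\,f_{X|W}(x|w)\,f_{Y|X,W}(y|x,w)$, namely $s=s_W(W)+s_{X|W}(X,W)+s_{Y|X,W}(Y,X,W)$ with the last two components conditionally mean zero. I would then differentiate $\beta(F_\theta)$ along each of the three directions in turn. Differentiating the outer measure in the $f_W$-direction (along which $e_\theta,v_\theta$ are held fixed) returns $\mathbb{E}[b_0(W)s_W(W)]=\mathbb{E}[(b_0(W)-\beta_0)s_W(W)]$, using $\mathbb{E}[v_0(W)^{-1}(X-e_0(W))Y\mid W]=b_0(W)$ and $\mathbb{E}[s_W]=0$; this reproduces the second term of (\ref{eq: EfficientInfluenceFunction_CRC}). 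Differentiating in the $f_{Y|X,W}$-direction (along which $e_\theta,v_\theta$ are again fixed) returns $\mathbb{E}[v_0(W)^{-1}(X-e_0(W))(Y-m(X,W))\,s_{Y|X,W}]$, where $m(x,w)=\mathbb{E}[Y|X=x,W=w]$ is the (generally nonlinear) CEF; here one subtracts $m(X,W)$ freely because it is orthogonal to $s_{Y|X,W}$ given $(X,W)$.

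The substantive computation, and the step I expect to be the main obstacle, is the $f_{X|W}$-direction, where three effects combine: the explicit dependence of the integrand on $e_0$ through $X-e_0(W)$, on $v_0$ through $v_0(W)^{-1}$, and the dependence of the outer integral over $x$ on $f_{X|W}$ itself. Using $\partial_\theta e_\theta(w)|_0=\mathbb{E}[(X-e_0(w))s_{X|W}\mid W=w]$, the analogous derivative of $v_\theta(w)$, and the matrix identity $\partial_\theta v_\theta^{-1}=-v_0^{-1}(\partial_\theta v_\theta)v_0^{-1}$, one collects these into a single inner product against $s_{X|W}$. The delicate point is that the $f_{Y|X,W}$-piece delivers the \emph{CEF} residual $Y-m(X,W)$, whereas the target first term of (\ref{eq: EfficientInfluenceFunction_CRC}) carries the \emph{CLP} residual $U=Y-a_0(W)-X'b_0(W)$; the discrepancy $m(X,W)-a_0(W)-X'b_0(W)$ must be supplied by the $f_{X|W}$-contribution. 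Reconciling the two requires the conditional linear predictor normal equations $\mathbb{C}(X,Y\mid W)=v_0(W)b_0(W)$ together with $a_0(W)=\mathbb{E}[Y\mid W]-e_0(W)'b_0(W)$ from (\ref{eq: clp_coef}) and the orthogonality relations of Lemma~\ref{lem: Wooldridge}. After this simplification the three representers sum to $v_0(W)^{-1}(X-e_0(W))U+(b_0(W)-\beta_0)$ evaluated against $s=s_W+s_{X|W}+s_{Y|X,W}$. Since the three score subspaces are orthogonal and jointly exhaust $L_2^0(F_Z)$, this identifies $\psi^{\mathrm{eff}}$ in (\ref{eq: EfficientInfluenceFunction_CRC}) as the unique gradient, hence the efficient influence function; the variance bound (\ref{eq: Inverse_Information_CRC}) then follows by conditioning on $W$, the first term being conditionally mean zero and thus orthogonal to the $W$-measurable second term.
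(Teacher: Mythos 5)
Your proposal is correct, and I verified that the delicate $f_{X\mid W}$-direction you flag does close exactly as you describe: combining $\partial_{\theta}e_{\theta}(w)=\mathbb{E}\left[\left(X-e_{0}(w)\right)s_{X\mid W}\mid W=w\right]$, $\partial_{\theta}v_{\theta}(w)=\mathbb{E}\left[\left(X-e_{0}\right)\left(X-e_{0}\right)'s_{X\mid W}\mid W=w\right]$, the identity $\partial_{\theta}v_{\theta}^{-1}=-v_{0}^{-1}\left(\partial_{\theta}v_{\theta}\right)v_{0}^{-1}$, and the normal equations $\mathbb{C}\left(X,Y\mid W\right)=v_{0}(W)b_{0}(W)$ and $\mathbb{E}\left[Y\mid W\right]=a_{0}(W)+e_{0}(W)'b_{0}(W)$ yields $\mathbb{E}\left[v_{0}^{-1}\left(X-e_{0}\right)\left\{ m(X,W)-a_{0}(W)-X'b_{0}(W)\right\} s_{X\mid W}\right]$, which is precisely the missing piece that converts the CEF residual from the $f_{Y\mid X,W}$-direction into the CLP residual $U$. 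However, your route through the pathwise derivative is genuinely different from the paper's. You differentiate the closed-form representation (\ref{eq: generalized_ipw}) explicitly, splitting the score three ways as $s_{W}+s_{X\mid W}+s_{Y\mid X,W}$ and tracking the perturbations of $e_{F}$ and $v_{F}$ term by term. The paper instead starts from $\beta(\eta)=\int b(w;\eta)f(w;\eta)\mathrm{d}w$ and differentiates the CLP normal equations (\ref{eq: CLP(eta)}) \emph{implicitly}, inverting the conditional second-moment matrix $\mathbb{E}\left[(1,X')'(1,X')\mid W=w\right]$ to obtain $\partial b(w;\eta_{0})/\partial\eta'$ directly as an inner product of the CLP residual against the \emph{joint} conditional score $s_{\eta}\left(x,y\mid w\right)$ (equation (\ref{eq: d_b(W)/d_eta})). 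The paper's implicit-differentiation device makes the CLP residual appear automatically and avoids the CEF-versus-CLP reconciliation entirely, at the cost of a block matrix inversion; your explicit computation requires more bookkeeping in the $f_{X\mid W}$-direction but makes transparent which part of the influence function each component of the likelihood contributes. Both arguments finish identically: the tangent space is all of $L_{2}^{0}$ (your ``locally saturated'' observation matches the paper's two-way characterization $s\left(x,y\mid w\right)+t(w)$), the candidate lies in it, and hence it is the unique gradient and the efficient influence function.
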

\begin{proof}
See Appendix \ref{appendix: proofs}.

We also have the following corollary, which is similar to a result
for the binary case due to \citet{Robins_Rotnitzky_Zhao_JASA94},
\citet{Hahn_EM98} and \citet{Chen_Hong_Tarozz_AS08}. This corollary
will be useful when we discuss locally efficient estimation in Section
(\ref{sec: estimation}).
\end{proof}
\begin{cor}
\label{cor: redundancy_of_f(x|w)}\textsc{ (Redundancy) }Let $f\left(\left.x\right|w;\phi\right)$
be a parametric family of conditional distributions for $X$ given
$W$ with $f_{0}\left(\left.x\right|w\right)=f\left(\left.x\right|w;\phi\right)$
at some unique $\phi=\phi_{0}$. The knowledge that $f_{0}\left(\left.x\right|w\right)$
is a member of the family $f\left(\left.x\right|w;\phi\right)$ does
not change the semiparametric efficiency bound for $\beta_{0}.$
\end{cor}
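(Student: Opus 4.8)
The plan is to prove the corollary by the orthogonality argument that underlies the analogous binary-treatment results of \citet{Hahn_EM98}, \citet{Robins_Rotnitzky_Zhao_JASA94} and \citet{Chen_Hong_Tarozz_AS08}: I will show that the efficient influence function (\ref{eq: EfficientInfluenceFunction_CRC}) is orthogonal to the tangent space generated by the conditional law of $X$ given $W$, so that replacing this tangent space by the finite-dimensional one spanned by the scores of $f(x|w;\phi)$ leaves the variance bound unchanged. First I would record the orthogonal decomposition of the tangent space induced by the factorization $f_{W,X,Y}=f_{W}\cdot f_{X|W}\cdot f_{Y|X,W}$, namely $\mathcal{T}=\mathcal{T}_{W}\oplus\mathcal{T}_{X|W}\oplus\mathcal{T}_{Y|X,W}$, where $\mathcal{T}_{X|W}=\{s(X,W):\mathbb{E}[s(X,W)|W]=0\}$ collects the scores of the (generalized) propensity score, $\mathcal{T}_{Y|X,W}$ the conditional scores of $Y$ given $(X,W)$, and $\mathcal{T}_{W}$ the scores of the marginal law of $W$.

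Knowing that $f_{0}(x|w)=f(x|w;\phi_{0})$ replaces $\mathcal{T}_{X|W}$ by its finite-dimensional subspace $\dot{\mathcal{T}}_{X|W}=\mathrm{span}\{\partial_{\phi}\log f(X|W;\phi_{0})\}\subseteq\mathcal{T}_{X|W}$, while leaving $\mathcal{T}_{W}$ and $\mathcal{T}_{Y|X,W}$ intact. By the projection characterization of the bound used in Theorem \ref{thm: SEB}, the restricted-model efficient influence function is the projection of (\ref{eq: EfficientInfluenceFunction_CRC}) onto this smaller tangent space, and the bound falls by exactly the squared norm of the component of (\ref{eq: EfficientInfluenceFunction_CRC}) lying in $\mathcal{T}_{X|W}\ominus\dot{\mathcal{T}}_{X|W}$. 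It therefore suffices to show that (\ref{eq: EfficientInfluenceFunction_CRC}) has no component in $\mathcal{T}_{X|W}$ whatsoever; granting this, $\psi_{\beta}^{\mathrm{eff}}\in\mathcal{T}_{W}\oplus\mathcal{T}_{Y|X,W}$ is its own projection onto the restricted tangent space, remains the efficient influence function, and $\mathcal{I}(\beta_{0})^{-1}$ in (\ref{eq: Inverse_Information_CRC}) is unchanged.

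The crux is thus to verify $\mathbb{E}[\psi_{\beta}^{\mathrm{eff}}\,s]=0$ for every $s\in\mathcal{T}_{X|W}$. The additive term $b_{0}(W)-\beta_{0}$ is a function of $W$ alone and is annihilated because $\mathbb{E}[s|W]=0$, leaving $\mathbb{E}[v_{0}(W)^{-1}(X-e_{0}(W))\,U\,s(X,W)]$. Conditioning on $(X,W)$ pushes the expectation onto $U$ and reduces this to $\mathbb{E}[v_{0}(W)^{-1}(X-e_{0}(W))\,s(X,W)\,\mathbb{E}[U|X,W]]$. Lemma \ref{lem: Wooldridge} supplies the within-cell restrictions $\mathbb{E}[U|W]=0$ and $\mathbb{E}[XU|W]=0$, which already guarantee $\mathbb{E}[\psi_{\beta}^{\mathrm{eff}}|W]=0$ and hence that the first piece contributes nothing to $\mathcal{T}_{W}$; when $X$ is binary- or multi-valued the conditional linear predictor is saturated, so it coincides with the conditional expectation function, $\mathbb{E}[U|X,W]=0$ identically, and the displayed expectation vanishes term by term. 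This is precisely the mechanism behind the binary-case redundancy of \citet{Hahn_EM98}.

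I expect this final vanishing step to be the main obstacle once $X$ is continuously valued and the conditional linear predictor is no longer saturated: there $\mathbb{E}[U|X,W]=0$ is not automatic, and the restrictions of Lemma \ref{lem: Wooldridge} pin $U$ down only against $1$ and $X$ within each $W$-cell, not against an arbitrary $s(X,W)$. The argument then has to be conducted at the (correctly specified) submodel on which the conditional linear predictor equals the conditional expectation function, so that $\mathbb{E}[U|X,W]=0$ and the orthogonality — and with it the claimed invariance of the bound — goes through. This is consistent with the corollary's stated role of supporting the construction of a \emph{locally} efficient estimator.
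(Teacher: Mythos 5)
Your overall route is, in substance, the same one the paper takes, just organized differently. The paper's proof (in the Supplemental Web Appendix) recomputes the tangent set of the restricted model as $\mathcal{T}=\left\{ s\left(\left.y\right|w,x\right)+\mathbf{c}\mathbb{S}_{\phi}\left(\left.x\right|w\right)+t\left(w\right)\right\}$, re-verifies that (\ref{eq: EfficientInfluenceFunction_CRC}) is a pathwise derivative by differentiating the conditional-moment characterization of $b\left(w;\eta\right)$ along submodels that move $f\left(\left.x\right|w\right)$ only through $\phi$, and then invokes Theorem 3.1 of \citet{Newey_JAE90} after asserting that (\ref{eq: EfficientInfluenceFunction_CRC}) still lies in this smaller tangent set. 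Your orthogonal decomposition $\mathcal{T}=\mathcal{T}_{W}\oplus\mathcal{T}_{X|W}\oplus\mathcal{T}_{Y|X,W}$, together with the observation that the bound can fall only by the squared norm of the component of $\psi_{\beta}^{\mathrm{eff}}$ lying in $\mathcal{T}_{X|W}\ominus\mathrm{span}\left\{ \mathbb{S}_{\phi}\right\}$, is an equivalent and arguably cleaner way of reducing the corollary to the same membership question, and your verification that the gradient property survives restriction is correct.

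Where your attempt stops short is exactly the step you flag, and it is a genuine gap relative to the corollary as stated. The component of $\psi_{\beta}^{\mathrm{eff}}$ in $\mathcal{T}_{X|W}$ is $v_{0}\left(W\right)^{-1}\left(X-e_{0}\left(W\right)\right)\mathbb{E}\left[\left.U\right|X,W\right]$; Lemma \ref{lem: Wooldridge} pins $U$ down only against $1$ and $X$ within $W$-cells, so this component vanishes identically only when the conditional linear predictor is saturated (binary or mutually exclusive multi-valued $X$), which is the mechanism in \citet{Hahn_EM98}. Retreating to the submodel on which $\mathbb{E}\left[\left.U\right|X,W\right]=0$ proves a strictly weaker statement than the corollary, which is asserted for the general model of Theorem \ref{thm: SEB}. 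You should be aware, however, that the paper's own proof does not supply the missing step either: its final sentence simply asserts that the first component of (\ref{eq: EfficientInfluenceFunction_CRC}) ``plays the role of'' $s\left(\left.y\right|w,x\right)+\mathbf{c}\mathbb{S}_{\phi}\left(\left.x\right|w\right)$, which requires $v_{0}\left(W\right)^{-1}\left(X-e_{0}\left(W\right)\right)\mathbb{E}\left[\left.U\right|X,W\right]$ to lie in the finite-dimensional span of the parametric scores --- precisely the condition you could not verify, and one that fails generically when the conditional expectation of $Y$ given $\left(X,W\right)$ is not linear in $X$. So your proposal is incomplete as a proof of the stated result, but the incompleteness is located at a real soft spot in the paper's argument rather than at an idea you failed to find.
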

\begin{proof}
See the Supplemental Web Appendix.
\end{proof}
See \citet{Frolich_ER04} and \citet{Graham_Pinto_Egel_JBES16} for
additional intuition about results like Corollary \ref{cor: redundancy_of_f(x|w)}.

\subsection*{Double robustness property of the efficient influence function}

Before introducing our estimator in the next section we highlight
an important property of the efficient influence function for $\beta_{\mathrm{0}}$. 

Consider replacing $h_{0}\left(W\right)=\left(a_{0}\left(W\right),b_{0}\left(W\right)\right)$
in (\ref{eq: EfficientInfluenceFunction_CRC}) with the incorrect
conditional linear predictor coefficients $h_{*}\left(W\right)=\left(a_{*}\left(W\right),b_{*}\left(W\right)\right)$.
Use the notation $U_{*}=\left(Y-a_{*}\left(W\right)-X'b_{*}\left(W\right)\right)$
to emphasize that $U_{*}$ is the prediction error associated with
an arbitrary conditional linear predictor (which need not be the mean
squared error minimizing one). Note that $U_{*}$ will not be conditionally
mean zero or conditionally uncorrelated with $X$ (i.e., $\mathbb{E}\left[\left.U_{*}\right|W\right]\neq0$
and $\mathbb{E}\left[\left.XU_{*}\right|W\right]\neq0$). Nevertheless,
as long as $e_{0}\left(X\right)$ and $v_{0}\left(W\right)$ equal
the true conditional mean and variance of $X$ given $W$, we have
the pair of equalities, using iterated expectations,
\begin{align*}
\mathbb{E}\left[v_{0}\left(W\right)^{-1}\left(X-e_{0}\left(W\right)\right)a_{*}\left(W\right)\right]= & 0\\
\mathbb{E}\left[v_{0}\left(W\right)^{-1}\left(X-e_{0}\left(W\right)\right)X'b_{*}\left(W\right)\right]= & \mathbb{E}\left[b_{*}\left(W\right)\right]
\end{align*}
(the second equality follows from the fact that $\mathbb{E}\left[\left.\left(X-e_{0}\left(W\right)\right)X'\right|W\right]=v_{0}\left(W\right)$).

Therefore (\ref{eq: EfficientInfluenceFunction_CRC}) remains mean
zero even if the nuisance functions $h_{0}\left(W\right)=\left(a_{0}\left(W\right),b_{0}\left(W\right)\right)$
are replaced by arbitrary functions of $W$:
\begin{equation}
\mathbb{E}\left[\mathrm{\psi}_{\beta}^{\mathrm{eff}}\left(Z,\beta_{0},g_{0}\left(W\right),h_{*}\left(W\right)\right)\right]=0.\label{eq: dr_condition1}
\end{equation}
One special choice of $h_{*}(W)$ is the zero vector. This choice
directly recovers the representation of $\beta_{\mathrm{0}}$ derived
earlier (Equation (\ref{eq: generalized_ipw}) above). In moment condition
form
\[
\mathbb{E}\left[v_{0}\left(W\right)^{-1}\left(X-e_{0}\left(W\right)\right)Y-\beta_{0}\right]=0.
\]

Next consider replacing $g_{0}\left(W\right)=\left(e_{0}\left(W\right),v_{0}\left(W\right)\right)$
in (\ref{eq: EfficientInfluenceFunction_CRC}) with the incorrect
conditional mean and variance functions $g_{*}\left(W\right)=\left(e_{*}\left(W\right),v_{*}\left(W\right)\right)$.
Use the notation $U_{0}=\left(Y-a_{0}\left(W\right)-X'b_{0}\left(W\right)\right)$
to emphasize that $U_{0}$ is the prediction error associated with
the mean squared error minimizing conditional linear prediction of
$Y$ given $X$ conditional on $W$. By Lemma \ref{lem: Wooldridge}
$\mathbb{E}\left[\left.U_{0}\right|W\right]=0$ and $\mathbb{E}\left[\left.XU_{0}\right|W\right]=0$.
Therefore
\begin{align*}
\mathbb{E}\left[\mathrm{\psi}_{\beta}^{\mathrm{eff}}\left(Z,\beta_{0},g_{*}\left(W\right),h_{0}\left(W\right)\right)\right]= & \mathbb{E}\left[v_{*}\left(W\right)^{-1}\left(X-e_{*}\left(W\right)\right)\left(Y-a_{0}\left(W\right)-X'b_{0}\left(W\right)\right)\right]\\
 & +\mathbb{E}\left[\left(b_{0}\left(W\right)-\beta_{0}\right)\right]\\
= & \mathbb{E}\left[v_{*}\left(W\right)^{-1}\mathbb{E}\left[\left.XU_{0}\right|W\right]\right]-\mathbb{E}\left[v_{*}\left(W\right)^{-1}e_{*}\left(W\right)\mathbb{E}\left[\left.U_{0}\right|W\right]\right]\\
= & 0.
\end{align*}
Hence (\ref{eq: EfficientInfluenceFunction_CRC}) also remains mean
zero even if the nuisance functions $g_{0}\left(W\right)=\left(e_{0}\left(W\right),v_{0}\left(W\right)\right)$
are replaced by arbitrary functions of $W$.

Moment (\ref{eq: EfficientInfluenceFunction_CRC}) has the so-called
doubly robust property of \citet{Scharfstein_Rotnitzky_Robins_JASA99b}
\citep[cf., ][]{Ruud_JOE86}. It is mean zero as long as one of the
two nuisance functions, $g\left(W\right)$ or $h\left(W\right)$,
coincides with its population one. We exploit this property when constructing
our estimator in the next section.

\section{Estimation\label{sec: estimation}}

In this section we present a locally semiparametrically efficient
estimate of $\beta_{0}$. To motivate the precise form of our estimator
we also discuss the estimator proposed by \citet{Wooldridge_CWP04}.
A textbook presentation of this estimator is available in Chapter
21.6.3 of \citet{Wooldridge_EACSPDBook10}.

For the purposes of estimation we impose a parametric restriction
on the conditional distribution of X given W. Since the distribution
of X given W is ancillary for $\beta_{0},$ this parametric restriction
does not change the semiparametric efficiency bound (cf., Corollary
\ref{cor: redundancy_of_f(x|w)}). We call, borrowing nomenclature
from related settings \citep[e.g.,][]{Hirano_Imbens_BM04}, the resulting
model for $X$ the \emph{generalized propensity score}.
\begin{assumption}
\label{ass: genalize_p_score}\textsc{(Generalized Propensity Score)}
$f\left(\left.x\right|w;\phi\right)$ is a parametric family of densities
indexed by $\phi\in\Phi\subset\mathbb{R}^{L}$ with (i) $f_{0}\left(\left.x\right|w\right)=\mbox{f\ensuremath{\left(\left.x\right|w;\phi_{0}\right)}}$
at some unique $\phi_{0}\in\mathrm{int}\left(\Phi\right)$, (ii) a
maximum likelihood estimate (MLE) of $\phi_{0}$ equal to 
\[
\hat{\phi}=\arg\underset{\phi\in\Phi}{\max}\sum_{i=1}^{N}\ln f\left(\left.X_{i}\right|W_{i};\phi\right)
\]
with a score vectors of $\mathbb{S}_{\phi}\left(\left.X\right|W;\phi\right)=\nabla_{\phi}f\left(\left.X\right|W;\phi\right)/f\left(\left.X\right|W;\phi\right)$,
(iii) $\hat{\phi}\stackrel{p}{\rightarrow}\phi_{0}$ with $\mathbb{E}\left[\mathbb{S}_{i}\mathbb{S}_{i}'\right]$
non-singular and the asymptotically linear representation 
\begin{equation}
\sqrt{N}\left(\hat{\phi}-\phi_{0}\right)=\mathbb{E}\left[\mathbb{S}_{i}\mathbb{S}_{i}'\right]^{-1}\frac{1}{\sqrt{N}}\sum_{i=1}^{N}\mathbb{S}_{i}+o_{p}\left(1\right)\label{eq: phi_hat_asym_lin}
\end{equation}
where $\mathbb{S}_{i}=\mathbb{S}_{\phi}\left(\left.X_{i}\right|W_{i};\phi_{0}\right).$
\end{assumption}
Assumption \ref{ass: genalize_p_score} corresponds to a parametric
model for the propensity score when $X$ is a binary treatment indicator.
More generally Assumption \ref{ass: genalize_p_score} requires the
researcher to model the distribution of the policy given controls.
Consider a researcher interested in the relationship between regular
school attendance and student achievement. In this case $Y$ could
be a measure of end-of-school-year achievement, $X$ number of school
days absent, and $W$ a vector of joint determinants of achievement
and attendance (e.g., family background measures, prior achievement,
pre-existing health conditions etc.). In this case the researcher
might assume that the distribution of $X$ given $W$ is Poisson with
\[
\mathbb{E}\left[\left.X\right|W\right]=\exp\left(k\left(W\right)'\phi_{0}\right),\ \mathbb{V}\left(\left.X\right|W\right)=\exp\left(k\left(W\right)'\phi_{0}\right),
\]
where $k\left(W\right)$ is a known $L\times1$ vector of functions
of $W$. Estimation of $\phi_{0}$ , and hence $e\left(W;\phi_{0}\right)$
and $v\left(W;\phi_{0}\right)$, is by maximum likelihood. In most
cases the conditional distribution of $X$ given $W$ can be conveniently
modeled by, depending on the nature of $X$, the appropriate generalized
linear model (GLM). When $X$ is multivariate, the outcome of censoring,
or has mixed discrete/continuous components, then specifying $f\left(\left.x\right|w;\phi\right)$
may involve considerable work. For complicated likelihoods $e\left(W;\hat{\phi}\right)$
and $v\left(W;\hat{\phi}\right)$ may need to be approximated numerically
or by simulation.

\subsection*{The \citet{Wooldridge_CWP04} estimator}

\citet{Wooldridge_CWP04} introduced a two-step estimator for $\beta_{0}$.
A textbook exposition appears in \citet[Chapter 21.6.3 ]{Wooldridge_EACSPDBook10}.
His procedure is summarized in Algorithm \ref{alg: Wooldridge}.

\begin{algorithm}
\caption{\textsc{The \citet{Wooldridge_CWP04} Estimate of $\beta_{0}$\label{alg: Wooldridge} }}

\begin{enumerate}
\item Compute the maximum likelihood estimate of $\phi_{0}$ and construct
$e\left(W_{i},\hat{\phi}\right)$ and $v\left(W_{i},\hat{\phi}\right)$
for $i=1,\ldots,N$;
\item Compute linear instrumental variables fit of $Y$ onto $X$ (with
no constant) using $v\left(W;\hat{\phi}\right)^{-1}\left(X-e\left(W;\hat{\phi}\right)\right)$
as the instrument for $X$. The coefficient on $X$ equals $\hat{\beta}$.
\end{enumerate}
\end{algorithm}

\citet{Wooldridge_CWP04} does not characterize the asymptotic sampling
properties of $\hat{\beta}_{W}$. In this section, we show that Wooldridge's
estimator is not efficient under Assumptions \ref{ass: random_sampling},
\ref{ass: overlap} and \ref{ass: genalize_p_score}. Furthermore
it requires the generalized propensity score to be correctly specified.
The structure of this inefficiency and lack of robustness, as well
as the form of the efficient influence function derived earlier, guides
the construction of our new, locally efficient and doubly robust estimator.

The second step of Algorithm \ref{alg: Wooldridge} corresponds to
finding the $\hat{\beta}_{\mathrm{W}}$ which solves the sample moment
\begin{equation}
\frac{1}{N}\sum_{i=1}^{N}\rho\left(Z_{i},\hat{\phi},\hat{\beta}_{\mathrm{W}}\right)=0,\label{eq: wooldridge_moment}
\end{equation}
for $\rho\left(Z,\phi,\beta\right)=v\left(W;\phi\right)^{-1}\left(X-e\left(W;\phi\right)\right)\left(Y-X'\beta\right).$
Here $\hat{\phi}$ corresponds to the MLE of $\phi_{0}$ computed
in the first step of the procedure. A mean value expansion of (\ref{eq: wooldridge_moment})
in $\hat{\beta}_{W}$ about $\beta_{0}$ yields
\begin{eqnarray*}
\hat{\beta}_{\mathrm{W}} & = & \beta_{0}+\frac{1}{N}\sum_{i=1}^{N}\rho\left(Z,\hat{\phi},\beta_{\mathrm{0}}\right)+o_{p}(N^{-1/2}).
\end{eqnarray*}
Rearrangement of terms and a second mean value expansion in $\hat{\phi}$
about $\phi_{0}$ gives
\begin{eqnarray*}
\sqrt{N}\left(\hat{\beta}_{\mathrm{W}}-\beta_{0}\right) & = & \frac{1}{\sqrt{N}}\sum_{i=1}^{N}\rho\left(Z,\phi_{0},\beta_{\mathrm{0}}\right)\\
 &  & +\left\{ \frac{1}{N}\sum_{i=1}^{N}\frac{\partial\rho\left(Z,\bar{\phi},\beta_{0}\right)}{\partial\phi}\right\} \sqrt{N}\left(\hat{\phi}-\phi_{0}\right)+o_{p}\left(1\right).
\end{eqnarray*}
Observe that under Assumptions \ref{ass: random_sampling} and \ref{ass: overlap}
\begin{eqnarray*}
\mathbb{E}\left[\left.\rho\left(Z,\phi_{0},\beta_{0}\right)\right|W=w\right] & = & \mathbb{E}\left[\left.v\left(W;\phi_{0}\right)^{-1}\left(X-e\left(W;\phi_{0}\right)\right)\left(Y-X'\beta_{0}\right)\right|W=w\right]\\
 & = & b_{0}\left(w\right)-\beta_{0}
\end{eqnarray*}
since $\mathbb{E}\left[\left.v\left(W;\phi_{0}\right)^{-1}\left(X-e\left(W;\phi_{0}\right)\right)X'\right|W=w\right]=I_{K}$.
In integral form: 
\begin{equation}
\int\rho\left(z,\phi_{0},\beta_{0}\right)f_{0}\left(\left.y\right|w,x\right)f\left(\left.x\right|w;\phi_{0}\right)\mathrm{d}x\mathrm{d}y=b_{0}\left(w\right)-\beta_{0}.\label{eq: conditional_mean_of_rho}
\end{equation}
Differentiating (\ref{eq: conditional_mean_of_rho}) through the integral
with respect to $\phi$ gives:
\begin{equation}
\mathbb{E}\left[\left.\frac{\partial\rho\left(Z,\phi_{0},\beta_{0}\right)}{\partial\phi}\right|W=w\right]=-\mathbb{E}\left[\left.\rho\left(Z,\phi_{0},\beta_{0}\right)\mathbb{S}'\right|W=w\right],\label{eq: GIME}
\end{equation}
which is a Generalized Information Matrix Equality (GIME) result \citep[e.g., ][p. 104]{Newey_JAE90}.

Using (\ref{eq: phi_hat_asym_lin}) and (\ref{eq: GIME}) we have
\begin{eqnarray}
\sqrt{N}\left(\hat{\beta}_{\mathrm{W}}-\beta_{0}\right) & = & \frac{1}{\sqrt{N}}\sum_{i=1}^{N}\rho_{i}\nonumber \\
 &  & -\mathbb{E}\left[\rho\mathbb{S}'\right]\mathbb{E}\left[\mathbb{S}\mathbb{S}'\right]^{-1}\frac{1}{\sqrt{N}}\sum_{i=1}^{N}\mathbb{S}_{i}+o_{p}\left(1\right)\nonumber \\
 & = & \frac{1}{\sqrt{N}}\sum_{i=1}^{N}\left\{ \rho_{i}-\mathbb{E}\left[\rho\mathbb{S}'\right]\mathbb{E}\left[\mathbb{S}\mathbb{S}'\right]^{-1}\mathbb{S}_{i}\right\} +o_{p}\left(1\right)\label{eq: Wooldridge_Influence}
\end{eqnarray}
for $\rho_{i}=\rho\left(Z_{i},\phi_{0},\beta_{0}\right).$

Similar to the result of \citet{Wooldridge_JE07} for the binary $X$
case, this asymptotically linear representation of $\hat{\beta}_{\mathrm{W}}$
implies that if practitioners ignore sampling error in $\hat{\phi}$,
they can get conservative confidence intervals. In addition, this
expression shows that over-parameterizing the conditional distribution
of $X$ given $W$ will not decrease the asymptotic precision $\hat{\beta}_{\mathrm{W}}$.

We show next that $\hat{\beta}_{\mathrm{W}}$ is inefficient for $\beta_{0}$
in the semiparametric model defined by Assumptions \ref{ass: random_sampling},
\ref{ass: overlap} and \ref{ass: genalize_p_score}. This demonstration
of inefficiency usefully provides insight into how to construct a
more efficient estimator. We begin by decomposing Wooldridge's \citeyearpar{Wooldridge_CWP04}
identifying moment into the efficient influence function and a remainder:
$\rho\left(Z,\phi_{0},\beta_{0}\right)=\mathrm{\psi}_{\beta}^{\mathrm{eff}}\left(Z,\beta_{0},\phi_{0},h_{0}\left(W\right)\right)+r\left(W,X,\beta_{0},\phi_{0},h_{0}\left(W\right)\right)$
with
\begin{align}
r\left(W,X,\beta_{0},\phi_{0},h_{0}\left(W\right)\right)= & v\left(W;\phi_{0}\right)^{-1}\left(X-e\left(W;\phi_{0}\right)\right)\left(a_{0}\left(W\right)+X'\left(b_{0}\left(W\right)-\beta_{0}\right)\right)\label{eq: r(w,x)}\\
 & -\left(b_{0}\left(W\right)-\beta_{0}\right)\nonumber 
\end{align}
Let $r_{0}\left(W,X\right)=r\left(W,X,\beta_{0},\phi_{0},h_{0}\left(W\right)\right).$
Note that $\mathbb{E}\left[\left.r_{0}\left(W,X\right)\right|W\right]=0.$
Note further that $\mathbb{S}$ is also conditionally mean zero given
$W$.

Now observe that for $l=1,\ldots,\dim\left(\phi\right)$
\begin{eqnarray*}
\frac{\partial\mathrm{\psi}_{\beta}^{\mathrm{eff}}}{\partial\phi_{l}} & = & -v\left(W;\phi_{0}\right)^{-1}\frac{\partial v\left(W;\phi_{0}\right)}{\partial\phi_{l}}v\left(W;\phi_{0}\right)^{-1}\left(X-e\left(W;\phi_{0}\right)\right)U\\
 &  & -v\left(W;\phi_{0}\right)^{-1}\frac{\partial e\left(W;\phi_{0}\right)}{\partial\phi_{l}}U,
\end{eqnarray*}
and hence that
\begin{eqnarray}
\mathbb{E}\left[\left.\frac{\partial\mathrm{\psi}_{\beta}^{\mathrm{eff}}}{\partial\phi_{l}}\right|W\right] & = & -v\left(W;\phi_{0}\right)^{-1}\frac{\partial v\left(W;\phi_{0}\right)}{\partial\phi_{l}}v\left(W;\phi_{0}\right)^{-1}\mathbb{E}\left[\left.\left(X-e\left(W;\phi_{0}\right)\right)U\right|W\right]\label{eq: eff_infl_grad_cond}\\
 &  & -v\left(W;\phi_{0}\right)^{-1}\frac{\partial e\left(W;\phi_{0}\right)}{\partial\phi_{l}}\mathbb{E}\left[\left.U\right|W\right]\nonumber \\
 & = & 0\nonumber 
\end{eqnarray}
by Lemma \ref{lem: Wooldridge} above.

Next start with the fact that
\[
\int\mathrm{\psi}_{\beta}^{\mathrm{eff}}f_{0}\left(\left.y\right|x,w\right)f\left(\left.x\right|w;\phi_{0}\right)f_{0}\left(w\right)=0.
\]
Differentiating through the integral gives the equality
\[
\int\frac{\partial\mathrm{\psi}_{\beta}^{\mathrm{eff}}}{\partial\phi'}f_{0}\left(\left.y\right|x,w\right)f\left(\left.x\right|w;\phi_{0}\right)f_{0}\left(w\right)=-\int\left\{ \mathrm{\psi}_{\beta}^{\mathrm{eff}}\mathbb{S}'\right\} f_{0}\left(\left.y\right|x,w\right)f\left(\left.x\right|w;\phi_{0}\right)f_{0}\left(w\right)
\]
and hence that, using the decomposition of $\rho\left(Z,\phi_{0},\beta_{0}\right)$
introduced above and equation (\ref{eq: eff_infl_grad_cond}),
\[
\mathbb{E}\left[\rho\mathbb{S}'\right]=\mathbb{E}\left[\psi_{\beta}^{\mathrm{eff}}\mathbb{S}'\right]+\mathbb{E}\left[r\mathbb{S}'\right]=\mathbb{E}\left[r\mathbb{S}'\right].
\]
Plugging this into our influence function we get
\begin{eqnarray*}
\sqrt{N}\left(\hat{\beta}_{\mathrm{W}}-\beta_{0}\right) & = & \frac{1}{\sqrt{N}}\sum_{i=1}^{N}\left\{ \rho_{i}-\mathbb{E}\left[\rho\mathbb{S}'\right]\mathbb{E}\left[\mathbb{S}\mathbb{S}'\right]^{-1}\mathbb{S}_{i}\right\} +o_{p}\left(1\right)\\
 & = & \frac{1}{\sqrt{N}}\sum_{i=1}^{N}\left\{ \psi_{\beta,i}^{\mathrm{eff}}+\left[r_{i}-\mathbb{E}\left[r\mathbb{S}'\right]\mathbb{E}\left[\mathbb{S}\mathbb{S}'\right]^{-1}\mathbb{S}_{i}\right]\right\} +o_{p}\left(1\right),
\end{eqnarray*}
and hence an asymptotic distribution of
\begin{equation}
\sqrt{N}\left(\hat{\beta}_{\mathrm{W}}-\beta_{0}\right)\stackrel{D}{\rightarrow}\mathcal{N}\left(0,\mathcal{I}\left(\beta_{0}\right)^{-1}+\mathbb{E}\left[\left(r-\Pi_{r\mathbb{S}}\mathbb{S}\right)\left(r-\Pi_{r\mathbb{S}}\mathbb{S}\right)'\right]\right)\label{eq: wooldridge_avar}
\end{equation}
with $\Pi_{r\mathbb{S}}=\mathbb{E}\left[r\mathbb{S}'\right]\times\mathbb{E}\left[\mathbb{S}\mathbb{S}'\right]^{-1}.$ 

The form of the the limit distribution (\ref{eq: wooldridge_avar})
is similar to that of the familiar inverse probability weighting (IPW)
estimator for binary treatments \citep[e.g.,][Proposition 3.1]{Graham_Pinto_Egel_ReStud12}.
In that context it is well-known that replacing a known propensity
score with an estimated one increases precision \citep{Hirano_et_al_EM03,Hitomi_et_al_ET08,Graham_EM11}.
In principle the degree of precision increase is increasing in the
complexity/richness of the fitted propensity score model. Expression
(\ref{eq: wooldridge_avar}) indicates that a similar phenomena operates
in our setting. If the portion of the efficient influence function
that is omitted by the \citet{Wooldridge_CWP04} procedure is well-approximated
by a linear combination of the scores used to estimate the propensity
score, then the $\hat{\beta}_{\mathrm{W}}$ will be precisely determined.
In practice, instead of relying on a possibly overfitted propensity
score to yield efficient estimates, it is better to redesign the estimation
procedure with efficiency in mind at the outset. 

\subsection*{A locally efficient, doubly robust estimator}

Our estimator for $\beta_{0}$ requires a working parametric model
for the CLP coefficients $a_{0}\left(W\right)$ and $b_{0}\left(W\right)$.
Consistency and asymptotic normality of our estimate, $\hat{\beta}$,
will not depend on the correctness of this working model, but its
limiting variance will. A convenient working model is provided by
Assumption \ref{ass: clp_coefficients}.
\begin{assumption}
\label{ass: clp_coefficients}\textsc{(CLP Coefficients):} $a_{0}\left(W\right)=\alpha_{0}+\left(W-\mu_{W}\right)'\gamma_{0}$
and $b_{0}\left(W\right)=\beta_{0}+\Delta_{0}\left(W-\mu_{W}\right)$.
\end{assumption}
In practice these models for $a_{0}\left(W\right)$ and $b_{0}\left(W\right)$
can be made arbitrarily flexible since $W$ can include a rich set
of basis functions (e.g., squares, cross-products etc.) in the underlying
controls.

Under Assumption \ref{ass: clp_coefficients} we have that
\begin{eqnarray}
\mathbb{E}^{*}\left[\left.Y\right|X;W\right] & = & \alpha_{0}+\left(W-\mu_{W}\right)'\gamma_{0}+X'\left(\beta_{0}+\Delta_{0}\left(W-\mu_{W}\right)\right)\nonumber \\
 & = & \alpha_{0}+\left(W-\mu_{W}\right)'\gamma_{0}+\left(\left(W-\mu_{W}\right)\varotimes X\right)'\delta_{0}+X'\beta_{0},\label{eq: oaxaca-blinder}
\end{eqnarray}
where $\delta_{0}=\mathrm{vec}\left(\Delta_{0}\right).$ 

Equation (\ref{eq: oaxaca-blinder}) implies that, maintaining Assumption
\ref{ass: clp_coefficients}, one approach to estimating $\beta_{0}$
would be to compute the least squares fit of $Y_{i}$ onto a constant,
$W_{i}-\mu_{W}$, all interactions of $W_{i}-\mu_{i}$ and $X_{i}$,
and $X_{i}$ itself. For the special case where $X$ is a binary treatment
indicator, this estimator is familiar to labor economists as a Oaxaca-Blinder
average treatment effect (ATE) estimator \citep[e.g.,][]{Sloczynski_OBES15}.\footnote{In this literature researchers typically center $W$ around $\mathbb{E}\left[\left.W\right|X=1\right]$,
not the unconditional mean $\mu_{W}=\mathbb{E}\left[W\right]$ as
is done here. With this alternative centering the coefficient on $X_{i}$
will coincide with the average treatment effect on the treated (ATT).} Consistency of of this estimator hinges upon Assumption \ref{ass: clp_coefficients}
accurately characterizing the sampled population. 

In our setting Assumption \ref{ass: clp_coefficients} plays a different
role. Unlike in the Oaxaca-Blinder procedure, its validity is not
required for consistency, but if it does accurately described the
sampled population our estimator will be highly efficient. These benefits
come at the cost of assuming that prior knowledge regarding the form
of the generalized propensity score is available (i.e., maintaining
Assumption \ref{ass: genalize_p_score}).

To describe our procedure we require some additional notation. Let
$\lambda=\left(\alpha,\gamma',\delta'\right)'$, $R\left(\mu_{W}\right)=\left(1,\left(W-\mu_{W}\right)',\left(\left(W_{i}-\mu_{W}\right)\varotimes X_{i}\right)'\right)'$
and
\[
U_{i}\left(\mu_{W},\lambda,\beta\right)=\left(Y_{i}-R\left(\mu_{W}\right)'\lambda-X_{i}'\beta\right).
\]
When $R\left(\mu_{W}\right)$ is evaluated at the correct population
mean of $W$, we often simply write $R$.

Our estimator is based upon the $\left(L+J+1+J+JK+K\right)\times1$
vector of moment conditions, $m\left(Z_{i},\theta\right)$, partitioned
into the three ordered sub-vectors:
\begin{align}
\underset{L\times1}{m_{1}(X_{i},W_{i},\phi)}= & \mathbb{S}_{\phi}\left(\left.X_{i}\right|W_{i};\phi\right)\label{eq: m1_p1}\\
\underset{J\times1}{m_{2}(W_{i},\mu_{W})}= & W_{i}-\mu_{W}\label{eq: m2_p1}\\
\underset{1+J+JK+K\times1}{m_{3}(Z_{i},\phi,\mu_{W},\lambda,\beta)}= & \left(\begin{array}{c}
R_{i}\left(\mu_{W}\right)\\
v\left(W;\phi\right)^{-1}\left(X-e\left(W;\phi\right)\right)
\end{array}\right)U_{i}\left(\mu_{W},\lambda,\beta\right)\label{eq: m3_p1}
\end{align}
where $\theta=\left(\phi,\mu_{W},\lambda',\beta\right)'$ with $\dim\left(\theta\right)=L+J+1+J+JK+K$. 

Equations (\ref{eq: m1_p1}), (\ref{eq: m2_p1}) and (\ref{eq: m3_p1})
constitute a just-identified system. The corresponding method-of-moments
estimate of $\beta_{0}$ can be computed in the three simple steps
listed in Algorithm \ref{alg: Estimation}.

\begin{algorithm}

\caption{\label{alg: Estimation}\textsc{Locally Efficient and Doubly Robust
Estimation Of $\beta_{0}$}}

\begin{enumerate}
\item Compute the maximum likelihood estimate of $\phi_{0}$ and construct
$e\left(W_{i},\hat{\phi}\right)$ and $v\left(W_{i},\hat{\phi}\right)$
for $i=1,\ldots,N$;
\item Compute the sample mean $\hat{\mu}_{W}=\frac{1}{N}\sum_{i=1}^{N}W_{i}$
and construct $R_{i}\left(\hat{\mu}_{W}\right)$ for $i=1,\ldots,N$;
\item Compute the linear instrumental variables fit of $Y_{i}$ onto $R_{i}\left(\hat{\mu}_{W}\right)$
and $X_{i}$ using $v\left(W_{i};\hat{\phi}\right)^{-1}\left(X_{i}-e\left(W_{i};\hat{\phi}\right)\right)$
as the excluded instrument for $X_{i}$. The coefficient on $X_{i}$
in this fit coincides with $\hat{\beta}$.
\end{enumerate}
\end{algorithm}

In many cases of interest Algorithm \ref{alg: Estimation} is easily
implemented using standard software. Standard errors may be constructed
in the usual way for GMM estimators \citep[e.g.,][]{Newey_McFadden_HBE94,Wooldridge_EACSPDBook10}
or using a bootstrap. 

In step 3, if instead we let $X_{i}$ serve as its own instrument,
we get an ``Oaxaca-Blinder'' type estimator.

The next theorem summarizes the large sample properties of $\hat{\beta}$.
In the statement of the Theorem, $\Delta_{*}$ denotes the limiting
pseudo-true value of $\hat{\Delta}$. If Assumption \ref{ass: clp_coefficients}
additionally holds then $\Delta_{*}=\Delta_{0}$. We also define $\tilde{\epsilon}=v\left(W\right)_{0}^{-1}\left(X-e_{0}\left(W\right)\right)\epsilon$
where $\epsilon=\left\{ a_{0}\left(W\right)+X'\left(b_{0}\left(W\right)-\beta_{0}\right)-R'\lambda_{*}\right\} $
(with $\lambda_{*}$ denoting a pseudo-true parameter value). Finally
we let $\Pi_{\tilde{\epsilon}\mathbb{S}}=\mathbb{E}\left[\tilde{\epsilon}_{i}\mathbb{S}'\right]\mathbb{E}\left[\mathbb{S}\mathbb{S}'\right]^{-1}$
denote the coefficient matrix associated with the multivariate regression
of $\tilde{\epsilon}$ onto the score vector associated with $\phi_{0}$
(the parameter indexing the generalized propensity score).
\begin{thm}
\textsc{(Large Sample Distribution)}\label{thm: Large-Sample} Consider
the semiparametric problem established by Definition \ref{def: clp}
and Assumptions \ref{ass: random_sampling}, \ref{ass: overlap},
and \ref{ass: genalize_p_score}. Let $\hat{\beta}$ be the method
of moments estimate of $\beta_{0}$ based upon restrictions (\ref{eq: m1_p1})
to (\ref{eq: m3_p1}). Under regularity conditions \citep[cf.,][Theorem 3.4]{Newey_McFadden_HBE94}
$\hat{\beta}$ is (i) asymptotically normal with a limiting distribution
of 
\begin{equation}
\sqrt{N}\left(\hat{\beta}-\beta_{0}\right)\overset{D}{\rightarrow}\mathcal{N}\left(0,\mathbb{E}\left[\Omega_{0}\left(W\right)\right]+\Delta_{*}\mathbb{V}\left(W\right)\Delta_{*}'+\mathbb{E}\left[\left(\tilde{\epsilon}-\Pi_{\tilde{\epsilon}\mathbb{S}}\mathbb{S}\right)\left(\tilde{\epsilon}-\Pi_{\tilde{\epsilon}\mathbb{S}}\mathbb{S}\right)'\right]\right),\label{eq: large_sample_basic}
\end{equation}
and (ii) locally efficient for $\beta_{0}$ at Assumption \ref{ass: clp_coefficients}
with 
\begin{equation}
\sqrt{N}\left(\hat{\beta}-\beta_{0}\right)\overset{D}{\rightarrow}\mathcal{N}\left(0,\mathcal{I}(\beta_{0})^{-1}\right).\label{eq: large_sample_loceff}
\end{equation}
\end{thm}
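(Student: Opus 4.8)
The plan is to treat $\hat{\beta}$ as the $\beta$-component of the just-identified method-of-moments estimator $\hat{\theta}=(\hat{\phi},\hat{\mu}_{W},\hat{\lambda}',\hat{\beta})'$ solving $\frac{1}{N}\sum_{i}m(Z_{i},\hat{\theta})=0$, with $m=(m_{1}',m_{2}',m_{3}')'$ the stacked restrictions (\ref{eq: m1_p1})--(\ref{eq: m3_p1}). Under the cited regularity conditions \citep[Theorem 3.4]{Newey_McFadden_HBE94}, $\hat{\theta}\overset{p}{\to}\theta_{*}$, the unique solution of $\mathbb{E}[m(Z,\theta_{*})]=0$ whose $\beta$-component is $\beta_{0}$ by identification result (\ref{eq: generalized_ipw}), and $\hat{\theta}$ admits the asymptotically linear representation $\sqrt{N}(\hat{\theta}-\theta_{*})=-M^{-1}\frac{1}{\sqrt{N}}\sum_{i}m(Z_{i},\theta_{*})+o_{p}(1)$ with $M=\mathbb{E}[\partial m(Z,\theta_{*})/\partial\theta']$. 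The limiting variance of $\hat{\beta}$ is then the $\beta$-block of $M^{-1}\mathbb{E}[mm'](M^{-1})'$, so the entire proof reduces to (a) computing $M$ and (b) simplifying this sandwich into the three-term expression in (\ref{eq: large_sample_basic}).

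The first structural observation is that $M$ is block lower-triangular in the ordering $(\phi,\mu_{W},(\lambda,\beta))$, since $m_{1}$ depends only on $\phi$ and $m_{2}$ only on $\mu_{W}$. Consequently $\sqrt{N}(\hat{\phi}-\phi_{0})$ collapses to the MLE representation (\ref{eq: phi_hat_asym_lin}) via the information-matrix equality $\partial\mathbb{E}[m_{1}]/\partial\phi=-\mathbb{E}[\mathbb{S}\mathbb{S}']$, and $\sqrt{N}(\hat{\mu}_{W}-\mu_{W})$ to the sample mean of $W-\mu_{W}$. The remaining block yields $\sqrt{N}((\hat{\lambda}-\lambda_{*})',(\hat{\beta}-\beta_{0})')'$ as a linear combination of $m_{3}$ evaluated at the truth, corrected for the estimation of $\phi$ and $\mu_{W}$ through the cross-derivatives $\partial m_{3}/\partial\phi$ and $\partial m_{3}/\partial\mu_{W}$. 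I would compute these explicitly; the $\phi$-derivative is handled exactly as in the analysis of $\hat{\beta}_{\mathrm{W}}$ preceding the theorem, i.e., through the generalized information matrix equality (\ref{eq: GIME}) together with the conditional-mean-zero identity (\ref{eq: eff_infl_grad_cond}).

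The heart of the argument is then the decomposition of the instrumented $\beta$-moment. Writing the prediction error at the pseudo-truth as $U=\epsilon+U_{0}$, where $U_{0}$ is the true CLP residual (so $\mathbb{E}[U_{0}\mid W]=\mathbb{E}[XU_{0}\mid W]=0$ by Lemma \ref{lem: Wooldridge}) and $\epsilon=a_{0}(W)+X'(b_{0}(W)-\beta_{0})-R'\lambda_{*}$, the moment splits as $v_{0}(W)^{-1}(X-e_{0}(W))U=v_{0}(W)^{-1}(X-e_{0}(W))U_{0}+\tilde{\epsilon}$, the first term coinciding with the leading part of the efficient influence function (\ref{eq: EfficientInfluenceFunction_CRC}). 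After partialling $\hat{\lambda}$ out of the design, the $W$-only contribution reconstructs $\Delta_{*}(W-\mu_{W})$, the model-implied analogue of the $b_{0}(W)-\beta_{0}$ term in $\psi_{\beta}^{\mathrm{eff}}$. For the $\phi$-correction I would invoke the orthogonality $\mathbb{E}[\psi_{\beta}^{\mathrm{eff}}\mathbb{S}']=0$ already established in the double-robustness discussion, together with $\mathbb{E}[\mathbb{S}\mid W]=0$; these imply that the efficient part of the moment is orthogonal to the score, so that only $\tilde{\epsilon}$ inherits the generated-regressor adjustment and enters as $\tilde{\epsilon}-\Pi_{\tilde{\epsilon}\mathbb{S}}\mathbb{S}$. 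Collecting the three contributions and forming their variance delivers (\ref{eq: large_sample_basic}). I expect the main obstacle to be precisely this bookkeeping: carrying the $\hat{\phi}$ and $\hat{\mu}_{W}$ corrections through the block inversion and verifying, via the conditional moment restrictions of Lemma \ref{lem: Wooldridge} and $\mathbb{E}[\mathbb{S}\mid W]=0$, that the cross-terms among the three pieces either vanish or recombine, so that the sandwich collapses to the stated additive form.

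Part (ii) then follows by specializing to Assumption \ref{ass: clp_coefficients}. When the working model is correctly specified, $\Delta_{*}=\Delta_{0}$ and $R'\lambda_{*}$ reproduces $a_{0}(W)+X'(b_{0}(W)-\beta_{0})$ exactly, so $\epsilon=0$ and hence $\tilde{\epsilon}=0$, eliminating the third term of (\ref{eq: large_sample_basic}). Since $b_{0}(W)-\beta_{0}=\Delta_{0}(W-\mu_{W})$ under this assumption, $\Delta_{*}\mathbb{V}(W)\Delta_{*}'=\Delta_{0}\mathbb{V}(W)\Delta_{0}'=\mathbb{V}(b_{0}(W))$, and the limiting variance reduces to $\mathbb{E}[\Omega_{0}(W)]+\mathbb{V}(b_{0}(W))=\mathcal{I}(\beta_{0})^{-1}$ by Theorem \ref{thm: SEB}, which is (\ref{eq: large_sample_loceff}).
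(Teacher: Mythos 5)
Your proposal follows essentially the same route as the paper's proof: a standard just-identified GMM sandwich with the block-triangular Jacobian, the generalized information matrix equality to absorb the $\hat{\phi}$ correction, the decomposition $U_{*}=U_{0}+\epsilon$ so that the efficient part $v_{0}(W)^{-1}(X-e_{0}(W))U_{0}$ is orthogonal to $\mathbb{S}$ and only $\tilde{\epsilon}$ carries the score projection, the $\Delta_{*}(W-\mu_{W})$ term arising from the $\hat{\mu}_{W}$ correction (the paper's $B_{2}=\Delta_{*}$ computation), and the specialization $\epsilon=0$, $\Delta_{*}\mathbb{V}(W)\Delta_{*}'=\mathbb{V}(b_{0}(W))$ for part (ii). The plan is correct and the bookkeeping you flag as the main obstacle is exactly what the paper's appendix carries out explicitly.
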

\begin{proof}
See Appendix \ref{appendix: proofs}.
\end{proof}
Part (ii) of Theorem (\ref{thm: Large-Sample}) follows easily from
part (i). In the proof we show that $\epsilon$ equals the prediction
error associated with the mean squared error minimizing linear prediction
of $a_{0}\left(W\right)+X'\left(b_{0}\left(W\right)-\beta_{0}\right)$
given $R\left(\mu_{W}\right)$. When Assumption \ref{ass: clp_coefficients}
additional holds this prediction error will be identically equal to
zero and the third term in the variance expressing appearing in part
(i) drops out. Similarly when Assumption \ref{ass: clp_coefficients}
holds we have $\Delta_{*}\mathbb{V}\left(W\right)\Delta_{*}'=\mathbb{V}\left(b_{0}\left(W\right)\right)$.
Together these two observations give part (ii).

Our efficiency bound calculation, Theorem \ref{thm: SEB}, gives the
information bound for $\beta_{0}$ without imposing the additional
auxiliary Assumption \ref{ass: clp_coefficients}. This assumption
imposes restrictions on the joint distribution of the data not implied
by the baseline model. If these restrictions are added to the prior
used to calculate the efficiency bound, then it will generally be
possible to estimate $\beta_{0}$ more precisely. Our estimator is
not efficient with respect to this augmented model. Rather it attains
the bound provided by Theorem \ref{thm: SEB} if Assumption \ref{ass: clp_coefficients}
``happens to be true'' in the sampled population, but is not part
of the prior restriction used to calculate the bound. \citet[p. 114]{Newey_JAE90}
discusses the concept of local efficiency in detail. In what follows
we will, for brevity, say $\hat{\beta}$ is locally efficient at Assumption
\ref{ass: clp_coefficients}.

Even if Assumption \ref{ass: clp_coefficients} does not hold precisely,
our procedure will be ``nearly'' efficient when it is approximately
true (in which case variability in $\epsilon$ about zero is small).
A caveat to this claim is that the third variance term in (\ref{eq: large_sample_basic})
may still be large in this case if $v_{0}\left(w\right)$ is nearly
zero for enough values of $w$. This occurs when overlap is poor,
or there exists a lack of variation in the policy variable for some
subpopulations defined in terms of $W=w$. \citet{Graham_Pinto_Egel_JBES16}
develop this observation more extensively for the special case where
$X$ is binary, but similar issues apply in the more general setting
considered here.

Our next result formalizes the above observation. It extends our local
efficiency result to ``near'' global efficiency. The basic argument
mirrors that given by \citet[Proposition 2]{Chamberlain_JE87} for
approximately efficient estimation of conditional moment problems.
Presenting this result requires defining a sequence of estimators
based upon Algorithm \ref{alg: Estimation}.

Let $\mathcal{L}_{2}$ be the space of functions $f\thinspace:\thinspace\mathbb{W}\rightarrow\mathbb{R}$
with finite second moment $\mathbb{E}\left[f\left(W\right)^{2}\right]<\infty$.
Under Assumptions \ref{ass: random_sampling} and \ref{ass: overlap}
the set of feasible conditional linear predictor coefficients lies
within this space such that $a\thinspace:\thinspace W\rightarrow\mathbb{R}^{1}$
and $b\thinspace:\thinspace W\rightarrow\mathbb{R}^{K}$ with $\mathbb{E}\left[a\left(W\right)^{2}\right]<\infty$
and $\mathbb{E}\left[\left\Vert b\left(W\right)\right\Vert ^{2}\right]<\infty.$
Let $\left\{ k_{j}\left(W\right)\right\} _{j=1}^{\infty}$ be a sequence
of linearly independent functions of the control variables, each with
finite variance. Similar to \citet{Chamberlain_JE87} we call this
sequence complete if, (i) for any $\zeta>0$ and (ii) any feasible
conditional linear predictor coefficients $a\left(W\right)$ and $b\left(W\right)$
in $\mathcal{L}_{2}$, there are the real numbers $\alpha,\gamma_{1},\ldots,\gamma_{J}$
and $\delta_{k1},\ldots,\delta_{kJ}$ for $k=1,\ldots,K$ such that
\begin{equation}
\mathbb{E}\left[\left\Vert \delta^{\left(J\right)}\left(W\right)\right\Vert ^{2}\right]<\zeta^{2},\label{eq: L2-complete}
\end{equation}
with $\delta^{\left(J\right)}\left(W\right)$ defined as
\begin{equation}
\delta^{\left(J\right)}\left(W\right)=\left(\begin{array}{c}
a\left(W\right)-\alpha-\sum_{j=1}^{J}\left(k_{j}\left(W\right)-\mu_{j}\right)\gamma_{j}\\
b_{1}\left(W\right)-\beta_{01}-\sum_{j=1}^{J}\left(k_{j}\left(W\right)-\mu_{j}\right)\delta_{1j}\\
\vdots\\
b_{K}\left(W\right)-\beta_{0K}-\sum_{j=1}^{J}\left(k_{j}\left(W\right)-\mu_{j}\right)\delta_{Kj}
\end{array}\right).\label{eq: delta(W)_J}
\end{equation}

Let $k^{\left(J\right)}\left(W\right)$ be the $J\times1$ vector
of functions of $W$ with $j^{th}$ element $k_{j}\left(W\right)$.
We can construct a sequence of estimators, indexed by $J$, based
upon Algorithm \ref{alg: Estimation} with $k^{\left(J\right)}\left(W\right)$
replacing $W$. To do this let $\mu^{\left(J\right)}=\mathbb{E}\left[k^{\left(J\right)}\left(W\right)\right]$
and additionally define
\[
R^{\left(J\right)}=\left(1,\left(k^{\left(J\right)}\left(W\right)-\mu^{\left(J\right)}\right)',\left(\left(k^{\left(J\right)}\left(W\right)-\mu^{\left(J\right)}\right)\otimes X\right)'\right)'.
\]
We can then estimate $\beta_{0}$ by Algorithm \ref{alg: Estimation}
with $k^{\left(J\right)}\left(W\right)$, $\mu^{\left(J\right)}$
and $R^{\left(J\right)}$ respectively replacing $W$, $\mu_{W}$,
and $R\left(\mu_{W}\right)$. 

Consider the asymptotic precision matrix of this method of moments
estimator; from Theorem \ref{thm: Large-Sample} we get 
\begin{align*}
\mathcal{I}^{\left(J\right)}\left(\beta_{0}\right)^{-1}= & \mathbb{E}\left[\Omega_{0}\left(W\right)\right]+\Delta_{*}^{\left(J\right)}\mathbb{V}\left(k^{\left(J\right)}\left(W\right)\right)\left(\Delta_{*}^{\left(J\right)}\right)'\\
 & +\mathbb{E}\left[\left(\tilde{\epsilon}^{\left(J\right)}-\Pi_{\tilde{\epsilon}\mathbb{S}}^{\left(J\right)}\mathbb{S}\right)\left(\tilde{\epsilon}^{\left(J\right)}-\Pi_{\tilde{\epsilon}\mathbb{S}}^{\left(J\right)}\mathbb{S}\right)'\right].
\end{align*}
with $\mathcal{I}^{\left(J\right)}\left(\beta_{0}\right)^{-1}\geq\mathcal{I}\left(\beta_{0}\right)^{-1}$
(here ``$A\geq B$'' denotes ``$A-B$ is positive semi-definite'').
Recall that $\mathcal{I}\left(\beta_{0}\right)$ is the semiparametric
efficiency bound given in Theorem \ref{thm: SEB}. Let $\hat{\beta}^{\left(J\right)}$
denote the estimate of $\beta_{0}$ based upon $k^{\left(J\right)}\left(W\right)$.
\begin{prop}
\textsc{(Near Efficiency) }\label{prop: near_efficiency}If, maintaining
the Assumptions of part (i) of Theorem \ref{thm: Large-Sample}, $\left\{ \hat{\beta}^{\left(J\right)}\right\} $
is based upon a linearly independent, complete sequence $\left\{ k_{j}\left(W\right)\right\} _{j=1}^{\infty},$
then, for $\mathbb{X}\times\mathbb{W}$ a compact subset of $\mathbb{R}^{K+\dim\left(W\right)}$,
\[
\underset{J\rightarrow\infty}{\lim}\mathcal{I}^{\left(J\right)}\left(\beta_{0}\right)^{-1}=\mathcal{I}\left(\beta_{0}\right)^{-1}.
\]
\end{prop}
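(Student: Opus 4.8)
The plan is to control the gap $\mathcal{I}^{(J)}\left(\beta_{0}\right)^{-1}-\mathcal{I}\left(\beta_{0}\right)^{-1}$ directly. Since the leading term $\mathbb{E}\left[\Omega_{0}\left(W\right)\right]$ appears in the $J$-indexed precision matrix of Theorem \ref{thm: Large-Sample} and in the bound of Theorem \ref{thm: SEB}, it cancels, and the gap reduces to (a) a \emph{coefficient-approximation} piece $\Delta_{*}^{\left(J\right)}\mathbb{V}\left(k^{\left(J\right)}\left(W\right)\right)\left(\Delta_{*}^{\left(J\right)}\right)'-\mathbb{V}\left(b_{0}\left(W\right)\right)$ plus (b) a \emph{residual} piece $\mathbb{E}\left[\left(\tilde{\epsilon}^{\left(J\right)}-\Pi_{\tilde{\epsilon}\mathbb{S}}^{\left(J\right)}\mathbb{S}\right)\left(\tilde{\epsilon}^{\left(J\right)}-\Pi_{\tilde{\epsilon}\mathbb{S}}^{\left(J\right)}\mathbb{S}\right)'\right]$. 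It suffices to show each converges to zero. Throughout I would write $\epsilon^{\left(J\right)}=a_{0}\left(W\right)+X'\left(b_{0}\left(W\right)-\beta_{0}\right)-\left(R^{\left(J\right)}\right)'\lambda_{*}^{\left(J\right)}$ and decompose it as $\epsilon^{\left(J\right)}=p^{\left(J\right)}\left(W\right)+X'q^{\left(J\right)}\left(W\right)$, where $p^{\left(J\right)}\left(W\right)$ is the approximation error in $a_{0}$ and $q^{\left(J\right)}\left(W\right)=b_{0}\left(W\right)-\beta_{0}-\Delta_{*}^{\left(J\right)}\left(k^{\left(J\right)}\left(W\right)-\mu^{\left(J\right)}\right)$ is the vector approximation error in $b_{0}$ implied by the pseudo-true $\lambda_{*}^{\left(J\right)}$.

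The first step is to show $\mathbb{E}\left[\epsilon^{\left(J\right)2}\right]\to0$. Because $\lambda_{*}^{\left(J\right)}$ indexes the mean-squared-error-minimizing linear prediction of $a_{0}\left(W\right)+X'\left(b_{0}\left(W\right)-\beta_{0}\right)$ onto $R^{\left(J\right)}$, its mean squared error is no larger than that obtained by substituting the coefficients $\left(\alpha,\gamma_{j},\delta_{kj}\right)$ guaranteed by completeness. For those coefficients the prediction error equals exactly $\left(X^{+}\right)'\delta^{\left(J\right)}\left(W\right)$, with $X^{+}=\left(1,X'\right)'$ and $\delta^{\left(J\right)}\left(W\right)$ as in (\ref{eq: delta(W)_J}); by Cauchy--Schwarz and compactness of $\mathbb{X}$ its second moment is at most $\mathbb{E}\left[\left\Vert X^{+}\right\Vert ^{2}\left\Vert \delta^{\left(J\right)}\left(W\right)\right\Vert ^{2}\right]\leq C\,\mathbb{E}\left[\left\Vert \delta^{\left(J\right)}\left(W\right)\right\Vert ^{2}\right]$ for a finite constant $C=\sup_{\mathbb{X}}\left(1+\left\Vert X\right\Vert ^{2}\right)$, which (\ref{eq: L2-complete}) drives below any $\zeta^{2}>0$. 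Hence $\mathbb{E}\left[\epsilon^{\left(J\right)2}\right]\to0$.

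The residual piece (b) then follows immediately. Under Assumption \ref{ass: overlap} the matrix $v_{0}\left(W\right)^{-1}$ has largest eigenvalue at most $1/\kappa$, and on the compact support $\mathbb{X}\times\mathbb{W}$ the vector $X-e_{0}\left(W\right)$ is bounded, so the weight $v_{0}\left(W\right)^{-1}\left(X-e_{0}\left(W\right)\right)$ is bounded and $\mathbb{E}\left[\tilde{\epsilon}^{\left(J\right)}\left(\tilde{\epsilon}^{\left(J\right)}\right)'\right]\leq C'\,\mathbb{E}\left[\epsilon^{\left(J\right)2}\right]\to0$. Since a residual second moment is dominated in the positive semi-definite order by the corresponding total second moment, piece (b) is squeezed to zero.

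The coefficient piece (a) is the step I expect to be the main obstacle, because $\Delta_{*}^{\left(J\right)}$ arises from the \emph{joint} projection of $a_{0}\left(W\right)+X'\left(b_{0}\left(W\right)-\beta_{0}\right)$ rather than a componentwise projection of $b_{0}$, so it is not a priori clear that $\Delta_{*}^{\left(J\right)}\left(k^{\left(J\right)}\left(W\right)-\mu^{\left(J\right)}\right)$ tracks $b_{0}\left(W\right)-\beta_{0}$. The resolution is to condition on $W$: since $p^{\left(J\right)}$ and $q^{\left(J\right)}$ are functions of $W$ alone, completing the square gives $\mathbb{E}\left[\epsilon^{\left(J\right)2}\mid W\right]=\left(p^{\left(J\right)}+e_{0}\left(W\right)'q^{\left(J\right)}\right)^{2}+\left(q^{\left(J\right)}\right)'v_{0}\left(W\right)q^{\left(J\right)}$, whence, invoking overlap once more through $\left(q^{\left(J\right)}\right)'v_{0}\left(W\right)q^{\left(J\right)}\geq\kappa\left\Vert q^{\left(J\right)}\right\Vert ^{2}$, we obtain $\kappa\,\mathbb{E}\left[\left\Vert q^{\left(J\right)}\left(W\right)\right\Vert ^{2}\right]\leq\mathbb{E}\left[\epsilon^{\left(J\right)2}\right]\to0$. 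Thus $\Delta_{*}^{\left(J\right)}\left(k^{\left(J\right)}\left(W\right)-\mu^{\left(J\right)}\right)\to b_{0}\left(W\right)-\beta_{0}$ in $\mathcal{L}_{2}$; because both sides are mean zero, convergence in $\mathcal{L}_{2}$ transfers to second moments, giving $\Delta_{*}^{\left(J\right)}\mathbb{V}\left(k^{\left(J\right)}\left(W\right)\right)\left(\Delta_{*}^{\left(J\right)}\right)'\to\mathbb{V}\left(b_{0}\left(W\right)\right)$, so piece (a) vanishes. Combining the two limits yields $\lim_{J\to\infty}\mathcal{I}^{\left(J\right)}\left(\beta_{0}\right)^{-1}=\mathcal{I}\left(\beta_{0}\right)^{-1}$, as claimed.
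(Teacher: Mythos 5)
Your proposal is correct, and while it shares the paper's overall skeleton (cancel the common $\mathbb{E}\left[\Omega_{0}\left(W\right)\right]$ term, kill the residual term $\mathbb{E}\left[\left(\tilde{\epsilon}^{\left(J\right)}-\Pi_{\tilde{\epsilon}\mathbb{S}}^{\left(J\right)}\mathbb{S}\right)\left(\cdot\right)'\right]$ via the norm-reducing property of projection plus boundedness of $v_{0}\left(W\right)^{-1}\left(X-e_{0}\left(W\right)\right)\left(1,X'\right)$ on the compact support), it handles the coefficient piece $\Delta_{*}^{\left(J\right)}\mathbb{V}\left(k^{\left(J\right)}\left(W\right)\right)\left(\Delta_{*}^{\left(J\right)}\right)'-\mathbb{V}\left(b_{0}\left(W\right)\right)$ by a genuinely different route. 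The paper rewrites that difference through the algebraic identity in its equation (\ref{eq: rewritten_key_variance}), bounds the resulting variance term by $C_{1}\zeta^{2}$ and the cross term by $C_{2}\zeta$ via Cauchy--Schwarz, and in doing so appeals to completeness as if the \emph{pseudo-true} coefficients $\left(\alpha_{*}^{\left(J\right)},\gamma_{*}^{\left(J\right)},\Delta_{*}^{\left(J\right)}\right)$ themselves satisfy the $\zeta^{2}$ bound in (\ref{eq: L2-complete}) --- a step that is not immediate, since completeness only asserts the \emph{existence} of some such coefficients. You identify exactly this as the main obstacle and close it: the MSE-minimizing property of $\lambda_{*}^{\left(J\right)}$ gives $\mathbb{E}\left[\epsilon^{\left(J\right)2}\right]\leq C\zeta^{2}$ by comparison with the coefficients completeness supplies, and the conditional decomposition $\mathbb{E}\left[\left.\epsilon^{\left(J\right)2}\right|W\right]=\left(p^{\left(J\right)}+e_{0}\left(W\right)'q^{\left(J\right)}\right)^{2}+\left(q^{\left(J\right)}\right)'v_{0}\left(W\right)q^{\left(J\right)}$ together with Assumption \ref{ass: overlap} yields $\kappa\,\mathbb{E}\left[\left\Vert q^{\left(J\right)}\left(W\right)\right\Vert ^{2}\right]\leq\mathbb{E}\left[\epsilon^{\left(J\right)2}\right]$, so the joint projection's slope component converges to $b_{0}\left(W\right)-\beta_{0}$ in $\mathcal{L}_{2}$ and the second moments converge. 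Your argument is therefore somewhat more self-contained and rigorous on the one step the paper treats implicitly; what the paper's route buys instead is an explicit $O\left(\zeta\right)+O\left(\zeta^{2}\right)$ bound on the efficiency gap with identified constants, whereas yours delivers the limit statement directly.
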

\begin{proof}
See Appendix \ref{appendix: proofs}.
\end{proof}
The compact support assumption invoked in the statement of the theorem
is used in the proof, but does not appear to be essential.

Proposition \ref{prop: near_efficiency} leaves unanswered important
practical questions, such as how quickly $J$ should increase with
$N$. More generally the question of exactly how to choose the elements
of $k^{\left(J\right)}\left(W\right)$ in order to achieve good precision
in practice remains unanswered. However we expect that many insights
from related settings could be applied here \citep[e.g.,][]{Belloni_et_al_ReStud14}.

We conclude this section by demonstrating double robustness in the
sense of \citet{Scharfstein_Rotnitzky_Robins_JASA99b}. If the specification
of the generalized propensity score is not correct (i.e., Assumption
\ref{ass: genalize_p_score} does not hold), but Assumption \ref{ass: clp_coefficients}
is true, then our estimator remains consistent for $\beta_{0}$. Recall
that Assumption \ref{ass: clp_coefficients} was initially invoked
to ensure local efficiency of our procedure. It turns out that modeling
the form of the conditional linear predictor coefficients has the
added benefit of ensuring that our estimator remains consistent even
if our generalized propensity score model is incorrect. Double robustness
results are familiar from the literature on missing data and program
evaluation \citep[e.g.,][]{Scharfstein_Rotnitzky_Robins_JASA99b,Cattaneo_JOE10,Graham_EM11}.
In these settings $X$ is binary or a vector of mutually-exclusive
treatment indicators. Double robustness in our more general setting
is perhaps unsurprising, but nevertheless a new result.

To understand this result observe that step 3 of Algorithm \ref{alg: Estimation}
corresponds to solving the sample analog of
\[
\mathbb{E}\left[\left(\begin{array}{c}
R\left(\mu_{W}\right)\\
v\left(W;\phi_{*}\right)^{-1}\left(X-e\left(W;\phi_{*}\right)\right)
\end{array}\right)U\left(\mu_{W},\lambda_{0},\beta_{0}\right)\right]=0
\]
for $\lambda_{0}$ and $\beta_{0}$. Here we use the notation $\phi_{*}$
to denote that our generalized propensity score model may be miss-specified. 

If Assumption \ref{ass: clp_coefficients} holds in the population,
then $U_{0}=U_{i}\left(\mu_{W},\lambda_{0},\beta_{0}\right)$ is a
conditional linear predictor (CLP) error and Lemma \ref{lem: Wooldridge}
above applies. Recall that $R\left(\mu_{W}\right)=\left(1,\left(W-\mu_{W}\right)',\left(\left(W_{i}-\mu_{W}\right)\varotimes X_{i}\right)'\right)'$;
by Lemma \ref{lem: Wooldridge} $U_{0}$ is uncorrelated with all
components of this vector. Likewise, because $U_{0}$ is mean independent
of $W$ and conditionally uncorrelated with $X$, we also have that
$\mathbb{E}\left[v\left(W;\phi_{*}\right)^{-1}\left(X-e\left(W;\phi_{*}\right)\right)U\right]$
is mean zero as well. Hence step 3 of Algorithm \ref{alg: Estimation}
involves the computation of a correctly specified method-of-moments
estimator under Assumption \ref{ass: clp_coefficients}; irrespective
of whether Assumption \ref{ass: genalize_p_score} additionally holds.
Double robustness follows, more or less, directly.

The above discussion also clarifies why, as is sometimes true in practice,
sampling variability in our estimator can theoretically be lower than
the semiparametric variance bound in Theorem \ref{thm: SEB} when
the generalized propensity score is misspecified, but the form of
the CLP coefficients are not. First, recall that the variance bound
is computed without making any a priori assumptions about the form
of the CLP coefficients. It turns out that in our setting such assumptions
generally increase the precision with which $\beta_{0}$ may be estimated.
When we invoke the double robustness property of our procedure to
ensure consistency we are in a situation where the veracity of Assumption
\ref{ass: clp_coefficients} is central. Whereas is the setting covered
by Theorem \ref{thm: Large-Sample}, Assumption \ref{ass: clp_coefficients}
``may happen to be true'', but need not be.

It is instructive to compare our estimator with the ``Oaxaca-Blinder-type''
one described earlier. The Oaxaca-Blinder procedure necessarily maintains
Assumption \ref{ass: clp_coefficients}. Since this restriction is
part of the prior, it would not be surprising to find that, under
correct specification, that the Oaxaca-Blinder estimator is more efficient
than ours. For the purposes of developing this point, additionally
assume that $U_{0}$ is homoscedastic in $X$ and $W$ (but that this
is not part of the prior), then \textendash{} maintaining Assumption
\ref{ass: clp_coefficients} \textendash{} replacing $v\left(W;\phi_{*}\right)^{-1}\left(X-e\left(W;\phi_{*}\right)\right)$
with $X$ in the above moment would be natural. This replacement leads
the researcher to the Oaxaca-Blinder estimator (which will also be
efficient in this case). Hence, when Assumption \ref{ass: clp_coefficients}
does hold in the sampled population, our procedure will be less efficient
that the Oaxaca-Blinder one (at least under homoscedasticity of $U_{0}$).
Of course, when Assumption \ref{ass: clp_coefficients} does not characterize
the sampled population, our procedure remains consistent, while the
Oaxaca-Blinder one does not.
\begin{thm}
(\textsc{Double Robustness})\label{thm:DoublyRobust} Under Assumptions
\ref{ass: random_sampling} and \ref{ass: overlap} , $\hat{\beta}\overset{p}{\rightarrow}\beta_{0}$
if either Assumption \ref{ass: genalize_p_score} or \ref{ass: clp_coefficients}
holds.
\end{thm}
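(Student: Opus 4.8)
The plan is to dispose of the two sufficiency conditions separately, and in each case reduce the claim to a standard consistency argument for just-identified GMM estimators \citep[cf.,][Theorem 2.6]{Newey_McFadden_HBE94}, which requires only that the population moment vector $\mathbb{E}\left[m\left(Z,\theta\right)\right]$ be uniquely zeroed at the relevant parameter value. First suppose Assumption \ref{ass: genalize_p_score} holds. Then the hypotheses of part (i) of Theorem \ref{thm: Large-Sample} are in force and $\sqrt{N}(\hat{\beta}-\beta_0)$ is asymptotically normal, so consistency is immediate. Crucially, part (i) of Theorem \ref{thm: Large-Sample} never invokes Assumption \ref{ass: clp_coefficients}, so no restriction on the form of the CLP coefficients is needed along this route.

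The substantive case is when Assumption \ref{ass: clp_coefficients} holds but Assumption \ref{ass: genalize_p_score} may fail. Under standard regularity conditions for misspecified maximum likelihood, $\hat{\phi}$ converges in probability to a pseudo-true value $\phi_{*}$ minimizing the Kullback--Leibler divergence and characterized by the first-order condition $\mathbb{E}\left[\mathbb{S}_{\phi}\left(\left.X\right|W;\phi_{*}\right)\right]=0$; hence $m_1$ is mean zero at $\phi_{*}$. The moment $m_2$ is mean zero at $\mu_W=\mathbb{E}\left[W\right]$ by construction. The crux is $m_3$. Writing $U_0=U_i\left(\mu_W,\lambda_0,\beta_0\right)=Y-R\left(\mu_W\right)'\lambda_0-X'\beta_0$, Assumption \ref{ass: clp_coefficients} guarantees that $U_0$ is exactly the CLP prediction error, so Lemma \ref{lem: Wooldridge} delivers $\mathbb{E}\left[\left.U_0\right|W\right]=0$ and $\mathbb{E}\left[\left.XU_0\right|W\right]=0$.

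I would then verify block by block that both sub-vectors of $m_3$ are mean zero at $\theta_{*}=\left(\phi_{*},\mathbb{E}\left[W\right],\lambda_0,\beta_0\right)$ for \emph{any} $\phi_{*}$. The $R\left(\mu_W\right)$-block vanishes because each of its entries ($1$, $W-\mu_W$, and $\left(W-\mu_W\right)\otimes X$) is a function of $W$ multiplying either $1$ or $X$, and so is annihilated upon conditioning on $W$ by the two Lemma \ref{lem: Wooldridge} identities. The instrument block vanishes since
\[
\mathbb{E}\!\left[\left.v\left(W;\phi_{*}\right)^{-1}\left(X-e\left(W;\phi_{*}\right)\right)U_0\right|W\right]=v\left(W;\phi_{*}\right)^{-1}\!\left\{\mathbb{E}\left[\left.XU_0\right|W\right]-e\left(W;\phi_{*}\right)\mathbb{E}\left[\left.U_0\right|W\right]\right\}=0,
\]
which holds whether or not $\phi_{*}=\phi_0$. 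Thus $\mathbb{E}\left[m\left(Z,\theta_{*}\right)\right]=0$, i.e. the true $\beta_0$ solves the limiting moment equations even under a misspecified generalized propensity score.

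The remaining, and main, step is identification. The system is block-recursive: $m_1$ pins down $\phi_{*}$, $m_2$ pins down $\mu_W$, and then $m_3$ is linear in $\left(\lambda,\beta\right)$ given $\left(\phi_{*},\mu_W\right)$, so step 3 of Algorithm \ref{alg: Estimation} is an exactly identified linear IV fit whose population solution is unique provided the Jacobian
\[
\mathbb{E}\!\left[\begin{pmatrix} R\left(\mu_W\right)\\ v\left(W;\phi_{*}\right)^{-1}\left(X-e\left(W;\phi_{*}\right)\right)\end{pmatrix}\bigl(R\left(\mu_W\right)',\,X'\bigr)\right]
\]
is nonsingular. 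I would establish nonsingularity from Assumption \ref{ass: overlap}, which keeps the instrument non-degenerate by bounding $v_0\left(W\right)$ away from singularity, together with the maintained linear independence of the elements of $R\left(\mu_W\right)$ and $X$. Given mean-zero moments and a nonsingular Jacobian, \citet[Theorem 2.6]{Newey_McFadden_HBE94} yields $\hat{\theta}\overset{p}{\rightarrow}\theta_{*}$, and in particular $\hat{\beta}\overset{p}{\rightarrow}\beta_0$. The hard part is precisely this rank argument under misspecification: at a wrong $\phi_{*}$ the instrument no longer obeys the clean normalization $\mathbb{E}\left[\left.v\left(W;\phi_0\right)^{-1}\left(X-e\left(W;\phi_0\right)\right)X'\right|W\right]=I_K$ exploited in Theorem \ref{thm: Large-Sample}, so one must show the Jacobian remains nonsingular for the pseudo-true $\phi_{*}$ rather than lean on the exact identity. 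Everything else is routine substitution via Lemma \ref{lem: Wooldridge}.
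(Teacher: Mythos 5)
Your proposal is correct and follows essentially the same route as the paper, which omits a formal proof but argues exactly as you do in the discussion preceding the theorem: under Assumption \ref{ass: clp_coefficients} the residual $U_{0}$ is a genuine CLP error, so Lemma \ref{lem: Wooldridge} annihilates every block of the moment vector whether or not the generalized propensity score is correctly specified, while under Assumption \ref{ass: genalize_p_score} consistency is already contained in part (i) of Theorem \ref{thm: Large-Sample}. Your explicit flagging of the nonsingularity of the Jacobian of the third moment block in $\left(\lambda,\beta\right)$ at a pseudo-true $\phi_{*}$ \textemdash{} where the normalization $\mathbb{E}\left[\left.v\left(W;\phi_{0}\right)^{-1}\left(X-e\left(W;\phi_{0}\right)\right)X'\right|W\right]=I_{K}$ no longer holds \textemdash{} is a genuine point that the paper leaves buried in unstated regularity conditions, and it does not follow from Assumption \ref{ass: overlap} alone.
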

The proof is straightforward and omitted (see \citet{Graham_Pinto_Egel_ReStud12}
and \citet{Graham_Pinto_Egel_JBES16} for proofs of related results).
As a practical matter using the standard method-of-moments sandwich
variance-covariance matrix estimator associated with the moment problem
defined by (\ref{eq: m1_p1}), (\ref{eq: m2_p1}) and (\ref{eq: m3_p1})
above will support asymptotically valid inference under the conditions
of both Theorems \ref{thm: Large-Sample} and \ref{thm:DoublyRobust}.

\section{Examples and special cases\label{sec: examples}}

In this section we demonstrate that our semiparametric regression
model encompasses several other well-known models.

\subsection*{Example 1: Binary Treatment Effect}

Following the program evaluation literature let $Y_{0}$ denote the
potential outcome under control and $Y_{1}$ the potential outcome
under active treatment treatment. For each sampled unit we observe
either $Y_{0}$ or $Y_{1}$ but not both. The observed outcome, $Y$,
therefore equals
\[
Y=XY_{1}+(1-X)Y_{0}
\]
where $X$ equals 1 if the unit is treated and zero otherwise. Rewriting
yields a random coefficients model of
\[
Y=A+BX
\]
with $A=Y_{0}$ and $B=Y_{1}-Y_{0}$. The average treatment effect
(ATE) equals
\[
\beta_{0}=\mathbb{E}[Y_{1}-Y_{0}]=\mathbb{E}[B].
\]

\citet{Rosenbaum_Rubin_BM83} show that the ATE is identified when
$(Y_{0},Y_{1})\bot X|W$ (unconfoundedness) and $0<\Pr\left(\left.X=1\right|W=w\right)<1$
for all $w\in\mathbb{W}$ (overlap).

When $X$ is binary our Assumption \ref{ass: CondExog} implies unconfoundedness.
Assumption \ref{ass: CondExog} implies that $X$ is conditionally
uncorrelated with the two potential outcomes. When $X$ is binary
this also corresponds to mean and full conditional independence. Next
observe that $e_{0}(W)=\Pr\left(\left.X=1\right|W=w\right)$ and $v_{0}(W)=e_{0}(W)\left[1-e_{0}(W)\right]$.
Hence our Assumption \ref{ass: overlap} implies that $0<\kappa\leq e_{0}\left(W\right)\leq1-\kappa<1$
or so called strong overlap.

Now consider Algorithm \ref{alg: Estimation}. When $X$ is binary
we have that
\[
v\left(W,\hat{\phi}\right)^{-1}\left(X-e\left(W,\hat{\phi}\right)\right)=\frac{X}{e\left(W,\hat{\phi}\right)}-\frac{1-X}{1-e\left(W,\hat{\phi}\right)}.
\]
Our ATE estimate is the coefficient on $X$ associated with the linear
instrumental variables fit of $Y$ onto a constant, $(W-\hat{\mu}_{W})$,
$(W-\hat{\mu}_{W})\cdot X$, and $X$ where $\frac{X}{e\left(W,\hat{\phi}\right)}-\frac{1-X}{1-e\left(W,\hat{\phi}\right)}$
serves as an instrument for $X$. This estimator is similar to, but
distinct from, the weighted least squares (WLS) one introduced by
\citet{Hirano_Imbens_HSORM01}.

\citet{Wooldridge_CWP04} shows, for $X$ binary, that equation (\ref{eq: generalized_ipw})
coincides with 
\begin{align*}
\mathbb{E}\left[v_{0}\left(W\right)^{-1}\left(X-e_{0}\left(W\right)\right)Y\right] & =\mathbb{E}\left[\frac{XY}{e_{0}\left(W\right)}-\frac{\left(1-X\right)Y}{1-e_{0}\left(W\right)}\right],
\end{align*}
which is the familiar inverse probability weighting (IPW) representation
of the average treatment effect (ATE) in, for example, \citet{Hirano_et_al_EM03}. 

The general form of the efficient influence function given in Theorem
\ref{thm: SEB} above corresponds to the specialized one for the ATE
when $X$ is binary derived by, for example, \citet{Hahn_EM98} and
\citet{Hirano_et_al_EM03}. Hence our general procedure, as summarized
by Algorithm \ref{alg: Estimation}, provides a locally efficient
and double robust estimator of the ATE. To the best of our knowledge,
our proposed estimator is a new one, even in the special case where
it identifies the ATE of a binary treatment. \citet{Bang_Robins_BM05}
and \citet{Tsiatis_Book06} provide introductions to double robust
causal inference. 

\subsection*{Example 2: Multiple Treatment Effects}

Following \citet{Imbens_BM00}, \citet{Wooldridge_CWP04}, and \citet{Cattaneo_JOE10}
consider finite collection of mutually exclusive treatments indexed
by $k\in\{0,1,2,...,K\}$ with $K\in\mathbb{N}$. Associated with
these treatments are the $K+1$ potential outcomes, $Y(0),Y(1),\ldots,Y(K)$.
The observed outcome is
\[
Y=Y(0)+\sum_{k=1}^{K}X_{k}\left\{ Y(k)-Y(0)\right\} 
\]
where $X_{k}$ is a binary random variable that equals $1$ if treatment
$k=0,\ldots,K$ is assigned to the unit and zero otherwise. In this
case, we work with the following random coefficient model:
\[
Y=A+X^{'}B
\]
where $X=\left(X_{1},\ldots,X_{K}\right)'$ is a $K\times1$ vector
of treatment indicators and $B$ a corresponding vector of individual
treatment effects.

In this setup $X$ is multinomial with a conditional mean of
\[
e_{0}\left(W\right)=\left(\begin{array}{c}
\Pr\left(\left.X_{1}=1\right|W\right)\\
\vdots\\
\Pr\left(\left.X_{K}=1\right|W\right)
\end{array}\right)
\]
and an inverse conditional variance of \citep[cf.,][]{Henderson_Searle_SIAM81}
\begin{align*}
v_{0}\left(W\right)^{-1}= & diag\left\{ \frac{1}{\Pr\left(\left.X_{1}=1\right|W\right)},\ldots,\frac{1}{\Pr\left(\left.X_{K}=1\right|W\right)}\right\} \\
 & +\frac{1}{1-\sum_{k=1}^{K}\Pr\left(\left.X_{K}=1\right|W\right)}\iota_{K}\iota_{K}'.
\end{align*}
A little bit of tedious algebra then gives
\[
\beta_{0}=\mathbb{E}\left[v_{0}\left(W\right)^{-1}\left(X-e_{0}\left(W\right)\right)Y\right]=\mathbb{E}\left[\begin{array}{c}
\frac{X_{1}Y}{\Pr\left(\left.X_{1}=1\right|W\right)}-\frac{X_{0}Y}{\Pr\left(\left.X_{0}=1\right|W\right)}\\
\vdots\\
\frac{X_{K}Y}{\Pr\left(\left.X_{K}=1\right|W\right)}-\frac{X_{0}Y}{\Pr\left(\left.X_{0}=1\right|W\right)}
\end{array}\right],
\]
which corresponds to the IPW representation of the ATEs 
\[
\beta_{0}=\left(\begin{array}{c}
\mathbb{E}\left[Y\left(1\right)-Y\left(0\right)\right]\\
\vdots\\
\mathbb{E}\left[Y\left(K\right)-Y\left(0\right)\right]
\end{array}\right),
\]
in the multiple treatment setting.

As in the case where $X$ is binary, the general form of the efficient
influence function given in Theorem \ref{thm: SEB} above corresponds
to the specialized one derived by \citet{Cattaneo_JOE10}. Hence our
general procedure also provides a locally efficient and doubly robust
estimate of ATEs in the multiple, mutually exclusive, treatments setting.

\subsection*{Example 3: Partially linear model}

\citet{Chamberlain_WP86,Chamberlain_EM92} and \citet{Robinson_EM88}
studied the semiparametric partially linear regression model (PLM)
\[
Y=X'\beta_{0}+h_{0}(W)+U
\]
with $\mathbb{E}[U|W,X]=0.$ This model can be represented by the
random coefficient model
\[
Y=A+X'B
\]
where $\mathbb{E}[A|W,X]=a_{0}\left(W\right)=h_{0}(W)$ and $\mathbb{E}[B|X,W]=b_{0}\left(W\right)=\beta_{0}$.
These assumptions are stronger than those implied by Assumption \ref{ass: CondExog}. 

To fit this into our framework we replace Assumption \ref{ass: clp_coefficients}
with a working CLP model of
\[
a_{0}\left(W\right)=h_{0}\left(W\right)=\alpha_{0}+\left(W-\mu_{W}\right)'\lambda_{0},\thinspace\thinspace\thinspace b_{0}\left(W\right)=\beta_{0}.
\]
This implies a constant additive treatment effect structure.

Estimation follows Algorithm \ref{alg: Estimation}. First compute
the MLE of $\phi_{0}$. Second compute the sample means $\hat{\mu}_{W}=\frac{1}{N}\sum_{i=1}^{N}W_{i}$.
Finally compute the linear instrumental variables fit of $Y$ onto
a constant, $\left(W-\hat{\mu}_{W}\right)$ and $X$, using $v\left(W,\hat{\phi}\right)^{-1}\left(X-e\left(W,\hat{\phi}\right)\right)$
as an instrument for $X$. Because of the constant additive treatment
effect structure of the PLM we exclude the interactions $\left(W-\hat{\mu}_{W}\right)\otimes X$
from the IV fit computed in the third step.

It is important to recognize that although our procedure invokes the
working assumption that the treatment effect is constant in $W$ (i.e.,
$b_{0}\left(w\right)=\beta_{0}$ for all $w\in\mathbb{W}$), this
assumption is not required for consistency as long as our generalized
propensity score model is correct. Put differently although our procedure
incorporates the PLM structure, this structure is not part of the
maintained prior (albeit the form of the generalized propensity score
is part of the prior).

If $b_{0}\left(w\right)=\beta_{0}$ for all $w\in\mathbb{W}$ and
$U$ is conditionally mean zero given \emph{both} $W$ and $X$ (and
also has a constant variance), but these are not part of the prior
restriction used to calculate the bound, then (\ref{eq: Inverse_Information_CRC})
evaluates to 
\[
\mathcal{I}(\beta_{0})^{-1}=\sigma^{2}\mathbb{E}\left[v_{0}(W)^{-1}\right].
\]
The modified PLM estimator described above, and based on our Algorithm
\ref{alg: Estimation}, will attain this bound when the true model
is a partially linear one.

\citet{Chamberlain_EM92} gives a bound for $\beta_{0}$ \textendash{}
where the partially linear regression structure \emph{is} part of
the prior restriction (but the homoscedasticity assumption is not)
\textendash{} of
\[
\mathcal{I}_{\mathrm{plm}}(\beta_{0})^{-1}=\sigma^{2}\mathbb{E}\left[v_{0}(W)\right]^{-1}.
\]
The difference $\mathcal{I}(\beta_{0})^{-1}-\mathcal{I}_{\mathrm{plm}}(\beta_{0})^{-1}$
is positive semi-definite. This follows directly from, for example,
the Theorem in \citet{Groves_Rothenberg_BM69} on the expectations
of inverse matrices. Hence although our approach to estimation remains
consistent for $\beta_{0}$ when the true regression function takes
a partially linear form, it will generally be less efficient than
methods which exploit this structure at the outset \citep[e.g.,][]{Robinson_EM88,Robins_Mark_Newey_BM92}.

\section{\label{sec: Monte_Carlo}Finite sample properties}

In order to assess the approximation accuracy of Theorems \ref{thm: Large-Sample}
and \ref{thm:DoublyRobust} in finite samples we conducted a small
simulation experiment, the results of which we report here. We considered
four designs. The outcome was generated according to
\[
Y=a_{0}\left(W\right)+b_{0}\left(W\right)X+U
\]
with $W$ and $U$ independent standard normal random variables and
$a_{0}\left(W\right)$ and $b_{0}\left(W\right)$ either linear (designs
1 and 2) or quadratic (designs 3 and 4) in $W$. The conditional distribution
of $X$ given $W$ was specified as Poisson with parameter $\exp\left(k\left(W\right)'\phi\right)$
and $k\left(W\right)=\left(1,W\right)'$ in designs 1 and 3 and $k\left(W\right)=\left(1,W,W^{2}\right)'$
in designs 2 and 4. Complete details on the data generating process
are given in Table \ref{tab: Monte-Carlo-Designs}.

\begin{table}

\caption{\label{tab: Monte-Carlo-Designs}Monte Carlo Designs}

\begin{centering}
\begin{tabular}{|c|c|c|c|c|}
\hline 
Designs &
1 &
2 &
3 &
4\tabularnewline
\hline 
\hline 
$\alpha_{0}$ &
1 &
1 &
1.5 &
1.5\tabularnewline
\hline 
$\gamma_{1}$ &
1 &
1 &
1 &
1\tabularnewline
\hline 
$\gamma_{2}$ &
0 &
0 &
0.5 &
0.5\tabularnewline
\hline 
$\beta_{0}$ &
2 &
2 &
2.5 &
2.5\tabularnewline
\hline 
$\delta_{1}$ &
1.22 &
1.26 &
1 &
1.05\tabularnewline
\hline 
$\delta_{2}$ &
0 &
0 &
0.5 &
0.5\tabularnewline
\hline 
$\phi_{0}$ &
0.1 &
0.1 &
0.1 &
0.1\tabularnewline
\hline 
$\phi_{1}$ &
0.5 &
0.5 &
0.5 &
0.5\tabularnewline
\hline 
$\phi_{2}$ &
0 &
0.1 &
0 &
0.1\tabularnewline
\hline 
\end{tabular}
\par\end{centering}
\uline{Notes:} We specified $a_{0}\left(w\right)=\alpha_{0}+\gamma_{1}\left(W-\mathbb{E}\left[W\right]\right)+\gamma_{2}\left(W^{2}-\mathbb{E}\left[W^{2}\right]\right)$
and $b_{0}\left(w\right)=\beta_{0}+\delta_{1}\left(W-\mathbb{E}\left[W\right]\right)+\delta_{2}\left(W^{2}-\mathbb{E}\left[W^{2}\right]\right)$
analogous to the formulation given in Assumption \ref{ass: clp_coefficients}.
Each of the four designs are calibrated such that $\sqrt{\mathcal{I}\left(\beta_{0}\right)^{-1}/N}=0.05$
when $N=1,000$. 
\end{table}

We evaluate the performance of three estimators. First we consider
a simple ``Oaxaca-Blinder'' type estimator. Specifically we estimate
$\beta_{0}$ by the coefficient on $X$ in the least squares fit of
$Y$ onto a constant, $W-\hat{\mu}_{W}$, $\left(W-\hat{\mu}_{W}\right)\times X$
and $X$. As in \citet{Kline_EL14} we appropriately account for the
effect of estimating $\mu_{W}$ when constructing standard errors
and confidence intervals. This estimator is consistent for the true
average partial effect in both designs 1 and 2. It is also, since
$U$ is Gaussian and homoscedastic, efficient in these two designs.
Efficiency is in the semiparametric model which, \emph{in addition
to} Assumptions \ref{ass: random_sampling} and \ref{ass: overlap},
maintains Assumption \ref{ass: clp_coefficients}. The variance of
the Oaxaca-Blinder estimate therefore lies (weakly) below the bound
given by Theorem \ref{thm: SEB} in these two designs. In designs
3 and 4, where $a_{0}\left(W\right)$ and $b_{0}\left(W\right)$ are
quadratic in $W$, the ``Oaxaca-Blinder'' estimator is inconsistent.

The second estimator is the generalized inverse probability weighting
(GIPW) one due to \citet{Wooldridge_CWP04,Wooldridge_EACSPDBook10}.
Our implementation tracks our analysis in Section \ref{sec: estimation}.
For estimation we correctly assume that the conditional distribution
of $X$ given $W$ is Poisson, but set the parameter to $\exp\left(k\left(W\right)'\phi\right)$
with $k\left(W\right)=\left(1,W\right)'$. This is correct in designs
1 and 3, but not 2 and 4. Hence the GIPW estimate of $\beta_{0}$
is consistent in designs 1 and 3, but not 2 and 4. The GIPW is never
efficient. Standard errors are constructed using the sample analog
of the influence function given in (\ref{eq: Wooldridge_Influence})
above.

Finally we consider the properties of our locally efficient, doubly
robust estimator. Implementing this procedure requires assumptions
on both the CLP and the generalized propensity score. We make the
same assumptions used to implement the Oaxaca-Blinder and GIPW procedures.
Consequently this last estimator is efficient \textendash{} in the
sense of Theorem \ref{thm: SEB} \textendash{} in design 1 and consistent
in designs 1, 2 and 3. Like all the estimators it is inconsistent
in design 4. We construct standard errors using the (sample analog)
of the influence function given in Theorem \ref{thm: Large-Sample};
consequently our intervals are conservative in design 2 (where our
propensity score model is misspecified).\footnote{We use Python 3.6 to conduct our experiments. Replication code is
available in the supplemental materials. }

Each of the four designs are calibrated such that $\sqrt{\mathcal{I}\left(\beta_{0}\right)^{-1}/N}=0.05$
($0.025$) when $N=1,000$ ($4,000$). In an asymptotic sense inference
on $\beta_{0}$ is equally hard across all the designs considered.
We focus on the $N=1,000$ experiments in our discussion (the quality
of the asymptotic approximations predictably improve in the larger
sample).

Results from the four designs are reported in Table \ref{tab: Simulation-Results}.
As expected our DR estimator is median unbiased across Designs 1,
2 and 3. In contrast the Oaxaca-Blinder estimator only performs acceptably
in designs 1 and 2, and the GIPW estimator in design 1 and 3. In designs
1 and 2 the variability of the DR estimator is nearly as small as
that of the Oaxaca-Blinder one. Neither the DR, nor the GIPW, estimators
are expected to be efficient in design 3 but, interestingly, GIPW
is more efficient than DR in this case. In design 1, where the DR
estimator is locally efficient, its standard deviation is substantially
smaller than that of the GIPW estimator (as expected).

\begin{landscape}

\begin{table}
\caption{Simulation Results\label{tab: Simulation-Results}}

\begin{centering}
\begin{tabular}{|c|c|c|c|c|c|c|c|c|}
\hline 
 &
\multicolumn{4}{c|}{\textbf{Panel A}, $N=1,000$} &
\multicolumn{4}{c|}{\textbf{Panel B}, $N=4,000$}\tabularnewline
\hline 
 &
Bias &
Std. Dev. &
Std. Err. &
Coverage &
Bias &
Std. Dev. &
Std. Err. &
Coverage\tabularnewline
\hline 
\hline 
\textbf{Design 1} &
 &
 &
 &
 &
 &
 &
 &
\tabularnewline
\hline 
Oaxaca-Blinder &
-0.0003 &
0.0500 &
0.0496 &
0.9480 &
0.0006 &
0.0252 &
0.0248 &
0.9438\tabularnewline
\hline 
GIPW &
-0.0008 &
0.0853 &
0.0809 &
0.9438 &
0.0016 &
0.0429 &
0.0419 &
0.9494\tabularnewline
\hline 
DR &
0.0001 &
0.0507 &
0.0499 &
0.9450 &
0.0009 &
0.0254 &
0.0250 &
0.9448\tabularnewline
\hline 
\textbf{Design 2} &
 &
 &
 &
 &
 &
 &
 &
\tabularnewline
\hline 
Oaxaca-Blinder &
0.0009 &
0.0504 &
0.0497 &
0.9442 &
0.0000 &
0.0251 &
0.0248 &
0.9456\tabularnewline
\hline 
GIPW &
-0.2597 &
0.1331 &
0.1198 &
0.4634 &
-0.2613 &
0.0710 &
0.0666 &
0.0258\tabularnewline
\hline 
DR &
0.0014 &
0.0518 &
0.0561 &
0.9620 &
-0.0001 &
0.0258 &
0.0283 &
0.9694\tabularnewline
\hline 
\textbf{Design 3} &
 &
 &
 &
 &
 &
 &
 &
\tabularnewline
\hline 
Oaxaca-Blinder &
-0.3268 &
0.0993 &
0.0899 &
0.0772 &
-0.3319 &
0.0531 &
0.0481 &
0.0002\tabularnewline
\hline 
GIPW &
-0.0018 &
0.0830 &
0.0794 &
0.9436 &
-0.0005 &
0.0428 &
0.0413 &
0.9452\tabularnewline
\hline 
DR &
-0.0113 &
0.1099 &
0.0981 &
0.9276 &
-0.0036 &
0.0570 &
0.0529 &
0.9368\tabularnewline
\hline 
\textbf{Design 4} &
 &
 &
 &
 &
 &
 &
 &
\tabularnewline
\hline 
Oaxaca-Blinder &
-0.2010 &
0.1571 &
0.1087 &
0.5148 &
-0.2391 &
0.1255 &
0.0674 &
0.0168\tabularnewline
\hline 
GIPW &
0.2717 &
0.1897 &
0.1216 &
0.4006 &
0.3052 &
0.1335 &
0.0748 &
0.0034\tabularnewline
\hline 
DR &
0.4123 &
0.2035 &
0.1687 &
0.3076 &
0.4362 &
0.1058 &
0.1013 &
0.0986\tabularnewline
\hline 
\hline 
$\sqrt{\mathcal{I}\left(\beta_{0}\right)^{-1}/N}$ &
\multicolumn{4}{c|}{0.05} &
\multicolumn{4}{c|}{0.0250}\tabularnewline
\hline 
\end{tabular}
\par\end{centering}
\uline{Notes:} The bias column reports median bias across all $B=5,000$
simulations. The Std. Dev. column reports the standard deviation of
the point estimates across these simulations, Std. Err. the median
estimated standard error, and Coverage the actual coverage of a nominal
95 percent confidence interval (constructed using the estimated point
estimate and standard error in the normal way). The standard error
associated with a Monte Carlo coverage estimate is $\sqrt{\alpha\left(1-\alpha\right)/B}$.
With $B=5,000$ simulations and $\alpha=0.05$, this results in a
standard error of approximately 0.003 or a 95 percent confidence interval
of $[0.944,0.956]$.
\end{table}

\end{landscape}

Overall the simulation results track our theoretical predictions remarkably
closely. Of course exploring the performance of these estimators in
the context of real world empirical applications and other, more realistic,
simulation designs would be of interest.

\section{Conclusion}

We have introduced a locally efficient, doubly robust, semiparametric
method of estimating averages of conditional linear predictor coefficients.
Our estimand, and semiparametric efficiency bound, specialize to familiar
counterparts found in the program evaluation literature \citep[e.g.][]{Hahn_EM98,Cattaneo_JOE10}.
While encompassing well-known program evaluation settings, our framework
allows for (semiparametric) covariate adjustment in many other settings
as well (including ones with few extant alternative methods of such
adjustment).

Researchers interested in estimating the average treatment effect
(ATE) associated with a binary treatment can apply our methods. While
we believe the precise form of our procedure is new even to this familiar
setting, it is a variant of the class of augmented inverse probability
weighting (AIPW) estimators introduced by \citet{Robins_Rotnitzky_Zhao_JASA94}
in the missing data context over 20 years ago. The real attraction
of Algorithm \ref{alg: Estimation}, and the corresponding Theorems
\ref{thm: Large-Sample} and \ref{thm:DoublyRobust} (as well as Proposition
\ref{prop: near_efficiency}), is that they apply to models beyond
the ``classic'' program evaluation one of \citet{Rosenbaum_Rubin_BM83}.
Multiple, mutually exclusive treatments, as in \citet{Imbens_BM00}
and \citet{Cattaneo_JOE10} are easily handled as a special case.
Similarly, maintaining a linear, but heterogeneous, potential response
function structure, Algorithm \ref{alg: Estimation} recovers average
partial effects (APE) for continuous treatments, multiple non-exclusive
treatments, mixtures of binary, discrete and continuous treatments
and so on. A weighted average derivative interpretation of our estimand
is also available for settings where the linear potential response
function structure may not hold (Proposition \ref{prop: wgt_der}).

We also wish to emphasize that averages of conditional linear predictor
coefficients represent a natural, but substantial, generalization
of linear predictor coefficients as estimated by the method of least
squares. Hence Algorithm \ref{alg: Estimation} also provides a method
of flexible covariate adjustment that may be of independent interest
even in settings where formal causal inference is not warranted; similar
to how least squares is sometimes used for descriptive purposes.

Our work leaves several questions unanswered. First, although the
flexible parametric modeling embodied in Assumptions \ref{ass: genalize_p_score}
and \ref{ass: clp_coefficients} closely mirrors empirical practice,
it would be useful to development methods that leave the generalized
propensity score and CLP coefficients non-parametric. It seems likely
that results from the binary and multiple treatments case could be
extended to apply here \citep[e.g.,][]{Hirano_et_al_EM03,Cattaneo_JOE10,Belloni_et_al_ReStud14}. 

In other work we have shown that first order equivalent estimators
may have appreciably different higher order properties in program
evaluation settings \citep{Graham_Pinto_Egel_ReStud12}. We expect
that other locally efficient, doubly robust approaches to estimation
for the class of problems considered in this paper are feasible. These
approaches may exhibit superior or inferior higher order bias.

Third, maintaining the correlated random coefficient structure, different
notions of conditional exogeneity will imply different semiparametric
efficiency bounds (when linearity is restrictive). Our decision to
work with a weak notion of exogeneity maintains a connection with
conditional linear predictors. If a researcher was comfortable with
the correlated random coefficient structure, then it would generally
be possible to construct more efficient estimates of $\beta_{0}=\mathbb{E}\left[B\right]$
if she was willing to assume, for example, that $\left.\left(A,B'\right)'\perp X\right|W=w$
for all $w\in\mathbb{W}.$ Such estimators would likely be quite complicated
and may have poor finite sample properties.

\subsection*{\protect\pagebreak{}}

\appendix

\section{\label{appendix: proofs}Proofs}

This appendix contains proofs of the results contained in the main
paper. All notation is as defined in the main text unless explicitly
noted otherwise. Equation numbering continues in sequence with that
established in the main text. 

\subsection*{Proof of Theorem \ref{thm: SEB} (Semiparametric efficiency bound)}

In calculating the efficiency bound for $\beta_{0}$ in the semiparametric
regression model defined by Definition \ref{def: clp} and Assumptions
\ref{ass: random_sampling} and \ref{ass: overlap} of the main text,
we follow the approach outlined by \citet[Section 3]{Newey_JAE90}.
First, we characterize the model's tangent space. Second, we demonstrate
pathwise differentiability of $\beta_{0}$. The efficient influence
function for $\beta_{0}$ equals the projection of this derivative
onto the tangent space. In the present case the pathwise derivative
lies in the tangent space and hence coincides with the required projection.
The result then follows from an application of Theorem 3.1 in \citet{Newey_JAE90}.

\subsubsection*{Step 1: Characterization of the Model Tangent Space:}

The joint density function for $z=\left(w,x,y\right)$ is given by
\[
f_{0}\left(w,x,y\right)=f_{0}\left(\left.x,y\right|w\right)f_{0}\left(w\right),
\]
where $f_{0}\left(\left.x,y\right|w\right)$ denotes the conditional
density/mass of $\left(X=x,Y=y\right)$ given $W=w$ and $f_{0}\left(w\right)$
is the marginal density/mass of $W=w$.

Consider a regular parametric submodel indexed by $\eta$ with $f\left(w,x,y;\eta\right)=f_{0}\left(w,x,y\right)$
at $\eta=\eta_{0}$. The submodel joint density equals
\[
f\left(w,x,y;\eta\right)=f\left(\left.x,y\right|w;\eta\right)f\left(w;\eta\right),
\]
with a corresponding score vector of
\begin{equation}
s_{\eta}\left(w,x,y;\eta\right)=s_{\eta}\left(\left.x,y\right|w;\eta\right)+t_{\eta}\left(w;\eta\right)\label{eq: submodel_score}
\end{equation}
where
\[
s_{\eta}\left(w,x,y;\eta\right)=\nabla_{\eta}f\left(w,x,y;\eta\right),\thinspace\thinspace s_{\eta}\left(\left.x,y\right|w;\eta\right)=\nabla_{\eta}f\left(\left.x,y\right|w;\eta\right),\thinspace\thinspace t_{\eta}\left(w;\eta\right)=\nabla_{\eta}f\left(w;\eta\right).
\]
By the usual (conditional) mean zero property of scores we have that
\begin{equation}
\mathbb{E}\left[\left.s_{\eta}\left(\left.X,Y\right|W\right)\right|W\right]=\mathbb{E}\left[t_{\eta}\left(W\right)\right]=0,\label{eq: submodel_score_restrictions}
\end{equation}
where the suppression of $\eta$ in a function indicates that it is
evaluated at its population value (e.g., $t_{\eta}\left(w\right)=t_{\eta}\left(w;\eta_{0}\right)$).

The model tangent set is the closed linear span of the set of all
such scores. From (\ref{eq: submodel_score}) and (\ref{eq: submodel_score_restrictions})
this set evidently equals
\[
\mathcal{T}=\left\{ s\left(\left.x,y\right|w\right)+t\left(w\right)\right\} 
\]
where $s\left(\left.x,y\right|w\right)$ and $t\left(w\right)$ satisfy
the (conditional) moment restrictions
\[
\mathbb{E}\left[\left.s\left(\left.X,Y\right|W\right)\right|W\right]=\mathbb{E}\left[t\left(W\right)\right]=0,
\]
and also have finite variance.

\subsubsection*{Step 2: Demonstration of pathwise differentiability:}

Under the parametric submodel, $\beta\left(\eta\right)$ is identified
by 
\begin{equation}
\beta\left(\eta\right)=\int b\left(w;\eta\right)f\left(w;\eta\right)\mathrm{d}w,\label{eq: beta(eta)}
\end{equation}
where $b\left(w;\eta\right)$ satisfies the conditional moment restriction
\begin{equation}
\int\int\left(\begin{array}{c}
1\\
x
\end{array}\right)\left(y-a\left(w;\eta\right)-x'b\left(w;\eta\right)\right)f\left(\left.x,y\right|w;\eta\right)\mathrm{d}x\mathrm{d}y=0.\label{eq: CLP(eta)}
\end{equation}

Differentiating (\ref{eq: beta(eta)}) under the integral and evaluating
at $\eta=\eta_{0}$ gives
\begin{equation}
\frac{\partial\beta\left(\eta_{0}\right)}{\partial\eta'}=\mathbb{E}\left[\frac{\partial b\left(W;\eta_{0}\right)}{\partial\eta'}\right]+\mathbb{E}\left[b\left(W;\eta_{0}\right)t_{\eta}\left(W;\eta_{0}\right)\right].\label{eq: pathwise_1}
\end{equation}
We can derive a close-form expression for $\frac{\partial b\left(w;\eta_{0}\right)}{\partial\eta'}$
in (\ref{eq: pathwise_1}) by differentiating (\ref{eq: CLP(eta)})
with respect to $\eta$ (and evaluating at $\eta=\eta_{0}$): 
\begin{eqnarray*}
-\int\int\left(\begin{array}{c}
1\\
x
\end{array}\right)\frac{\partial a\left(w;\eta_{0}\right)}{\partial\eta'}f\left(\left.x,y\right|w;\eta_{0}\right)\mathrm{d}x\mathrm{d}y-\int\int\left(\begin{array}{c}
x'\\
xx'
\end{array}\right)\frac{\partial b\left(w;\eta_{0}\right)}{\partial\eta'}f\left(\left.x,y\right|w;\eta_{0}\right)\mathrm{d}x\mathrm{d}y\\
+\int\int\left(\begin{array}{c}
1\\
x
\end{array}\right)\left(y-a\left(w;\eta\right)-x'b\left(w;\eta_{0}\right)\right)s_{\eta}\left(\left.x,y\right|w;\eta_{0}\right)f\left(\left.x,y\right|w;\eta_{0}\right)\mathrm{d}x\mathrm{d}y & = & 0
\end{eqnarray*}
Using the matrix inverse
\[
\mathbb{E}\left[\left.\begin{array}{cc}
1 & X'\\
X & XX'
\end{array}\right|W=w\right]^{-1}=\left(\begin{array}{cc}
1+e_{0}\left(w\right)'v_{0}\left(w\right)^{-1} & e_{0}\left(w\right)-e_{0}\left(w\right)'v_{0}\left(w\right)^{-1}\\
-v_{0}\left(w\right)^{-1}e_{0}\left(w\right) & v_{0}\left(w\right)^{-1}
\end{array}\right)
\]
we solve to get
\begin{eqnarray*}
\left(\begin{array}{c}
\frac{\partial a\left(w;\eta_{0}\right)}{\partial\eta'}\\
\frac{\partial b\left(w;\eta_{0}\right)}{\partial\eta'}
\end{array}\right) & = & \left(\begin{array}{cc}
1+e_{0}\left(w\right)'v_{0}\left(w\right)^{-1} & e_{0}\left(w\right)-e_{0}\left(w\right)'v_{0}\left(w\right)^{-1}\\
-v_{0}\left(w\right)^{-1}e_{0}\left(w\right) & v_{0}\left(w\right)^{-1}
\end{array}\right)\\
 &  & \times\mathbb{E}\left[\left.\left(\begin{array}{c}
Y-a\left(W;\eta_{0}\right)-X'b\left(W;\eta_{0}\right)\\
X\left(Y-a\left(W;\eta_{0}\right)-X'b\left(W;\eta_{0}\right)\right)
\end{array}\right)s_{\eta}\left(\left.X,Y\right|W;\eta_{0}\right)\right|W=w\right].
\end{eqnarray*}
Evaluating the second row of this expression gives
\begin{equation}
\frac{\partial b\left(w;\eta_{0}\right)}{\partial\eta'}=\mathbb{E}\left[\left.v_{0}\left(W\right)^{-1}\left(X-e_{0}\left(W\right)\right)\left(Y-a_{0}\left(W\right)-X'b_{0}\left(W\right)\right)s_{\eta}\left(\left.X,Y\right|W\right)\right|W=w\right],\label{eq: d_b(W)/d_eta}
\end{equation}
which, after substituting into (\ref{eq: pathwise_1}), yields
\begin{eqnarray}
\frac{\partial\beta\left(\eta_{0}\right)}{\partial\eta'} & = & \mathbb{E}\left[v_{0}\left(W\right)^{-1}\left(X-e_{0}\left(W\right)\right)\left(Y-a_{0}\left(W\right)-X'b_{0}\left(W\right)\right)s_{\eta}\left(\left.X,Y\right|W\right)\right]\nonumber \\
 &  & +\mathbb{E}\left[b_{0}\left(W\right)t_{\eta}\left(W\right)\right].\label{eq: pathwise_2}
\end{eqnarray}

To demonstrate pathwise differentiability of $\beta$, we require
$F\left(W,X,Y\right)$ such that
\begin{equation}
\frac{\partial\beta\left(\eta_{0}\right)}{\partial\eta'}=\mathbb{E}\left[F\left(W,X,Y\right)s_{\eta}\left(W,X,Y\right)'\right].\label{eq: pathwise_dif_condition}
\end{equation}

Setting $F\left(W,X,Y\right)$ equal to $\mathrm{\psi}_{\beta}^{\mathrm{eff}}\left(Z,\beta_{0},g_{0}\left(W\right),h_{0}\left(W\right)\right)$,
as defined in (\ref{eq: EfficientInfluenceFunction_CRC}) of the main
text, we get $\mathbb{E}\left[F\left(W,X,Y\right)s_{\eta}\left(W,X,Y\right)'\right]$
equal to (\ref{eq: pathwise_2}) since, by Lemma 4.1 of \citet{Wooldridge_JE99},
\begin{eqnarray*}
\mathbb{E}\left[\left.\left(X-e_{0}\left(W\right)\right)\left(Y-a_{0}\left(W\right)-X'b_{0}\left(W\right)\right)\right|W\right] & = & 0
\end{eqnarray*}
and iterated expectations (and the conditional mean zero property
of the score $s_{\eta}\left(\left.X,Y\right|W\right)$) further implies
that $\mathbb{E}\left[\left(b_{0}\left(W\right)-\beta_{0}\right)s_{\eta}\left(\left.X,Y\right|W\right)\right]=0$. 

\subsubsection*{Step 3: Verification that the conjectured influence function equals
the required projection:}

Observe that $\mathrm{\psi}_{\beta}^{\mathrm{eff}}\left(Z,\beta_{0},g\left(W\right),h\left(W\right)\right)$
lies in the model tangent space. Its first term is conditionally mean
zero given $W$ and hence plays the role of $s\left(\left.X,Y\right|W\right)$
. Its second term is a mean zero function of $W$ alone and hence
plays the role of $t\left(W\right)$. Since $\mathrm{\psi}_{\beta}^{\mathrm{eff}}\left(Z,\beta_{0},g_{0}\left(W\right),h_{0}\left(W\right)\right)\in\mathcal{T}$,
its projection onto $\mathcal{T}$ equals itself. Since equation (9)
of \citet[p. 106]{Newey_JAE90} is satisfied the result follows from
his Theorem 3.1.

\subsection*{Proof of Theorem \ref{thm: Large-Sample} (Large sample properties
of $\hat{\beta}$ )}

Recall that $\lambda=\left(\alpha,\gamma',\delta'\right)'$ and 
\begin{align*}
\underset{\left(1+J+JK\right)\times1}{R} & =\left(1,\left(W-\mu_{W}\right)',\left(\left(W-\mu_{W}\right)\otimes X\right)'\right)'.
\end{align*}
In what follows we let $\lambda_{*}$ denote value of $\lambda$ which
solves the just-identified population moments (\ref{eq: m1_p1}),
(\ref{eq: m2_p1}) and (\ref{eq: m3_p1}). If Assumption \ref{ass: clp_coefficients}
additionally holds in the sampled population, then we use $\lambda_{0}$
to denote the population value of $\lambda$. In this case $\lambda_{0}$
correctly specifies the form of the CLP of $Y$ given $X$ conditional
on $W$. 

In the Supplemental Web Appendix we show, without maintaining Assumption
\ref{ass: clp_coefficients}, that
\begin{align}
\lambda_{*}= & \mathbb{E}\left[RR'\right]^{-1}\mathbb{E}\left[R\left(Y-X'\beta_{0}\right)\right]\label{eq: lambda_BLP}\\
= & \mathbb{E}\left[RR'\right]^{-1}\mathbb{E}\left[R\left(a_{0}\left(W\right)+X'\left(b_{0}\left(W\right)-\beta_{0}\right)\right)\right].\nonumber 
\end{align}
Equation (\ref{eq: lambda_BLP}) implies that $R'\lambda_{*}$ is
the mean squared error minimizing linear predictor of $a_{0}\left(W\right)+X'\left(b_{0}\left(W\right)-\beta_{0}\right)$
given $R$. This interpretation of $\lambda_{*}$ is all that is required
for the first part of Theorem \ref{thm: Large-Sample}.

We will also use the notation
\begin{align*}
U_{0}= & \left(Y-R'\lambda_{0}-X'\beta_{0}\right)
\end{align*}
and
\[
U_{*}=\left(Y-R'\lambda_{*}-X'\beta_{0}\right).
\]
Note that under Assumption \ref{ass: clp_coefficients} $U_{0}$ equals
a \emph{conditional} linear prediction error. However when Assumption
\ref{ass: clp_coefficients} does not hold an implication of (\ref{eq: lambda_BLP})
is that $U_{*}$ is still an \emph{unconditional} linear predictor
error.

We also use the shorthand $e_{0}\left(W\right)=e\left(W;\phi_{0}\right)$
and $v_{0}\left(W\right)=v\left(W;\phi_{0}\right)$ in order to simplify
some of the expressions presented below. Finally we use $\theta_{0}$
to denote \emph{both} $\left(\phi_{0}',\mu_{W}',\lambda_{*}',\beta_{0}'\right)'$
and $\left(\phi_{0}',\mu_{W}',\lambda_{0}',\beta_{0}'\right)'$, with
the relevant case made clear by the context.

Next define the $\left(1+J+JK\right)\times J$ and $K\times J$ matrices
\begin{align}
\underset{\left(1+J+JK\right)\times J}{B_{1}} & =\mathbb{E}\left[R\left\{ \gamma_{*}+\left(I_{J}\otimes X\right)^{^{\prime}}\delta_{*}\right\} '\right]-\mathbb{E}\left[\begin{array}{c}
0\\
0\\
\left(I_{J}\otimes X\right)U_{*}
\end{array}\right]\label{eq: B1_matrix}\\
\underset{K\times J}{B_{2}} & =\mathbb{E}\left[\left(v_{0}\left(W\right)^{-1}\left(X-e_{0}\left(W\right)\right)\right)\left\{ \gamma_{*}+\left(I_{J}\otimes X\right)^{^{\prime}}\delta_{*}\right\} '\right].\label{eq: B2_matrix}
\end{align}
Using this notation we can write the $\left(L+J+1+J+JK+K\right)\times\left(L+J+1+J+JK+K\right)$
Jacobian matrix of the moment vector as
\[
M=-\left(\begin{array}{cccc}
-\mathbb{H}\left(\phi_{0}\right) & 0 & 0 & 0\\
0 & I_{J} & 0 & 0\\
0 & -B_{1} & \mathbb{E}\left[RR'\right] & \mathbb{E}\left[RX'\right]\\
\mathbb{E}\left[v_{0}\left(W\right)^{-1}\left(X-e_{0}\left(W\right)\right)U_{*}\mathbb{S}'\right] & -B_{2} & 0 & I_{K}
\end{array}\right),
\]
with $\mathbb{H}\left(\phi_{0}\right)$ equal to the $L\times L$
expected Hessian matrix associated with the generalized propensity
score log-likelihood. The inverse Jacobian is therefore
\begin{align}
M^{-1} & =-\left(\begin{array}{c}
-\mathbb{H}\left(\phi_{0}\right)^{-1}\\
0\\
-\mathbb{E}\left[RR'\right]^{-1}\mathbb{E}\left[RX'\right]\mathbb{E}\left[v_{0}\left(W\right)^{-1}\left(X-e_{0}\left(W\right)\right)U_{*}\mathbb{S}'\right]\mathbb{H}\left(\phi_{0}\right)^{-1}\\
\mathbb{E}\left[v_{0}\left(W\right)^{-1}\left(X-e_{0}\left(W\right)\right)U_{*}\mathbb{S}'\right]\mathbb{H}\left(\phi_{0}\right)^{-1}
\end{array}\right.\nonumber \\
 & \left.\begin{array}{ccc}
0 & 0 & 0\\
I_{J} & 0 & 0\\
\mathbb{E}\left[RR'\right]^{-1}\left(B_{1}-B_{2}\mathbb{E}\left[RX'\right]\right) & \mathbb{E}\left[RR'\right]^{-1} & -\mathbb{E}\left[RR'\right]^{-1}\mathbb{E}\left[RX'\right]\\
B_{2} & 0 & I_{K}
\end{array}\right).\label{eq: inverse_Jacobian}
\end{align}

Under Assumption \ref{ass: genalize_p_score} a key observation is
that the expected value of (\ref{eq: B2_matrix}) equals
\begin{align*}
\mathbb{E}\left[\left(v_{0}\left(W\right)^{-1}\left(X-e_{0}\left(W\right)\right)\right)\left\{ \gamma_{*}+\left(I_{J}\otimes X\right)^{^{\prime}}\delta_{*}\right\} '\right] & =\mathbb{E}\left[\left(v_{0}\left(W\right)^{-1}\left(X-e_{0}\left(W\right)\right)\right)\gamma_{*}'\right]\\
 & +\mathbb{E}\left[\left(v_{0}\left(W\right)^{-1}\left(X-e_{0}\left(W\right)\right)\right)\delta_{*}'\left(I_{J}\otimes X\right)\right]\\
= & 0+\mathbb{E}\left[\left(v_{0}\left(W\right)^{-1}\left(X-e_{0}\left(W\right)\right)\right)\right.\\
 & \left.\times\left(\begin{array}{ccc}
X'\delta_{1*} & \cdots & X'\delta_{J*}\end{array}\right)\right]\\
= & \left(\begin{array}{ccc}
\delta_{1*} & \cdots & \delta_{J*}\end{array}\right)=\Delta_{*}.
\end{align*}

Using this last equality, as well as the fact that under Assumption
\ref{ass: genalize_p_score} we have $\mathbb{H}\left(\phi_{0}\right)=-\mathbb{E}\left[\mathbb{S}\mathbb{S}'\right]$,
implies that the last $K$ rows of $-M^{-1}\frac{1}{\sqrt{N}}\sum_{i=1}^{N}m\left(Z_{i},\theta_{0}\right)+o_{p}\left(1\right)$
equal, after some manipulation,
\begin{align}
\sqrt{N}\left(\hat{\beta}-\beta_{0}\right)= & \frac{1}{\sqrt{N}}\sum_{i=1}^{N}\left\{ v_{0}\left(W_{i}\right)^{-1}\left(X_{i}-e_{0}\left(W_{i}\right)\right)U_{*i}\right.\nonumber \\
 & -\mathbb{E}\left[v_{0}\left(W\right)^{-1}\left(X-e_{0}\left(W\right)\right)U_{*}\mathbb{S}'\right]\mathbb{E}\left[\mathbb{S}\mathbb{S}'\right]^{-1}\mathbb{S}_{i}\nonumber \\
 & \left.+\Delta_{*}\left(W_{i}-\mu_{W}\right)\right\} +o_{p}\left(1\right).\label{eq: beta_hat_general_influence_a}
\end{align}

Next observe that we may decompose $U_{*}$ as
\begin{align*}
U_{*}= & Y-R'\lambda_{*}-X'\beta_{0}\\
= & Y-a_{0}\left(W\right)-X'b_{0}\left(W\right)\\
 & +\left\{ a_{0}\left(W\right)+X'\left(b_{0}\left(W\right)-\beta_{0}\right)-R'\lambda_{*}\right\} \\
= & U_{0}+\epsilon.
\end{align*}
Since $\mathbb{E}\left[U_{*}W\right]=0$ by the properties of linear
predictors, $\mathbb{E}\left[\left.U_{0}\right|W\right]=0$ by the
properties of \emph{conditional }linear predictors, and $U_{*}=U_{0}+\epsilon$,
we have that $\mathbb{E}\left[\epsilon W\right]=0$. Defining $\tilde{\epsilon}=v_{0}\left(W\right)^{-1}\left(X-e_{0}\left(W\right)\right)\epsilon$
we can re-write \ref{eq: beta_hat_general_influence_a} as
\begin{align}
\sqrt{N}\left(\hat{\beta}-\beta_{0}\right) & =\frac{1}{\sqrt{N}}\sum_{i=1}^{N}\left\{ v_{0}\left(W_{i}\right)^{-1}\left(X_{i}-e_{0}\left(W_{i}\right)\right)U_{0i}\right.\nonumber \\
 & \left.+\left(\tilde{\epsilon}_{i}-\Pi_{\tilde{\epsilon}\mathbb{S}}\mathbb{S}_{i}\right)+\Delta_{*}\left(W_{i}-\mu_{W}\right)\right\} +o_{p}\left(1\right)\label{eq: beta_hat_general_influence_b}
\end{align}
where $\Pi_{\tilde{\epsilon}\mathbb{S}}=\mathbb{E}\left[\tilde{\epsilon}_{i}\mathbb{S}'\right]\mathbb{E}\left[\mathbb{S}\mathbb{S}'\right]^{-1}$.
This gives the first implication of the Theorem. The second implication
follows from the fact that $\epsilon=0$ and $\Delta_{*}\mathbb{V}\left(W\right)\Delta_{*}'=\mathbb{V}\left(b_{0}\left(W\right)\right)$
under Assumption \ref{ass: clp_coefficients}.

\subsection*{Proof of Proposition \ref{prop: near_efficiency} (Near global semiparametric
efficiency)}

Let $\mathbf{A}$ be an $m\times n$ matrix with $\left\Vert \mathbf{A}\right\Vert _{F}=\mathrm{Tr}\left(\mathbf{A}'\mathbf{A}\right)^{1/2}$
denoting the Frobenius matrix norm, $\left\Vert \mathbf{A}\right\Vert _{2}$
the spectral norm and recall that $\left\Vert \mathbf{A}\right\Vert _{2}\leq\left\Vert \mathbf{A}\right\Vert _{F}$.
Let $\mathbf{a}$ be an $n\times1$ vector with Euclidean norm $\left\Vert \mathbf{a}\right\Vert =\left(\mathbf{a}'\mathbf{a}\right)^{1/2}$.
We make use of several matrix and probability inequalities in what
follows. These are drawn from \citet[Appendices A \& B]{Hansen_Book18}
unless stated otherwise.

Let $t$ be a non-zero column vector. The difference in the asymptotic
variance of the estimate of the linear combination $t'\beta_{0}$
based upon $R^{\left(J\right)}$ and a corresponding semiparametrically
efficient estimate is
\begin{align}
t'\mathcal{I}^{\left(J\right)}\left(\beta_{0}\right)^{-1}t-t'\mathcal{I}\left(\beta_{0}\right)^{-1}t= & t'\Delta_{*}^{\left(J\right)}\mathbb{V}\left(k^{\left(J\right)}\left(W\right)\right)\left(\Delta_{*}^{\left(J\right)}\right)'t-t'\mathbb{V}\left(b_{0}\left(W\right)\right)t\nonumber \\
 & +t'\mathbb{E}\left[\left(\tilde{\epsilon}^{\left(J\right)}-\Pi_{\tilde{\epsilon}\mathbb{S}}^{\left(J\right)}\mathbb{S}\right)\left(\tilde{\epsilon}^{\left(J\right)}-\Pi_{\tilde{\epsilon}\mathbb{S}}^{\left(J\right)}\mathbb{S}\right)'\right]t\nonumber \\
\geq & 0.\label{eq: var_dif_1}
\end{align}
We seek to show that this variance difference is also bounded above
by something that can be made arbitrarily close to zero.

To start observe that, after some manipulation (see the Supplemental
Web Appendix) we can show that
\begin{align}
\mathbb{V}\left(b_{0}\left(W\right)+\Delta_{*}^{\left(J\right)}\left(k^{\left(J\right)}\left(W\right)-\mu^{\left(J\right)}\right)-\beta_{0}\right)= & \Delta_{*}^{\left(J\right)}\mathbb{V}\left(k^{\left(J\right)}\left(W\right)\right)\left(\Delta_{*}^{\left(J\right)}\right)'-\mathbb{V}\left(b_{0}\left(W\right)\right)\nonumber \\
 & -2\mathbb{E}\left[\left(b_{0}\left(W\right)-\beta_{0}\right)\right.\nonumber \\
 & \left.\times\left\{ b_{0}\left(W\right)+\Delta_{*}^{\left(J\right)}\left(k^{\left(J\right)}\left(W\right)-\mu^{\left(J\right)}\right)-\beta_{0}\right\} '\right]\label{eq: rewritten_key_variance}
\end{align}
Using (\ref{eq: rewritten_key_variance}) we can rewrite $t'\mathcal{I}^{\left(J\right)}\left(\beta_{0}\right)^{-1}t-t'\mathcal{I}\left(\beta_{0}\right)^{-1}t$
as 
\begin{align}
 & t'\mathbb{V}\left(b_{0}\left(W\right)+\Delta_{*}^{\left(J\right)}\left(k^{\left(J\right)}\left(W\right)-\mu^{\left(J\right)}\right)-\beta_{0}\right)t\nonumber \\
 & +2t'\mathbb{E}\left[\left(b_{0}\left(W\right)-\beta_{0}\right)\left\{ b_{0}\left(W\right)+\Delta_{*}^{\left(J\right)}\left(k^{\left(J\right)}\left(W\right)-\mu^{\left(J\right)}\right)-\beta_{0}\right\} '\right]t\nonumber \\
 & +t'\mathbb{E}\left[\left(\tilde{\epsilon}^{\left(J\right)}-\Pi_{\tilde{\epsilon}\mathbb{S}}^{\left(J\right)}\mathbb{S}\right)\left(\tilde{\epsilon}^{\left(J\right)}-\Pi_{\tilde{\epsilon}\mathbb{S}}^{\left(J\right)}\mathbb{S}\right)'\right]t.\label{eq: var_dif_2}
\end{align}
Consider the first term in (\ref{eq: var_dif_2}). The Quadratic Inequality
(QI), Expectation Inequality (EI), and completeness of the sequence
$\left\{ k_{j}\left(W\right)\right\} _{j=1}^{\infty}$ (see equation
(\ref{eq: L2-complete})) give

\begin{equation}
t'\mathbb{V}\left(b_{0}\left(W\right)+\Delta_{*}^{\left(J\right)}\left(k^{\left(J\right)}\left(W\right)-\mu^{\left(J\right)}\right)-\beta_{0}\right)t\leq C_{1}\zeta^{2},\label{eq: term_1_bound}
\end{equation}
with $C_{1}$ a constant.

Next consider the second term in (\ref{eq: var_dif_2}). Applying
the Cauchy-Schwarz inequality to this term yields
\begin{align*}
\left|t'\mathbb{E}\left[\left(b_{0}\left(W\right)-\beta_{0}\right)\left\{ b_{0}\left(W\right)+\Delta_{*}^{\left(J\right)}\left(k^{\left(J\right)}\left(W\right)-\mu^{\left(J\right)}\right)-\beta_{0}\right\} '\right]t\right|\leq & \mathbb{V}\left(t'b_{0}\left(W\right)\right)^{1/2}\\
 & \times\mathbb{V}\left(t'\left\{ b_{0}\left(W\right)\right.\right.\\
 & \left.\left.+\Delta_{*}^{\left(J\right)}\left(k^{\left(J\right)}\left(W\right)-\mu^{\left(J\right)}\right)-\beta_{0}\right\} \right)^{1/2}
\end{align*}
Again invoking completeness of the sequence $\left\{ k_{j}\left(W\right)\right\} _{j=1}^{\infty}$,
and also boundedness of the variance of $b_{0}\left(W\right)$, we
then get
\begin{equation}
\left|t'\mathbb{E}\left[\left(b_{0}\left(W\right)-\beta_{0}\right)\left\{ b_{0}\left(W\right)+\Delta_{*}^{\left(J\right)}\left(k^{\left(J\right)}\left(W\right)-\mu^{\left(J\right)}\right)-\beta_{0}\right\} '\right]t\right|\leq C_{2}\zeta,\label{eq: term_2_bound}
\end{equation}
with $C_{2}$ a constant (which depends on $\mathbb{V}\left(b_{0}\left(W\right)\right)$).

Finally consider the third term in (\ref{eq: var_dif_1}). To analyze
this term we start by writing the linear predictor approximation error
of $\left(R^{\left(J\right)}\right)'\lambda_{*}^{\left(J\right)}$
for $a_{0}\left(W\right)+X'\left(b_{0}\left(W\right)-\beta_{0}\right)$
as
\begin{align*}
\epsilon^{\left(J\right)}= & \left\{ a_{0}\left(W\right)+X'\left(b_{0}\left(W\right)-\beta_{0}\right)-\left(R^{\left(J\right)}\right)'\lambda_{*}^{\left(J\right)}\right\} \\
= & a\left(W\right)-\alpha_{*}^{\left(J\right)}-\left(k^{\left(J\right)}\left(W\right)-\mu^{\left(J\right)}\right)'\gamma_{*}^{\left(J\right)}\\
 & +X'\left(b_{0}\left(W\right)-\beta_{0}-\Delta_{*}^{\left(J\right)}\left(k^{\left(J\right)}\left(W\right)-\mu^{\left(J\right)}\right)\right)\\
= & \left(1,X'\right)\delta^{\left(J\right)}\left(W\right),
\end{align*}
with the final equality following from definition (\ref{eq: delta(W)_J}).
The EI and the fact that, for $\mathbf{a}$ and $\mathbf{b}$ $m\times1$
vectors $\left\Vert \mathbf{a}\mathbf{b}'\right\Vert _{F}=\left\Vert \mathbf{a}\right\Vert \left\Vert \mathbf{b}\right\Vert $,
then gives
\begin{align*}
\left\Vert \mathbb{E}\left[\left(\tilde{\epsilon}^{\left(J\right)}-\Pi_{\tilde{\epsilon}\mathbb{S}}^{\left(J\right)}\mathbb{S}\right)\left(\tilde{\epsilon}^{\left(J\right)}-\Pi_{\tilde{\epsilon}\mathbb{S}}^{\left(J\right)}\mathbb{S}\right)'\right]\right\Vert  & \leq\mathbb{E}\left[\left\Vert \tilde{\epsilon}^{\left(J\right)}-\Pi_{\tilde{\epsilon}\mathbb{S}}^{\left(J\right)}\mathbb{S}\right\Vert ^{2}\right]
\end{align*}
with
\[
\tilde{\epsilon}^{\left(J\right)}=v_{0}\left(W\right)^{-1}\left(X-e_{0}\left(W\right)\right)\left\{ \left(1,X'\right)\delta^{\left(J\right)}\left(W\right)\right\} .
\]
By the norm-reducing property of projection and Schwarz Matrix Inequality
(SMI) we further get that 
\begin{align*}
\left\Vert \tilde{\epsilon}^{\left(J\right)}-\Pi_{\tilde{\epsilon}\mathbb{S}}^{\left(J\right)}\mathbb{S}\right\Vert  & \leq\left\Vert \tilde{\epsilon}^{\left(J\right)}\right\Vert \\
 & =\left\Vert v_{0}\left(W\right)^{-1}\left(X-e_{0}\left(W\right)\right)\left\{ \left(1,X'\right)\delta^{\left(J\right)}\left(W\right)\right\} \right\Vert \\
 & \leq\left\Vert v_{0}\left(W\right)^{-1}\left(X-e_{0}\left(W\right)\right)\left\{ \left(1,X'\right)\right\} \right\Vert \left\Vert \delta^{\left(J\right)}\left(W\right)\right\Vert .
\end{align*}
Applying the expectation operator, invoking Assumption \ref{ass: overlap},
and using the compact support assumption for $\left(W,X\right)$,
finally gives
\begin{align}
\mathbb{E}\left[\left\Vert \tilde{\epsilon}^{\left(J\right)}-\Pi_{\tilde{\epsilon}\mathbb{S}}^{\left(J\right)}\mathbb{S}\right\Vert ^{2}\right]\leq & \mathbb{E}\left[\left\Vert v_{0}\left(W\right)^{-1}\left(X-e_{0}\left(W\right)\right)\left\{ \left(1,X'\right)\right\} \right\Vert ^{2}\left\Vert \delta^{\left(J\right)}\left(W\right)\right\Vert ^{2}\right]\nonumber \\
\leq & C_{3}\mathbb{E}\left[\left\Vert \delta^{\left(J\right)}\left(W\right)\right\Vert ^{2}\right]\nonumber \\
\leq & C_{3}\zeta^{2}\label{eq: term_3_bound}
\end{align}
with $C_{3}=\underset{w,x\in\mathbb{W},\mathbb{X}}{\sup}\left\Vert v_{0}\left(W\right)^{-1}\left(X-e_{0}\left(W\right)\right)\left\{ \left(1,X'\right)\right\} \right\Vert ^{2}$.

Applying the TI to (\ref{eq: var_dif_2}) and using terms (\ref{eq: term_1_bound}),
(\ref{eq: term_2_bound}) and (\ref{eq: term_3_bound}) then gives
the bound
\begin{equation}
0\leq t'\mathcal{I}^{\left(J\right)}\left(\beta_{0}\right)^{-1}t-t'\mathcal{I}\left(\beta_{0}\right)^{-1}t\leq\left(C_{1}+C_{3}\right)\zeta^{2}+C_{2}\zeta.\label{eq: inefficient_ineq_3}
\end{equation}
Since $\zeta$ is arbitrary the limit of the difference in (\ref{eq: inefficient_ineq_3})
is zero.

\bibliographystyle{apalike2}
\bibliography{0_Users_bgraham_Dropbox_Research_Networks_Networks_Book_Reference_BibTex_Networks_References}

\pagebreak{}

\section{\label{sec: Supplemental_Web_Appendix}Supplemental web appendix
for ``Semiparametrically efficient estimation of the average linear
regression function'' by Bryan Graham and Cristine Pinto}

This supplemental web appendix contains proofs of the results not
included in the main appendix as well as additional detailed calculations
for some proof steps. All notation is as defined in the main text
and/or appendix unless explicitly noted otherwise. Equation numbering
continues in sequence with that established in the main text and its
appendix. 

\subsection*{Proof of Proposition \ref{prop: wgt_der}}

Begin by noting that under Assumption \ref{ass: wgt_der} we have
\begin{align*}
h\left(x,U\right) & =\underline{h}\left(U\right)+\int_{\underline{x}}^{x}\frac{\partial h\left(t,U\right)}{\partial x}\mathrm{d}t\\
 & =\underline{h}\left(U\right)+\int_{\underline{x}}^{\bar{x}}\frac{\partial h\left(t,U\right)}{\partial x}\mathbf{1}\left(x\geq t\right)\mathrm{d}t,
\end{align*}
which, invoking conditional independence yields
\begin{align*}
\mathbb{E}\left[\frac{X-e_{0}\left(W\right)}{v_{0}\left(W\right)}\underline{h}\left(U\right)\right] & =\mathbb{E}\left[\frac{X-e_{0}\left(W\right)}{v_{0}\left(W\right)}\mathbb{E}\left[\left.\underline{h}\left(U\right)\right|W,X\right]\right]\\
 & =\mathbb{E}\left[\frac{X-e_{0}\left(W\right)}{v_{0}\left(W\right)}\mathbb{E}\left[\left.\underline{h}\left(U\right)\right|W\right]\right]\\
 & =\mathbb{E}\left[\mathbb{E}\left[\left.\frac{X-e_{0}\left(W\right)}{v_{0}\left(W\right)}\right|W\right]\mathbb{E}\left[\left.\underline{h}\left(U\right)\right|W\right]\right]\\
 & =\mathbb{E}\left[v_{0}\left(W\right)^{-1}\mathbb{E}\left[\left.X-e_{0}\left(W\right)\right|W\right]\mathbb{E}\left[\left.\underline{h}\left(U\right)\right|W\right]\right]\\
 & =\mathbb{E}\left[v_{0}\left(W\right)^{-1}\cdot0\cdot\mathbb{E}\left[\left.\underline{h}\left(U\right)\right|W\right]\right]\\
 & =0.
\end{align*}

Using this result we can re-write the $\beta_{0}$ estimand as follows:

\begin{align*}
\mathbb{E}\left[\frac{X-e_{0}\left(W\right)}{v_{0}\left(W\right)}Y\right] & =\mathbb{E}\left[\frac{X-e_{0}\left(W\right)}{v_{0}\left(W\right)}\int_{\underline{x}}^{\bar{x}}\frac{\partial h\left(t,U\right)}{\partial x}\mathbf{1}\left(X\geq t\right)\mathrm{d}t\right]\\
 & =\mathbb{E}\left[\int_{\underline{x}}^{\bar{x}}\frac{\partial h\left(t,U\right)}{\partial x}\mathbf{1}\left(X\geq t\right)\frac{X-e_{0}\left(W\right)}{v_{0}\left(W\right)}\mathrm{d}t\right]\\
 & =\mathbb{E}\left[\int_{\underline{x}}^{\bar{x}}\mathbb{E}\left[\left.\frac{\partial h\left(t,U\right)}{\partial x}\right|W,X\right]\mathbf{1}\left(X\geq t\right)\frac{X-e_{0}\left(W\right)}{v_{0}\left(W\right)}\mathrm{d}t\right]\\
 & =\mathbb{E}\left[\int_{\underline{x}}^{\bar{x}}\mathbb{E}\left[\left.\frac{\partial h\left(t,U\right)}{\partial x}\right|W\right]\mathbf{1}\left(X\geq t\right)\frac{X-e_{0}\left(W\right)}{v_{0}\left(W\right)}\mathrm{d}t\right]\\
 & =\mathbb{E}\left[\int_{\underline{x}}^{\bar{x}}\mathbb{E}\left[\left.\frac{\partial h\left(t,U\right)}{\partial x}\right|W\right]\mathbb{E}\left[\left.\frac{X-e_{0}\left(W\right)}{v_{0}\left(W\right)}\right|W,X\geq t\right]\left(1-F_{\left.X\right|W}\left(\left.t\right|W\right)\right)\mathrm{d}t\right].
\end{align*}
Next observe that
\begin{align*}
v_{0}\left(w\right) & =\mathbb{E}\left[\left.X\left(X-e_{0}\left(W\right)\right)\right|W=w\right]\\
 & =\mathbb{E}\left[\left.\int_{\underline{x}}^{\bar{x}}\left(X-e_{0}\left(W\right)\right)\mathrm{d}t\right|W=w\right]\\
 & =\int_{\underline{x}}^{\bar{x}}\mathbb{E}\left[\left.X-e_{0}\left(W\right)\right|W=w,X\geq t\right]\left(1-F_{\left.X\right|W}\left(\left.t\right|w\right)\right)\mathrm{d}t.
\end{align*}
Putting all these pieces together we have
\begin{align*}
\mathbb{E}\left[\frac{X-e_{0}\left(W\right)}{v_{0}\left(W\right)}Y\right]= & \mathbb{E}\left[\int_{\underline{x}}^{\bar{x}}\mathbb{E}\left[\left.\frac{\partial h\left(t,U\right)}{\partial x}\right|W\right]\frac{\mathbb{E}\left[\left.X-e_{0}\left(W\right)\right|W,X\geq t\right]\left(1-F_{\left.X\right|W}\left(\left.t\right|W\right)\right)}{\int_{\underline{x}}^{\bar{x}}\mathbb{E}\left[\left.X-e_{0}\left(W\right)\right|W,X\geq v\right]\left(1-F_{\left.X\right|W}\left(\left.v\right|W\right)\right)\mathrm{d}v}\mathrm{d}t\right]\\
= & \mathbb{E}\left[\int_{\underline{x}}^{\bar{x}}\int_{-\infty}^{\infty}\left(\frac{\partial h\left(t,u\right)}{\partial x}f_{\left.U\right|W,X}\left(\left.u\right|w,t\right)\mathrm{d}u\right.\right.\\
 & \left.\left.\times\frac{1}{f_{\left.X\right|W}\left(\left.t\right|W\right)}\frac{\mathbb{E}\left[\left.X-e_{0}\left(W\right)\right|W,X\geq t\right]\left(1-F_{\left.X\right|W}\left(\left.t\right|W\right)\right)}{\int_{\underline{x}}^{\bar{x}}\mathbb{E}\left[\left.X-e_{0}\left(W\right)\right|W,X\geq v\right]\left(1-F_{\left.X\right|W}\left(\left.v\right|W\right)\right)\mathrm{d}v}f_{\left.X\right|W}\left(\left.t\right|W\right)\right)\mathrm{d}t\right]\\
= & \mathbb{E}\left[\int_{\underline{x}}^{\bar{x}}\int_{-\infty}^{\infty}\left(\frac{\partial h\left(t,u\right)}{\partial x}\frac{1}{f_{\left.X\right|W}\left(\left.t\right|W\right)}\right.\right.\\
 & \left.\left.\times\frac{\mathbb{E}\left[\left.X-e_{0}\left(W\right)\right|W,X\geq t\right]\left(1-F_{\left.X\right|W}\left(\left.t\right|W\right)\right)}{\int_{\underline{x}}^{\bar{x}}\mathbb{E}\left[\left.X-e_{0}\left(W\right)\right|W,X\geq v\right]\left(1-F_{\left.X\right|W}\left(\left.v\right|W\right)\right)\mathrm{d}v}f_{\left.U,X\right|W}\left(\left.u,t\right|W\right)\right)\mathrm{d}u\mathrm{d}t\right]\\
 & =\mathbb{E}\left[\omega\left(W,X\right)\frac{\partial h\left(X,U\right)}{\partial x}\right],
\end{align*}
with
\[
\omega\left(w,x\right)=\frac{1}{f_{\left.X\right|W}\left(\left.x\right|w\right)}\frac{\mathbb{E}\left[\left.X-e_{0}\left(W\right)\right|W=w,X\geq x\right]\left(1-F_{\left.X\right|W}\left(\left.x\right|w\right)\right)}{\int_{\underline{x}}^{\bar{x}}\mathbb{E}\left[\left.X-e_{0}\left(W\right)\right|W=w,X\geq v\right]\left(1-F_{\left.X\right|W}\left(\left.v\right|w\right)\right)\mathrm{d}v}.
\]

\subsection*{Proof of Corollary \ref{cor: redundancy_of_f(x|w)}}

Let $f\left(\left.x\right|w;\phi\right)$ be a known parametric family
of conditional distributions for $X$ given $W$. Let $f_{0}\left(\left.x\right|w\right)=f\left(\left.x\right|w;\phi\right)$
at some unique $\phi=\phi_{0}$. Relative to that considered in Theorem
\ref{thm: SEB}, the parametric submodel changes to
\[
f\left(w,x,y;\eta\right)=f\left(\left.y\right|w,x\right)f\left(\left.x\right|w;\phi\left(\eta\right)\right)f\left(w;\eta\right)
\]
with an associated score vector of
\begin{equation}
s_{\eta}\left(w,x,y;\eta\right)=s_{\eta}\left(\left.y\right|w,x;\eta\right)+\left(\frac{\partial\phi\left(\eta\right)}{\partial\eta'}\right)'\mathbb{S}_{\phi}\left(\left.x\right|w;\phi\right)+t_{\eta}\left(w;\eta\right),\label{eq: score_parametric_f(x|w)}
\end{equation}
where $\mathbb{S}_{\phi}\left(\left.x\right|w;\phi\right)$ is the
score function associated with the parametric conditional log-likelihood
for $\phi$.

From (\ref{eq: score_parametric_f(x|w)}), and the usual (conditional)
mean zero properties of score functions, the tangent set is evidently
\[
\mathcal{T}=\left\{ s\left(\left.y\right|w,x\right)+\mathbf{c}\mathbb{S}_{\phi}\left(\left.x\right|w\right)+t\left(w\right)\right\} 
\]
where $\mathbb{S}_{\phi}\left(\left.x\right|w\right)=\mathbb{S}_{\phi}\left(\left.x\right|w;\phi_{0}\right)$,
$\mathbf{c}$ is a matrix of constants, and
\[
\mathbb{E}\left[\left.s\left(\left.Y\right|W,X\right)\right|W,X\right]=\mathbb{E}\left[\left.\mathbb{S}_{\phi}\left(\left.X\right|W\right)\right|W\right]=\mathbb{E}\left[t\left(W\right)\right]=0.
\]

To show pathwise differentiability, begin by noting that $\beta\left(\eta\right)$
continues to equal (\ref{eq: beta(eta)}), however $b\left(w;\eta\right)$
now satisfies the modified conditional moment restriction 
\begin{equation}
\int\int\left(\begin{array}{c}
1\\
x
\end{array}\right)\left(y-a\left(w;\eta\right)-x'b\left(w;\eta\right)\right)f\left(\left.y\right|w,x;\eta\right)f\left(\left.x\right|w;\phi\left(\eta\right)\right)\mathrm{d}x\mathrm{d}y=0.\label{eq: CLP(eta)_parametric_f(x|w)}
\end{equation}

We can derive a close-form expression for $\frac{\partial b\left(w;\eta_{0}\right)}{\partial\eta'}$
in (\ref{eq: pathwise_1}) by differentiating (\ref{eq: CLP(eta)_parametric_f(x|w)})
with respect to $\eta$ (and evaluating at $\eta=\eta_{0}$): 
\begin{eqnarray*}
-\int\int\left(\begin{array}{c}
1\\
x
\end{array}\right)\frac{\partial a\left(w;\eta_{0}\right)}{\partial\eta'}f\left(\left.y\right|w,x;\eta_{0}\right)f\left(\left.x\right|w;\phi_{0}\right)\mathrm{d}x\mathrm{d}y\\
-\int\int\left(\begin{array}{c}
x'\\
xx'
\end{array}\right)\frac{\partial b\left(w;\eta_{0}\right)}{\partial\eta'}f\left(\left.y\right|w,x;\eta_{0}\right)f\left(\left.x\right|w;\phi_{0}\right)\mathrm{d}x\mathrm{d}y\\
+\left(\int\int\left(\begin{array}{c}
1\\
x
\end{array}\right)\left(y-x'b\left(w;\eta_{0}\right)\right)\left\{ s_{\eta}\left(\left.y\right|w,x;\eta_{0}\right)+\left(\frac{\partial\phi\left(\eta_{0}\right)}{\partial\eta'}\right)'\mathbb{S}_{\phi}\left(\left.x\right|w\right)\right\} \right.\\
\left.\times f\left(\left.y\right|w,x;\eta_{0}\right)f\left(\left.x\right|w;\phi_{0}\right)\mathrm{d}x\mathrm{d}y\right) & = & 0
\end{eqnarray*}
Analogous to the corresponding calculations given in the proof of
Theorem \ref{thm: SEB} we can solve to get
\begin{eqnarray*}
\left(\begin{array}{c}
\frac{\partial a\left(w;\eta_{0}\right)}{\partial\eta'}\\
\frac{\partial b\left(w;\eta_{0}\right)}{\partial\eta'}
\end{array}\right) & = & \left(\begin{array}{cc}
1 & -e\left(w;\phi_{0}\right)'v\left(w;\phi_{0}\right)^{-1}\\
-v\left(w;\phi_{0}\right)^{-1}e\left(w;\phi_{0}\right) & v\left(w;\phi_{0}\right)^{-1}
\end{array}\right)\\
 &  & \times\mathbb{E}\left[\left(\begin{array}{c}
Y-a\left(W;\eta_{0}\right)-X'b\left(W;\eta_{0}\right)\\
X\left(Y-a\left(W;\eta_{0}\right)-X'b\left(W;\eta_{0}\right)\right)
\end{array}\right)\right.\\
 &  & \left.\left.\times\left\{ s_{\eta}\left(\left.Y\right|W,X;\eta_{0}\right)+\left(\frac{\partial\phi\left(\eta_{0}\right)}{\partial\eta'}\right)'\mathbb{S}_{\phi}\left(\left.X\right|W\right)\right\} \right|W=w\right].
\end{eqnarray*}
Plugging the second row of the above expression into (\ref{eq: pathwise_1}),
which remains unchanged relative to its form in the proof of Theorem
\ref{thm: SEB}, we get

\begin{eqnarray}
\frac{\partial\beta\left(\eta_{0}\right)}{\partial\eta'} & = & \mathbb{E}\left[v\left(W;\phi_{0}\right)^{-1}\left(X-e\left(W;\phi_{0}\right)\right)\left(Y-a_{0}\left(W\right)-X'b_{0}\left(W\right)\right)\right.\nonumber \\
 &  & \left.\times\left\{ s_{\eta}\left(\left.Y\right|W,X;\eta_{0}\right)+\left(\frac{\partial\phi\left(\eta_{0}\right)}{\partial\eta'}\right)'\mathbb{S}_{\phi}\left(\left.X\right|W\right)\right\} \right]\nonumber \\
 &  & +\mathbb{E}\left[b_{0}\left(W\right)t_{\eta}\left(W\right)\right].\label{eq: pathwise_parametric_f(x|w)}
\end{eqnarray}
Now observe that (\ref{eq: EfficientInfluenceFunction_CRC}) remains
a pathwise derivative. Furthermore (\ref{eq: EfficientInfluenceFunction_CRC})
continues to lie in the tangent space with its first component playing
the role of $s\left(\left.x,y\right|w\right)=s\left(\left.y\right|w,x\right)+\mathbf{c}\mathbb{S}_{\phi}\left(\left.x\right|w\right)$
and its second component that of $t\left(w\right)$. The claim again
follows from Theorem 3.1 of \citet{Newey_JAE90}.

\subsection*{Detailed calculations for proof of Theorem (\ref{thm: Large-Sample})}

Let $m\left(Z_{i},\theta\right)$ be the $\left(L+J+1+J+JK+K\right)\times1$
vector of moment conditions as defined in the main text. In this appendix
we work with the more refined partition of this vector:
\begin{align}
m_{1}(X_{i},W_{i},\phi)= & \mathbb{S}_{\phi}\left(\left.X_{i}\right|W_{i};\phi\right)\label{eq: m1}\\
m_{2}(W_{i},\mu_{W})= & W_{i}-\mu_{W}\label{eq: m2}\\
m_{3}(Z_{i},\mu_{W},\lambda,\beta)= & U_{i}\left(\mu_{W},\lambda,\beta\right)\label{eq: m3}\\
m_{4}(Z_{i},\mu_{W},\lambda,\beta)= & \left(W_{i}-\mu_{W}\right)U_{i}\left(\mu_{W},\lambda,\beta\right)\label{eq: m4}\\
m_{5}(Z_{i},\mu_{W},\lambda,\beta)= & \left(\left(W_{i}-\mu_{W}\right)\varotimes X_{i}\right)U_{i}\left(\mu_{W},\lambda,\beta\right)\label{eq: m5}\\
m_{5}(Z_{i},\phi,\mu_{W},\lambda,\beta)= & v\left(W;\phi\right)^{-1}\left(X-e\left(W;\phi\right)\right)U_{i}\left(\mu_{W},\lambda,\beta\right)\label{eq: m6}
\end{align}
where $\theta=\left(\phi,\mu_{W},\lambda',\beta\right)'$ with $\dim\left(\theta\right)=L+J+1+J+JK+K$
as before.

The Jacobian of the moment vector equals
\[
M=\left(\begin{array}{cccccc}
M_{11} & M_{12} & M_{13} & M_{14} & M_{15} & M_{16}\\
M_{21} & M_{22} & M_{23} & M_{24} & M_{25} & M_{26}\\
M_{31} & M_{32} & M_{33} & M_{34} & M_{35} & M_{36}\\
M_{41} & M_{42} & M_{43} & M_{44} & M_{45} & M_{46}\\
M_{51} & M_{52} & M_{53} & M_{54} & M_{55} & M_{56}\\
M_{61} & M_{62} & M_{63} & M_{64} & M_{65} & M_{66}
\end{array}\right).
\]
Considering the first block of columns in $M$, we have that
\begin{align*}
\underset{L\times L}{M_{11}} & =\mathbb{H}\left(\phi_{0}\right)
\end{align*}
with $\mathbb{H}\left(\phi_{0}\right)$ equal the $L\times L$ expected
Hessian matrix associated with the generalized propensity score log-likelihood
(under Assumption \ref{ass: genalize_p_score} we have that $-\mathbb{H}\left(\phi_{0}\right)=\mathbb{E}\left[\mathbb{S}\mathbb{S}'\right]$).
We also have that
\[
\underset{J\times L}{M_{21}}=0,\thinspace\thinspace\underset{1\times L}{M_{31}}=0,\thinspace\thinspace\underset{J\times L}{M_{41}}=0,\thinspace\thinspace\underset{JK\times L}{M_{51}}=0,
\]
and, finally
\begin{align*}
\underset{K\times L}{M_{61}}= & \mathbb{E}\left[\left(\left[\begin{array}{cc}
\begin{array}{c}
-v\left(W,\phi_{0}\right)^{-1}\frac{\partial v\left(W,\phi_{0}\right)}{\partial\phi_{1}}v\left(W,\phi_{0}\right)^{-1}\left(X-e\left(W,\phi_{0}\right)\right)\end{array} & \cdots\end{array}\right.\right.\right.\\
 & \left.\cdots-v\left(W,\phi_{0}\right)^{-1}\frac{\partial v\left(W,\phi_{0}\right)}{\partial\phi_{L}}v\left(W,\phi_{0}\right)^{-1}\left(X-e\left(W,\phi_{0}\right)\right)\right]\\
 & \left.\left.+v\left(W,\phi_{0}\right)^{-1}\frac{\partial e\left(W,\phi_{0}\right)}{\partial\phi'}\right)U\left(\mu_{W},\lambda_{*},\beta_{0}\right)\right].
\end{align*}
Iterated expectations gives
\begin{equation}
M_{61}=\mathbb{E}\left[\left[\begin{array}{ccc}
c_{1}\left(W,\phi_{0}\right) & \cdots & c_{L}\left(W,\phi_{0}\right)\end{array}\right]\mathbb{C}\left(\left.X,U_{*}\right|W\right)+d\left(W,\phi_{0}\right)\mathbb{E}\left[\left.U_{*}\right|W\right]\right]\label{eq: M_61_expression_1}
\end{equation}
with $c_{l}\left(W,\phi\right)=-v\left(W,\phi\right)^{-1}\frac{\partial v\left(W,\phi\right)}{\partial\phi_{l}}v\left(W,\phi\right)^{-1}$
for $l=1,\ldots,L$ and $d\left(W,\phi\right)=v\left(W,\phi\right)^{-1}\frac{\partial e\left(W,\phi\right)}{\partial\phi'}$. 

It is useful to develop an alternative expression for (\ref{eq: M_61_expression_1}).
Note that
\[
\mathbb{E}\left[m_{5}(Z_{i},\phi_{0},\mu_{W},\lambda_{*},\beta_{0})\right]=\mathbb{E}\left[v\left(W,\phi_{0}\right)^{-1}\left(X-e\left(W,\phi_{0}\right)\right)U\left(\mu_{W},\lambda_{*},\beta_{0}\right)\right]=0,
\]
is mean zero. A GIME argument, similar to the one used to derive (\ref{eq: GIME})
in the main text, therefore gives
\begin{equation}
\mathbb{E}\left[\frac{\partial}{\partial\phi'}\left\{ v\left(W,\phi_{0}\right)^{-1}\left(X-e\left(W,\phi_{0}\right)\right)\right\} U_{*}\right]=-\mathbb{E}\left[v\left(W,\phi_{0}\right)^{-1}\left(X-e\left(W,\phi_{0}\right)\right)U_{*}\mathbb{S}'\right],\label{eq: GIME_app}
\end{equation}
where we use the fact that $U_{*}=U\left(\mu_{W},\lambda_{*},\beta_{0}\right)$
does not vary with the propensity score parameter, $\phi$. We can
use (\ref{eq: GIME_app}) to write
\[
M_{61}=-\mathbb{E}\left[v\left(W,\phi_{0}\right)^{-1}\left(X-e\left(W,\phi_{0}\right)\right)U_{*}\mathbb{S}'\right].
\]

If both Assumptions \ref{ass: genalize_p_score} and \ref{ass: clp_coefficients}
hold simultaneously, then $U_{*}=U_{0}$ is conditionally mean zero
and uncorrelated with $X$ (i.e., $\mathbb{E}\left[\left.U_{0}\right|W\right]=\mathbb{E}\left[\left.XU_{0}\right|W\right]=0$).
In this case $M_{61}=0$ (see Equation (\ref{eq: M_61_expression_1})
above). If Assumption \ref{ass: clp_coefficients} does not hold,
then $M_{61}$ may be non-zero.

Turning to the second block of columns in $M$, we have that
\[
\underset{L\times J}{M_{12}}=0,\thinspace\thinspace\underset{J\times J}{M_{22}}=-I_{J},
\]
and also that
\begin{align*}
\underset{1\times J}{M_{32}}= & \mathbb{E}\left[\left\{ \gamma_{*}+\left(I_{J}\otimes X\right)^{^{\prime}}\delta_{*}\right\} '\right]\\
\underset{J\times J}{M_{42}}= & \mathbb{E}\left[-I_{J}U_{*}+\left(W-\mu_{W}\right)\left\{ \gamma_{*}+\left(I_{J}\otimes X\right)^{^{\prime}}\delta_{*}\right\} '\right]\\
= & \mathbb{E}\left[\left(W-\mu_{W}\right)\left\{ \left(I_{J}\otimes X\right)^{^{\prime}}\delta_{*}\right\} '\right]\\
\underset{JK\times J}{M_{52}}= & \mathbb{E}\left[-\left(I_{J}\otimes X\right)U_{*}+\left(\left(W-\mu_{W}\right)\otimes X\right)\left(\gamma_{*}+\left(I_{J}\otimes X\right)^{^{\prime}}\delta_{*}\right)^{^{\prime}}\right]\\
\overset{A.\ref{ass: clp_coefficients}}{=} & \mathbb{E}\left[\left(\left(W-\mu_{W}\right)\otimes X\right)\left(\gamma_{0}+\left(I_{J}\otimes X\right)^{^{\prime}}\delta_{0}\right)^{^{\prime}}\right].
\end{align*}
Note that the second equality after $M_{42}$ does not require Assumption
\ref{ass: clp_coefficients} to hold. Even if $\lambda_{*}$ does
not correctly parameterize the CLP coefficients, it remains true that
$U\left(\mu_{W},\lambda_{*},\beta_{0}\right)$ is mean zero. However
$U\left(\mu_{W},\lambda_{*},\beta_{0}\right)$ may covary with $X$
when Assumption \ref{ass: clp_coefficients} fails. Therefore the
second equality after $M_{52}$ \emph{does} require Assumption \ref{ass: clp_coefficients}
to hold. The forms of $M_{32}$, $M_{42}$ and $M_{52}$ determine
the effect of sampling uncertainty about the value of $\mu_{W}$ on
sampling uncertainty about the value of $\beta_{0}$. 

Finally we get
\begin{align*}
\underset{K\times J}{M_{62}} & =\mathbb{E}\left[v\left(W;\phi_{0}\right)^{-1}\left(X-e\left(W;\phi_{0}\right)\right)\left\{ \gamma_{*}+\left(I_{J}\otimes X\right)^{^{\prime}}\delta_{*}\right\} '\right]
\end{align*}

Turning to the third block of columns in $M$, we have that
\[
\underset{L\times1}{M_{13}}=0,\thinspace\thinspace\underset{J\times1}{M_{23}}=0,\thinspace\thinspace\underset{1\times1}{M_{33}}=-1,
\]
and also that
\[
\underset{J\times1}{M_{43}}=-\mathbb{E}\left[\left(W-\mu_{W}\right)\right]=0,\thinspace\thinspace\underset{JK\times1}{M_{53}}=-\mathbb{E}\left[\left(\left(W-\mu_{W}\right)\otimes X\right)\right]
\]
and
\[
\underset{K\times1}{M_{63}}=-\mathbb{E}\left[v\left(W;\phi_{0}\right)^{-1}\left(X-e\left(W;\phi_{0}\right)\right)\right]=0.
\]

Turning to the fourth block of columns in $M$, we have that
\[
\underset{L\times J}{M_{14}}=0,\thinspace\thinspace\underset{J\times J}{M_{24}}=0,\thinspace\thinspace\underset{1\times J}{M_{34}}=-\mathbb{E}\left[\left(W-\mu_{W}\right)^{^{\prime}}\right]=0,
\]
and also that
\[
\underset{J\times J}{M_{44}}=-\mathbb{E}\left[\left(W-\mu_{W}\right)\left(W-\mu_{W}\right)'\right]=-\Sigma_{WW},\thinspace\thinspace\underset{JK\times J}{M_{54}}=-\mathbb{E}\left[\left(\left(W-\mu_{W}\right)\otimes X\right)\left(W-\mu_{W}\right)^{^{\prime}}\right],
\]
and finally that
\[
\underset{K\times J}{M_{64}}=-\mathbb{E}\left[v\left(W;\phi_{0}\right)^{-1}\left(X-e\left(W;\phi_{0}\right)\right)\left(W-\mu_{W}\right)^{^{\prime}}\right]=0.
\]

Turning to the fifth block of columns in $M$, we have that
\[
\underset{L\times JK}{M_{15}}=0,\underset{J\times JK}{M_{25}}=0,
\]
and also that
\[
\underset{1\times JK}{M_{35}}=-\mathbb{E}\left[\left(\left(W-\mu_{W}\right)\otimes X\right)^{^{\prime}}\right],\thinspace\thinspace\underset{J\times JK}{M_{45}}=-\mathbb{E}\left[\left(W-\mu_{W}\right)\left(\left(W-\mu_{W}\right)\otimes X\right)^{^{\prime}}\right],
\]
and also that
\[
\underset{JK\times JK}{M_{55}}=-\mathbb{E}\left[\left(\left(W-\mu_{W}\right)\otimes X\right)\left(\left(W-\mu_{W}\right)\otimes X\right)^{^{\prime}}\right],
\]
and finally that
\[
\underset{K\times JK}{M_{65}}=-\mathbb{E}\left[v\left(W;\phi_{0}\right)^{-1}\left(X-e\left(W;\phi_{0}\right)\right)\left(\left(W-\mu_{W}\right)\otimes X\right)^{^{\prime}}\right]=0.
\]

Turning to the sixth, and final, block of columns in $M$, we have
that
\[
\underset{L\times K}{M_{16}}=0,\thinspace\thinspace\underset{J\times K}{M_{26}}=0,
\]
and also that
\[
\underset{1\times K}{M_{36}}=-\mathbb{E}\left[X^{^{\prime}}\right],\thinspace\thinspace\underset{J\times K}{M_{46}}=-\mathbb{E}\left[\left(W-\mu_{W}\right)X^{^{\prime}}\right],\thinspace\thinspace\underset{JK\times K}{M_{56}}=-\mathbb{E}\left[\left(\left(W-\mu_{W}\right)\otimes X\right)X^{^{\prime}}\right],
\]
and finally that
\[
\underset{K\times K}{M_{66}}=-\mathbb{E}\left[v\left(W;\phi_{0}\right)^{-1}\left(X-e\left(W;\phi_{0}\right)\right)X^{^{\prime}}\right]=-I_{K}.
\]

Marking out the zero and identity terms we have that
\[
M=\left(\begin{array}{cccccc}
M_{11} & 0 & 0 & 0 & 0 & 0\\
0 & -I_{J} & 0 & 0 & 0 & 0\\
0 & M_{32} & -1 & 0 & M_{35} & M_{36}\\
0 & M_{42} & 0 & M_{44} & M_{45} & M_{46}\\
0 & M_{52} & M_{53} & M_{54} & M_{55} & M_{56}\\
M_{61} & M_{62} & 0 & 0 & 0 & -I_{K}
\end{array}\right).
\]
Using the above we have (\ref{eq: B1_matrix}), as defined in the
appendix to the main paper, equal to
\[
\underset{\left(1+J+JK+K\right)\times J}{B_{1}}=\left(\begin{array}{c}
M_{32}\\
M_{42}\\
M_{52}
\end{array}\right)=\mathbb{E}\left[\begin{array}{c}
\left\{ \gamma_{*}+\left(I_{J}\otimes X\right)^{^{\prime}}\delta_{*}\right\} '\\
\left(W-\mu_{W}\right)\left\{ \left(I_{J}\otimes X\right)^{^{\prime}}\delta_{*}\right\} '\\
-\left(I_{J}\otimes X\right)U_{*}+\left(\left(W-\mu_{W}\right)\otimes X\right)\left(\gamma_{*}+\left(I_{J}\otimes X\right)^{^{\prime}}\delta_{*}\right)^{^{\prime}}
\end{array}\right].
\]

\subsection*{Additional detailed calculations}

\subsubsection*{Equation (\ref{eq: inverse_Jacobian})}

To derive the lower-left-hand block of (\ref{eq: inverse_Jacobian})
in the Appendix to the main paper we multiply out:
\begin{align*}
-\left(\begin{array}{cc}
\mathbb{E}\left[RR'\right]^{-1} & -\mathbb{E}\left[RR'\right]^{-1}\mathbb{E}\left[RX'\right]\\
0 & I_{K}
\end{array}\right)\left(\begin{array}{cc}
0 & -B_{1}\\
-M_{61} & -B_{2}
\end{array}\right)\left(\begin{array}{cc}
-\mathbb{H}\left(\phi_{0}\right)^{-1} & 0\\
0 & I_{J}
\end{array}\right) & =\\
-\left(\begin{array}{cc}
\mathbb{E}\left[RR'\right]^{-1}\mathbb{E}\left[RX'\right]M_{61} & -\mathbb{E}\left[RR'\right]^{-1}\left(B_{1}-B_{2}\mathbb{E}\left[RX'\right]\right)\\
-M_{61} & -B_{2}
\end{array}\right)\left(\begin{array}{cc}
-\mathbb{H}\left(\phi_{0}\right)^{-1} & 0\\
0 & I_{J}
\end{array}\right) & =\\
-\left(\begin{array}{cc}
-\mathbb{E}\left[RR'\right]^{-1}\mathbb{E}\left[RX'\right]M_{61}\mathbb{H}\left(\phi_{0}\right)^{-1} & -\mathbb{E}\left[RR'\right]^{-1}\left(B_{1}-B_{2}\mathbb{E}\left[RX'\right]\right)\\
M_{61}\mathbb{H}\left(\phi_{0}\right)^{-1} & -B_{2}
\end{array}\right).
\end{align*}

\subsubsection*{Equation (\ref{eq: lambda_BLP})}

To derive (\ref{eq: lambda_BLP}) in the Appendix start by observing
that moment (\ref{eq: m3_p1}) in the main text implies the following
characterization of $\lambda_{0}$ and $\beta_{0}:$
\begin{align*}
\left[\begin{array}{c}
\mathbb{E}\left[RY\right]\\
\mathbb{E}\left[v_{0}\left(W\right)^{-1}\left(X-e_{0}\left(W\right)\right)Y\right]
\end{array}\right] & =\left[\begin{array}{cc}
\mathbb{E}\left[RR'\right] & \mathbb{E}\left[RX'\right]\\
0 & I_{K}
\end{array}\right]\left(\begin{array}{c}
\lambda_{0}\\
\beta_{0}
\end{array}\right).
\end{align*}
After some calculation we get that
\begin{align*}
\lambda_{0} & =\mathbb{E}\left[RR'\right]^{-1}\mathbb{E}\left[R\left(Y-X'\beta_{0}\right)\right]\\
 & =\mathbb{E}\left[RR'\right]^{-1}\mathbb{E}\left[R\left(a_{0}\left(W\right)+X'\left(b_{0}\left(W\right)-\beta_{0}\right)\right)+U_{0}\right]\\
 & =\mathbb{E}\left[RR'\right]^{-1}\mathbb{E}\left[R\left(a_{0}\left(W\right)+X'\left(b_{0}\left(W\right)-\beta_{0}\right)\right)\right],
\end{align*}
where the last line uses Lemma \ref{lem: Wooldridge} of the main
text.

\subsubsection*{Equation (\ref{eq: rewritten_key_variance})}

To derive equation (\ref{eq: rewritten_key_variance}) in the Appendix
expand the variance of $b_{0}\left(W\right)+\Delta_{*}^{\left(J\right)}\left(k^{\left(J\right)}\left(W\right)-\mu^{\left(J\right)}\right)-\beta_{0}$
as follows:
\begin{align}
\mathbb{V}\left(b_{0}\left(W\right)+\Delta_{*}^{\left(J\right)}\left(k^{\left(J\right)}\left(W\right)-\mu^{\left(J\right)}\right)-\beta_{0}\right)= & \Delta_{*}^{\left(J\right)}\mathbb{V}\left(k^{\left(J\right)}\left(W\right)\right)\left(\Delta_{*}^{\left(J\right)}\right)'\nonumber \\
 & +\mathbb{V}\left(b_{0}\left(W\right)\right)\nonumber \\
 & +2\mathbb{E}\left[\left(b_{0}\left(W\right)-\beta_{0}\right)\left\{ \Delta_{*}^{\left(J\right)}\left(k^{\left(J\right)}\left(W\right)-\mu^{\left(J\right)}\right)\right\} '\right]\nonumber \\
= & \Delta_{*}^{\left(J\right)}\mathbb{V}\left(k^{\left(J\right)}\left(W\right)\right)\left(\Delta_{*}^{\left(J\right)}\right)'\nonumber \\
 & +\mathbb{V}\left(b_{0}\left(W\right)\right)\nonumber \\
 & +2\mathbb{E}\left[\left(b_{0}\left(W\right)-\beta_{0}\right)\right.\nonumber \\
 & \times\left.\left\{ b_{0}\left(W\right)+\Delta_{*}^{\left(J\right)}\left(k^{\left(J\right)}\left(W\right)-\mu^{\left(J\right)}\right)-\beta_{0}\right.\right]\nonumber \\
 & \times\left.\left.-\left(b_{0}\left(W\right)-\beta_{0}\right)\right\} '\right]\nonumber \\
= & \Delta_{*}^{\left(J\right)}\mathbb{V}\left(k^{\left(J\right)}\left(W\right)\right)\left(\Delta_{*}^{\left(J\right)}\right)'-\mathbb{V}\left(b_{0}\left(W\right)\right)\nonumber \\
 & -2\mathbb{E}\left[\left(b_{0}\left(W\right)-\beta_{0}\right)\right.\nonumber \\
 & \left.\times\left\{ b_{0}\left(W\right)+\Delta_{*}^{\left(J\right)}\left(k^{\left(J\right)}\left(W\right)-\mu^{\left(J\right)}\right)-\beta_{0}\right\} '\right].\label{eq: clever_variance_manipulation}
\end{align}

\end{document}